\documentclass[aps,10pt,twocolumn,showpacs,pra,citeautoscript,amsmath,amssymb,floatfix,nofootinbib,superscriptaddress,longbibliography]{revtex4-1}

% Maths
\usepackage{amsmath}
\usepackage{amssymb}
\usepackage{bbm}
\usepackage{amsthm, thm-restate}

% Figures
\usepackage{graphicx}
\newcommand{\caphead}[1]{{\bf #1}}
\usepackage{xcolor}

% References & citations
\usepackage{natbib}
\usepackage{hyperref}
\usepackage[capitalise]{cleveref}
\crefname{figure}{Figure}{figures}

\usepackage{algpseudocode}
\usepackage{enumerate}

\newcounter{lemmaN}

\newtheorem{lemma}[lemmaN]{Lemma}

\newcounter{lemmaA}

\newtheorem{applemma}[lemmaA]{Lemma}

\newcounter{colN}

\newtheorem{corollary}[colN]{Corollary}
\newtheorem{definition}{Definition}
\newtheorem{theorem}{Theorem}
\newtheorem{example}{Example}

\newtheorem*{assumption}{Assumption}

\newcommand{\ket}[1]{|#1\rangle}

\newcommand{\ketbra}[2]{|#1\rangle\langle #2|}
\DeclareMathOperator{\tr}{tr}
\newcommand{\id}{\mathbbm{1}}

\newcommand{\app}[2]{\left( #1, #2 \right)}
\newcommand{\inn}[2]{\langle #1, #2 \rangle}

\newcommand{\reals}{\mathbb{R}}
\newcommand{\comp}{\mathbb{C}}
\newcommand{\quat}{\mathbb{H}}

\newcommand{\trans}[0]{^{\rm T}}
\newcommand{\ct}[0]{^{\dag}}
\newcommand{\conj}[0]{^{*}}

\newcommand{\Rsap}[1]{\mathbf{H}_{#1}^{+}(\mathbb{R})}
\newcommand{\Csa}[1]{\mathbf{H}_{#1}(\mathbb{C})}
\newcommand{\Csap}[1]{\mathbf{H}_{#1}^{+}(\mathbb{C})}
\newcommand{\Hsa}[1]{\mathbf{H}_{#1}(\mathbb{H})}
\newcommand{\Hsap}[1]{\mathbf{H}_{#1}^{+}(\mathbb{H})}

\newcommand{\jord}[0]{\bullet}

\newcommand{\sjord}[0]{\bullet}

\newcommand{\rehead}[0]{\mathrm{Re}}
\newcommand{\imhead}[0]{\mathrm{Im}}
\newcommand{\jmhead}[0]{\mathrm{Jm}}
\newcommand{\kmhead}[0]{\mathrm{Km}}
\newcommand{\re}[1]{\rehead\!\left(#1\right)}
\newcommand{\im}[1]{\imhead\!\left(#1\right)}
\newcommand{\jm}[1]{\jmhead\!\left(#1\right)}
\newcommand{\km}[1]{\kmhead\!\left(#1\right)}

\newcommand{\inlineheading}[1]{\textbf{{#1}}}

\makeatletter
\newcommand*{\balancecolsandclearpage}{%
  \close@column@grid
  \clearpage
  \twocolumngrid
}
\makeatother

% Andy's revtex TOC hack:
\makeatletter
\def\tocdepth@fullmunge{%
\let\l@section@saved\l@section
\let\l@section\@gobble@tw@
\let\l@subsection@saved\l@subsection
\let\l@subsection\@gobble@tw@
}%
\def\tocdepth@fullrestore{%
\let\l@section\l@section@saved
\let\l@subsection\l@subsection@saved
}%
\newcommand{\hidetoc}[0]{\addtocontents{toc}{\string\tocdepth@fullmunge}}
\newcommand{\restoretoc}[0]{\addtocontents{toc}{\string\tocdepth@fullrestore}}
\makeatother

\newcommand{\IQOQI}{Institute for Quantum Optics and Quantum Information,\\ Austrian Academy of Sciences, Boltzmanngasse 3, A-1090 Vienna, Austria}
\newcommand{\Peri}{Perimeter Institute for Theoretical Physics, 31 Caroline Street North, Waterloo, ON N2L 2Y5, Canada}
\newcommand{\VCQ}{Vienna Center for Quantum Science and Technology (VCQ), Faculty of Physics,\\ University of Vienna, Boltzmanngasse 5, A-1090 Vienna, Austria}

\begin{document}

\title{Testing quantum theory by generalizing noncontextuality}

\author{Markus P.\ M\"uller}
\affiliation{\IQOQI{}}
\affiliation{\VCQ{}}
\affiliation{\Peri{}}
\author{Andrew J.\ P.\ Garner}
\affiliation{\IQOQI{}}

\date{October 2, 2023}

\begin{abstract}
It is a fundamental prediction of quantum theory that states of physical systems are described by complex vectors or density operators on a Hilbert space. 
However, many experiments admit effective descriptions in terms of other state spaces, such as classical probability distributions or quantum systems with superselection rules. 
Which kind of effective statistics would allow us to experimentally falsify quantum theory as a fundamental description of nature? 
Here, we address this question by introducing a methodological principle that generalizes Spekkens' notion of noncontextuality: processes that are statistically indistinguishable in an effective theory should not require explanation by multiple distinguishable processes in a more fundamental theory. 
We formulate this principle in terms of linear embeddings and simulations of one probabilistic theory by another, show how this concept subsumes standard notions of contextuality, and prove a multitude of fundamental results on the exact and approximate embedding of theories (in particular into quantum theory). 
We prove that only Jordan-algebraic state spaces are exactly embeddable into quantum theory, and show how results on Bell inequalities can be used for the certification of non-approximate embeddability.
From this, we propose an experimental test of quantum theory by probing single physical systems without assuming access to a tomographically complete set of procedures or calibration of the devices, arguably avoiding a significant loophole of earlier approaches. 
\end{abstract}

\maketitle

\tableofcontents

\section{Introduction}
Quantum theory is the basis of almost all of modern physics. 
Its prescription to represent states of physical systems by complex vectors and measurements by projectors (or, more generally, density operators and elements of a POVM, i.e.\ of a positive operator-valued measure) is the template that underlies our very notion of the construction of a physical theory, including quantum mechanics, the standard model of particle physics, and quantum theories of gravity.

However, in many regimes, these fundamental quantum structures cannot be seen directly. 
Many physical phenomena have instead \emph{effective descriptions} with representations of states and measurements that differ \emph{a priori} from the quantum postulates. 
For example, decoherence processes~\cite{Zurek03} lead to quantum systems with noise or superselection rules, including systems exactly described by classical probability theory. 
Similarly, representations of symmetries generate systems that can be described by quantum theory over other fields than the complex numbers, such as the real numbers or the quaternions~\cite{Baez12,BarnumGW20}.

Indeed, we know that quantum theory is just one special case among a plethora of \emph{general probabilistic theories}~\cite{Hardy01,Barrett07,Mueller21,Plavala21}, with differing physical and information-processing properties. 
This raises an important foundational line of questioning: assuming the fundamental validity of standard complex quantum theory, which effective probabilistic theories can we expect to find in nature?
For which theories of this kind is it \emph{possible} to arise from quantum theory, and for which ones is it \emph{plausible}?

These questions are not only interesting in their own right, but have crucial implications for \emph{experimental tests} of quantum theory. 
As is done for other pillars of modern physics such as the equivalence principle~\cite{Will14}, we should arguably also submit quantum theory to precise scrutiny.
This turns out to be difficult.
For example, potential beyond-quantum phenomena like 
 higher-order interference~\cite{Sorkin94,SinhaCJLW10,UdudecBE10,BarnumMU14,LeeS17}
  or quaternionic amplitudes~\cite{Peres79,Adler95,Graydon11,Procopio17} can always be faked by introducing additional degrees of freedom~\cite{SawantSSSS14,BarnumGW20,Namdar21} -- in some sense, it is quantum theory's enormous flexibility that makes it so hard to falsify.

In this article, we give a partial classification of (and many general results on) the probabilistic theories that can plausibly arise -- even approximately -- as effective descriptions from quantum theory, and use this to propose an experimental test of quantum theory that arguably avoids the aforementioned problems to a large extent. 
We do so by generalizing and amending concepts that have recently been studied in a different context and with different goals in quantum information theory.

First, to falsify quantum theory as a fundamental description of nature, it makes sense to draw inspiration from the complementary goal of falsifying a fundamental \emph{classical} description of nature -- that is, of proving the \emph{nonclassicality} of quantum phenomena~\cite{Bell64,ClauserHSH69,BrunnerCPSW14,KochenS1967,CabelloEG96,Cabello08,AraujoQBCC13,CabelloSW14,AcinFLS15,CataniLSS21}.
One well-motivated notion of nonclassicality of exceptional conceptual clarity is \emph{generalized contextuality} as proposed by \citet{Spekkens05}. 
In contrast to earlier proposals like Kochen-Specker's, this notion applies to a wide range of physical phenomena~\cite{SpekkensBKTP08,Pusey14,PuseyL15,SchmidS18,SelbySWSKS21} and can be subjected to direct experimental test~\cite{Kunjwal16,MazurekPKRS16,KrishnaSW17}.  
It can be interpreted as a version of Leibniz' principle of the ``identity of the indiscernibles''~\cite{Spekkens19,MazurekPKRS16}: procedures that are indistinguishable at the operational level should be identical at the ontological level.

We generalize the notion of generalized contextuality even further, and reinterpret it as a methodological principle that relates different levels in a hierarchy of physical description: \emph{statistically indistinguishable processes in an effective theory should not require multiple distinct processes that explain it in a more fundamental theory}. 
We formulate this condition mathematically in terms of \emph{approximate embeddings}, show that it reduces to Spekkens' notion in the case of embeddings into classical theory, and prove a multitude of related results, including a complete characterization of the ``unrestricted'' probabilistic theories (defined below) that have an exact embedding into quantum theory.

Subsequently, we use the resulting insights to propose an experimental test of quantum theory that builds on the recently proposed scheme of \emph{theory--agnostic tomography}~\cite{GraboweckyPCSR21,MazurekPRS21}.
In a nutshell, the idea is to probe a given physical system with as many preparations and measurements as possible, and to fit a probabilistic theory to the data. 
Our proposal is that the results of such experiments should typically admit an approximate linear embedding into quantum theory, even if the experiment is not tomographically complete~\cite{GraboweckyPCSR21,MazurekPKRS16,MazurekPRS21} in the usual sense. 
Demonstrating the opposite would therefore challenge quantum theory, and we give results that allow for the robust certification of this generalization of contextuality.

\textbf{Our article is organized as follows.}
We begin in \cref{sec:Maths} with a brief review of the operational setting of prepare--and--measure experiments,
 and how these can be expressed in the framework of generalized probabilistic theories.
In \cref{sec:Simulation}, we formalize the relationship between effective and fundamental theories as a \emph{simulation},
 introduce a generalization of noncontextuality as a property such simulations may have (univalence),
 and explore the mathematical structure that such univalent simulations must exhibit.
Then, in \cref{sec:StandardContextuality}, we show that our notion of univalence reduces to Spekkens' notion of generalized noncontextuality in the special case where the fundamental theory is classical.
Next, in \cref{sec:QuantumExact}, we categorize the unrestricted probabilistic theories that have an exact univalent quantum simulation.
Then, in \cref{sec:QuantumApprox}, by adapting results from nonlocality studies, we formulate a bound on the accuracy of univalently simulating effective theories even approximately by quantum theory.
Finally, in \cref{sec:Experiment}, we outline how this bound can be applied to experimentally test quantum theory.

\section{Framework}
\label{sec:Maths}
\subsection{Prepare-and-measure experiments}
\label{sec:PrepAndMeasure}

%%
%% FIGURE: Prepare and measure
%%
\vskip -1.5em
\begin{figure}[tbh]
\begin{centering}
\includegraphics[width=0.37\textwidth]{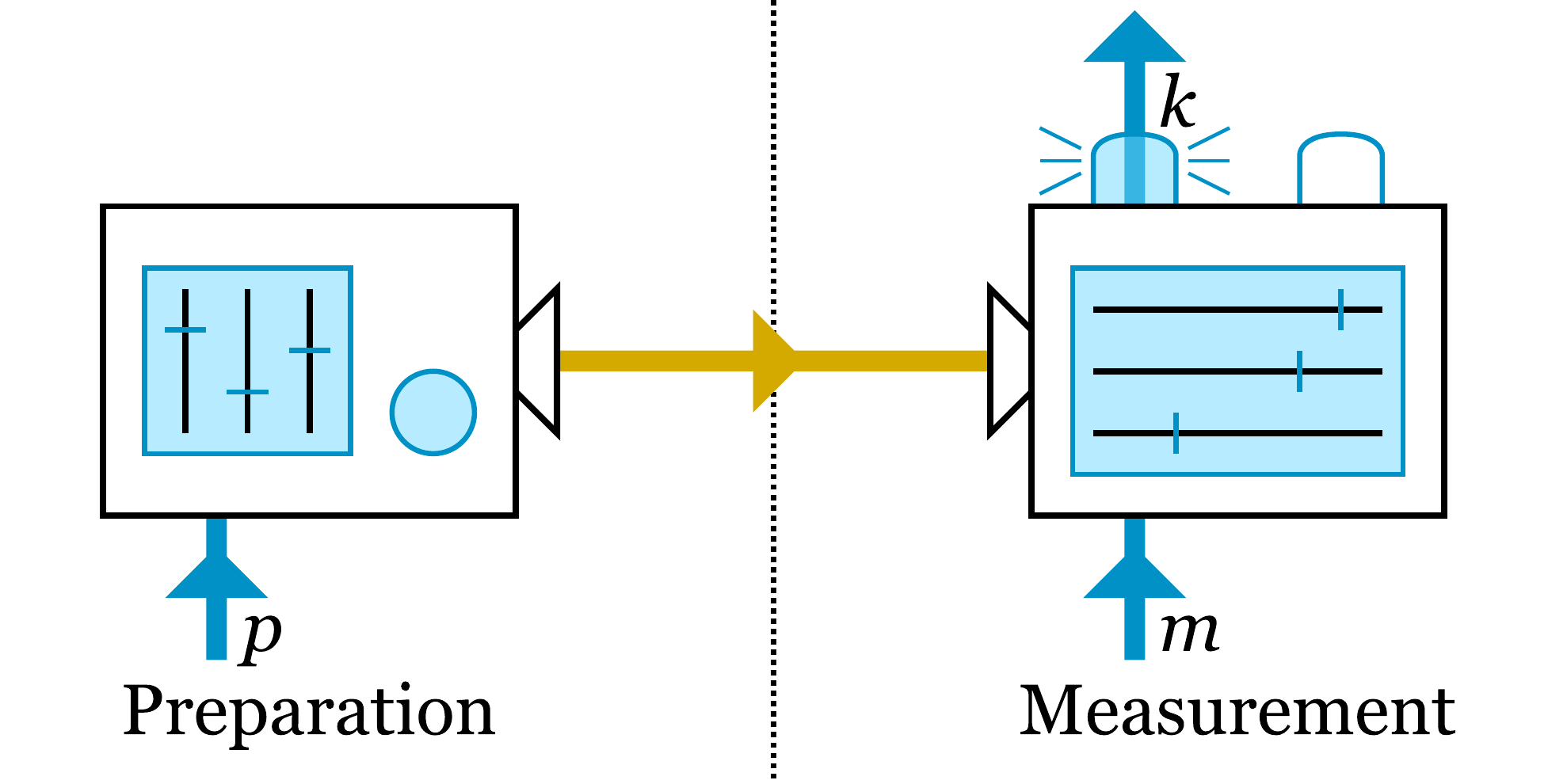}
\caption{%
\label{fig:prep_measure}
\caphead{Prepare-and-measure scheme.}
In this article, we consider a class of experimental scenarios consisting of a preparation procedure ($p$) followed by a measurement procedure ($m$) resulting, perhaps probabilistically, in an outcome ($k$).
A statistical description of such experiments is given by the conditional probability distributions $P(k|p,m)$.
}
\end{centering}
\end{figure}

In this article, we consider a class of simple two-step experimental scenarios that consist of a preparation procedure followed by a measurement procedure.
An \emph{operational theory}~\cite{Spekkens05} for such a scenario assigns some probability $P(k|p,m)$ to each conceivable outcome $k$ of measurement $m$ taken on the output of preparation procedure $p$ (see \cref{fig:prep_measure}).

What could constitute an experimental procedure can be quite broad. 
For example, in a cold atom experiment, an experimentalist has the choice of how long to apply a microwave pulse for,
 as well as a choice about the colour of socks to wear while performing the experiment,
 but the former is much more likely to be of interest than the latter.
As such, if we wish to describe the behaviour of a particular experimental scenario as opposed to that of the entire universe,
  we need to select a set of preparation and measurement procedures that we are interested in.
This in turn defines the domain of an operational theory that is supposed to describe the scenario. 
  
In general, a scenario (and its operational theory) may comprise a large collection of experimental procedures, together with ways to transform and compose them in succession or in parallel; and possibly only certain types of measurements may be applied to certain types of preparations. 
Here, however, we restrict ourselves to prepare-and-measure scenarios where all measurements $m$ can follow all preparations $p$. 
In this case, the selection of the set of procedures of interest can be interpreted as defining the \emph{physical system}: intuitively, this is the thing (e.g.\ particle) that is produced by preparations and probed by measurements.

Suppose that by defining the boundaries of our experimental scenario, we chose to ignore the experimentalist's color of socks, but included the paint color of the preparation boxes in our description. This attribute may still turn out to be irrelevant for the experimental results. 
A systematic approach to eliminate irrelevant details of this kind is to group together all procedures that result in the same statistical behaviour.
In this two-step setting, this amounts to identifying an equivalence class of preparation procedures that, for each measurement procedure, assign the same probabilities to each outcome.
That is, $p\sim p'$ if $P(k|p,m) = P(k|p',m)$ for all $m$ and $k$.
These equivalence classes are called the operational {\em states} of the experimental scenario.
Likewise, we can also divide pairs of measurement procedures and outcomes into equivalence classes that give the same probability for every preparation procedure: $(k,m) \sim (k', m')$ if $P(a|p,m) = P(k'|p,m')$ for all $p$.
The equivalence classes of these measurement outcomes are known as the {\em effects}.

Not every detail of preparation ``ignored'' by grouping into states is trivial -- especially when we consider experiments on nonclassical systems.
For instance, an experimentalist could prepare with equal probability a horizontally or vertically polarized photon (procedure $p_1$),
 and dispatch this photon into some measurement apparatus (whose ultimate behaviour, let us assume, depends {\em only} on the photon it receives).
Alternatively, the experimentalist could prepare with equal probability a right-circularly polarized photon, or a left-circularly polarized one (procedure $p_2$)
 and likewise dispatch this photon into the measurement device.
Although this measurement device might have different behaviour for all four polarized inputs, according to quantum theory, its average statistical behaviour after procedure $p_1$ will be identical to that of procedure $p_2$,
 and hence $p_1$ and $p_2$ describe the preparation of the same state. 
All properties of a preparation (or measurement) procedure that are not captured by its equivalence class (e.g.,\ for $p_1$ and $p_2$, the polarizations that are mixed) define the \emph{context} of the procedure~\cite{Spekkens05}.

\subsection{Generalized probabilistic theories}
\label{SubsecGPT}
Once we identify such states and effects of a physical system,
 we can employ the mathematical framework of {\em generalized probabilistic theories} (GPTs) (see e.g.~\cite{Hardy01,Barrett07} or the introductions~\cite{Mueller21,Plavala21}) to describe the system.
Here, the effects (equivalence classes of measurement outcomes)
 are represented by elements of a real vector space $A$  (see \cref{fig:GPTCones}). 
Except for infinite-dimensional quantum systems that we treat in detail in \cref{sec:Infinite}, we will only consider the case where $A$ is finite-dimensional.
We denote by $E_A$ the set of all effects of the system.

Since we permit probabilistic post-processing (i.e.\ statistical mixing), $E_A$ is a convex set. 
Furthermore, we may always assume that the linear span of $E_A$ is $A$, because otherwise, we could restrict ourselves to the subspace ${\rm span}(E_A)$. 
The set $E_A$ generates the \emph{cone~\cite{AliprantisT07} of effects} $A_+:=\{\lambda e\,\,|\,\, \lambda\geq 0, e\in E_A\}$.
Within $E_A$, there is a unique {\em unit effect} $u_A$, corresponding to the affirmative answer to the question ``is the system there?'',
 with the special property that for all $x\in E_A$, also $u_A - x \in E_A$. 
We assume that $E_A$ is topologically closed and bounded, i.e.\ compact; for the physical motivation of these assumption, see Ref.~\cite{Plavala21} (indeed, the GPTs determined via theory-agnostic tomography~\cite{MazurekPRS21,GraboweckyPCSR21}, our main cases of interest, will have these properties by construction.)
An ($n$-outcome) measurement is represented by a collection of effects $\{a_i\}_{i=1\ldots n}$ satisfying $\sum_{i=1}^n a_i = u_A$.

A {\em state}, as represented in a GPT, is a functional $\omega$ that describes the possible statistics arising from an equivalence class of preparation procedures. 
That is, a state assigns probabilities to every effect $a\in E_A$ by $(\omega,a):=\omega(a)$.
As probabilistic mixtures of effects must yield the correctly--weighted outcome probability~\cite{Barrett07}, $\omega$ is a linear functional, i.e.,\ an element of the dual vector space $A\conj$. 
The normalized states $\Omega_A$ are then the states in the GPT satisfying $\app{\omega}{u_A} = 1$. 
For similar reasons as for $E_A$, we will assume that $\Omega_A$ is convex and compact.
This set generates the \emph{cone of states} $A\conj_+:=\{\lambda\omega\,\,|\,\,\lambda\geq 0,\omega\in\Omega_A\}$. 

A GPT is said to be \emph{unrestricted}, or to satisfy the \emph{no-restriction hypothesis}~\cite{ChiribellaAP10,JanottaLal13}, if every functional that gives nonnegative probabilities for all effects is a valid state, and vice versa (that is, $\Omega_A=\{\omega\in A^*\,\,|\,\, 0\leq (\omega,e)\leq 1\forall e\in E_A\}$. 
In this case, $A_+$ and $A\conj_+$ are \emph{dual cones} to each other~\cite{AliprantisT07}.
In terms of preparations and measurements, this means that there exists at least one preparation procedure for every mathematically conceivable state acting on the system's effects, and vice versa.
Generally, however, a GPT may be restricted, and in this case $A_+$ and $A\conj_+$ are only subsets of each others' duals.

We thus specify a GPT by the triplet $(A,\Omega_A,E_A)$ of its vector space $A$, its set of normalized states $\Omega_A$, and its set of effects $E_A$. 
The other objects of interest ($u_A$, $A_+$, $A\conj_+$, etc.)\ are uniquely implied by these.

%%
%% FIGURE: GPT Outline
%%
\begin{figure}[tbh]
\begin{centering}
\includegraphics[width=0.425\textwidth]{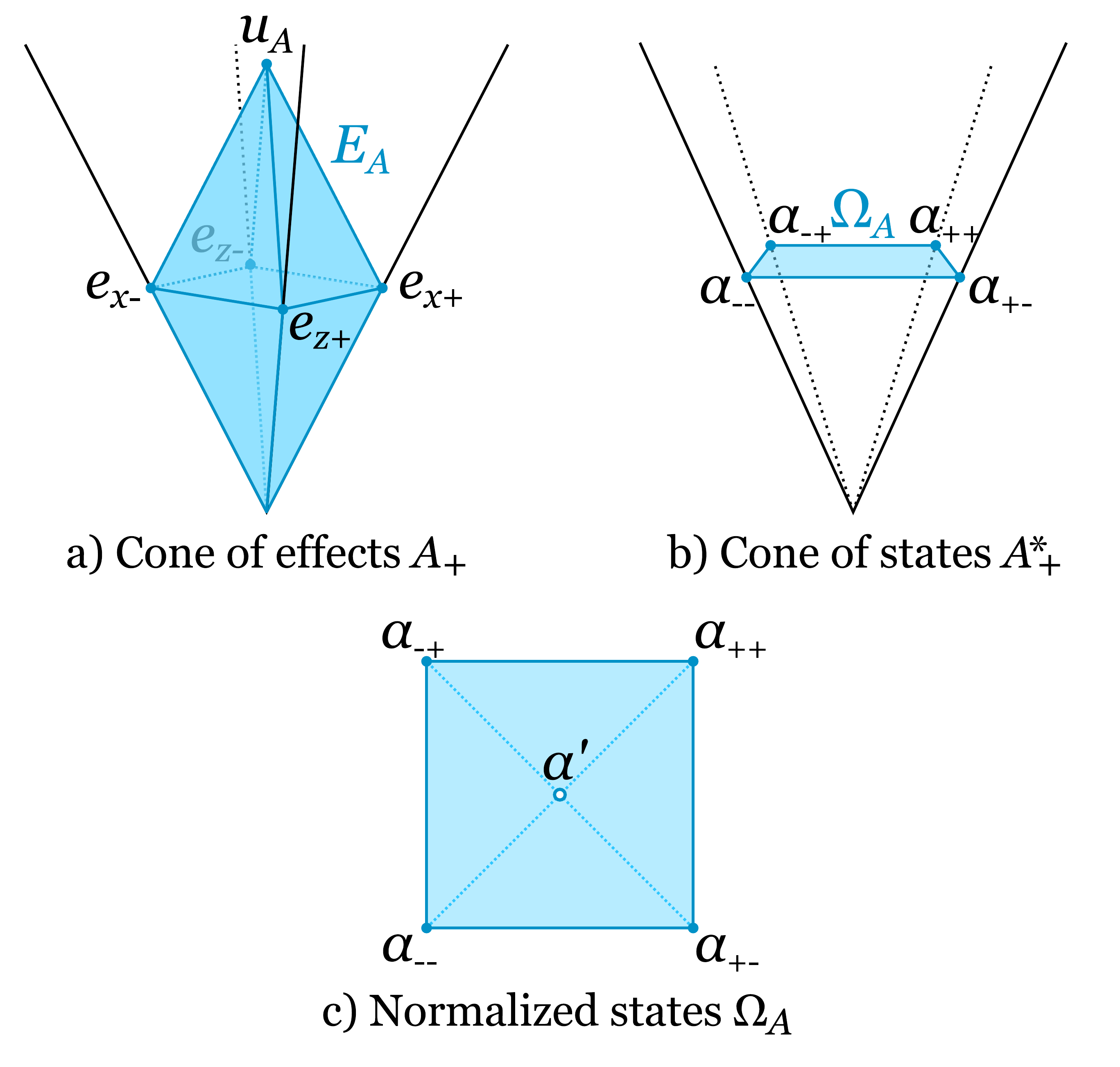}
\caption{%
\label{fig:GPTCones}
\caphead{Effects and states in generalized probabilistic theories (GPTs).}
A GPT $(A, \Omega_A, E_A)$ is specified by a vector space $A$, a set of normalized states $\Omega_A$ and a set of effects $E_A$.
Such a GPT defines two cones: the first (a), the cone of effects $A_+$ is generated by $E_A$;
 the second (b), the cone of states $A\conj_+$ is generated by $\Omega_A$ (c).
 The example drawn is a {\em gbit}, explained in the text.
}
\end{centering}
\end{figure}

\inlineheading{Example: classical probability theory.}
An $n$-outcome classical GPT is defined as $\mathcal{C}_n := \left(C, \Omega_C, E_C\right)$, 
 where $C :=  \reals^n$, 
 $\Omega_C$ is the set of $n$-outcome normalized probability distributions,
 and $E_C$ is the set of nonnegative vectors where no single element is greater than $1$.
Here, the effect cone $C_+$ is the set of all vectors with nonnegative elements, and the unit effect $u_{C} = \left(1, \ldots, 1\right)\trans$. 
Via the usual `dot' inner product, the classical state vector space $C\conj$ is also $\reals^n$, such that $\Omega_C$ may be written as the set of $n$-dimensional probability vectors with nonnegative elements that sum to $1$.
As every state that yields a valid probability is included in the classical state space, classical probability theory is unrestricted.

\inlineheading{Example: quantum theory.}
As a GPT, an $n$-level quantum system is given by $\mathcal{Q}_n := \left(B, \Omega_B, E_B\right)$
 where $B:=\Csa{n}$ is the vector space of complex Hermitian $n\times n$ matrices, 
 $\Omega_B$ the set of unit trace $n\times n$ density matrices, 
 and $E_B$ the set of all possible $n\times n$ POVM elements~\cite{NielsenC00}.
Quantum theory is also an unrestricted GPT.
Here, $B_+ = \Csap{n}$ is the cone of positive--semidefinite Hermitian matrices, and $u_B=\id_n$ is the $n\times n$ identity matrix. 
We identify $B$ and $B\conj$ via the Hilbert--Schmidt inner product $\inn{x}{y} := {\rm tr}(x y)$, such that $B\conj_+ = B_+ =\Csap{n}$. 
That is, not only are the two cones duals of each other (as in every unrestricted GPT), but there is an inner product such that they become exactly equal. This property of $\mathcal{Q}_n$ is known as \emph{strong self-duality}~\cite{Wilce11a,MuellerU12}, and is shared by the classical GPTs $\mathcal{C}_n$ (among some others).

Restricting the sets of states and effects to $n_i\times n_i$ block matrices in some basis gives us \textbf{quantum theory with superselection rules}~\cite{BartlettRS07}, defined on the linear space $\bigoplus_i \Csa{n_i}$.
This encompasses classical probability theory, since $\reals^n\simeq \bigoplus_{i=1}^n \Csa{1}$,  i.e.\ probability vectors can be identified with diagonal density matrices.

Although quantum theory is unrestricted, it has some interesting restricted subsets: for example the {\em stabilizer subset of quantum theory} (see, e.g., \cite{Gottesman97}).
For the stabilizer subset of $2$-level quantum theory, the allowed states correspond to the eigenvectors of the Pauli matrices and mixtures thereof, such that $\Omega_S$ is an octahedron.
The full dual cone of effects yielding positive probabilities includes measurements beyond those possible within standard quantum theory (corresponding to a cube) -- but stablizer quantum theory is defined to not admit all such effects, instead {\em restricting} the effect space to correspond also only to outcomes of Pauli matrix-measurements and mixtures thereof.

\inlineheading{Example: gbits.}
A foil theory to quantum mechanics are the {\em gbits}~\cite{Barrett07}.
These arise as the marginals of maximally nonlocal Popescu--Rohrlich boxes~\cite{PopescuR94}.
Let us specifically consider ``$2$ measurements $2$ outcomes'' gbits,
 and write this GPT as $\left(\reals^3, E_A, \Omega_A\right)$ where we define $E_A$ and $\Omega_A$ in the following paragraphs.

The effect space is generated by two choices of measurement ($X$ and $Z$), each that results in one of two outcomes ($+1$ or $-1$).
The extremal effects associated with each outcome can be represented by the vectors
\begin{align}
\label{eq:gbit_effects}
e_{x+}\!=\begin{small}\!\left(\begin{array}{c} 1 \\ 1  \\ 0 \end{array}\right)\end{small},
e_{x-}\!=\begin{small}\!\left(\begin{array}{c} 1 \\ -1  \\ 0 \end{array}\right)\end{small},
e_{z+}\!=\begin{small}\!\left(\begin{array}{c} 1 \\ 0  \\ 1 \end{array}\right)\end{small},
e_{z-}\!=\begin{small}\!\left(\begin{array}{c} 1 \\ 0  \\ -1 \end{array}\right)\end{small},
\end{align}
such that the unit effect is $u_A = e_{x+} + e_{x-} = e_{z+} + e_{z-} = \left( 2, 0, 0\right)\trans$.
The set of effects $E_A$ is the convex hull of all of the above, $u_A$, and $\left(0,0,0\right)\trans$
 (see \cref{fig:GPTCones}a).

We then take the unrestricted state space dual to the above effects.
This corresponds to admitting arbitrary probability distributions for both measurements.
Let us write each state as a functional (on effects) in the form $e\mapsto \frac{1}{2} \alpha\cdot e$ for $\alpha\in\reals^3$,
where $\cdot$ is the Euclidean dot product. 
Hence, we identify each state with such a vector $\alpha$,
 so that the extremal normalized gbit states are:
\begin{align}
& \alpha_{++} = \begin{small}\left(\begin{array}{c} 1 \\ 1  \\ 1 \end{array}\right)\end{small}, \;
\alpha_{+-} = \begin{small}\left(\begin{array}{c} 1 \\ 1  \\ -1 \end{array}\right)\end{small}, \; \nonumber \\
& \qquad \alpha_{-+} = \begin{small}\left(\begin{array}{c} 1 \\ -1  \\ 1 \end{array}\right)\end{small}, \;
\alpha_{--} = \begin{small}\left(\begin{array}{c} 1 \\ -1  \\ -1 \end{array}\right)\end{small}.
\label{eq:gbit_states}
\end{align}
The state space $\Omega_A$ is the convex hull of these states, and is a square (see \cref{fig:GPTCones}b and c).

\subsection{Jordan algebras}
A certain class of GPTs will turn out to be of fundamental importance in the following: those corresponding to special Euclidean Jordan algebras~\cite{JordanvNW34,AlfsenS03,FarautK94,McCrimmon04}.
Here we only give a minimal introduction; a more thorough treatment in the quantum foundations context can be found, for example, in~\citet{BarnumGW20}.
 
A \emph{Jordan algebra} $\mathcal{J}$ is a real vector space with a commutative bilinear product $\jord$ that satisfies the Jordan identity $(x^2\jord y)\jord x = x^2\jord  (y\jord x)$, where $x^2:=x\jord x$. 
In this paper, we will only consider finite-dimensional Jordan algebras. 
A Jordan algebra is \emph{Euclidean} if it is equipped with an inner product $\langle \cdot,\cdot\rangle$ such that $\langle x\jord y,z\rangle=\langle y,x\jord z\rangle$ for all $x,y,z\in\mathcal{J}$. 
The simplest example of a Euclidean Jordan algebra is the algebra $\mathbf{H}_n(\mathbb{C})$ of complex Hermitian $n\times n$ matrices with the Jordan product
\begin{align}
   x\jord y:=\frac{1}{2}\left(xy+yx\right),
\end{align}
equipped with the Hilbert-Schmidt inner product $\langle x,y\rangle=\tr(xy)$.
A Jordan algebra is \emph{special} if it is embeddable into $\mathbf{H}_n(\mathbb{C})$ for some $n$.

To every Euclidean Jordan algebra $\mathcal{J}$, we can associate a GPT (which we also call $\mathcal{J}$) in the following way. 
Define the effect cone as the cone of squares, $\mathcal{J}_+=\{x^2\,\,|\,\,x\in\mathcal{J}\}$, and the unit effect $u_{\mathcal{J}}$ as the unique unit element of the algebra.
The cone $\mathcal{J}_+$ is self-dual under the inner product that makes $\mathcal{J}$ Euclidean, i.e.\ $\{x\in\mathcal{J}\,\,|\,\, \langle x,y\rangle\geq 0\;\forall\, y\in\mathcal{J}_+\}=\mathcal{J}_+$. 
If we thus use this inner product to identify $\mathcal{J}$ and $\mathcal{J}^*$, we can define the state cone to be equal to $\mathcal{J}_+$, and this gives us a strongly self-dual unrestricted GPT $(\mathcal{J},\Omega_{\mathcal{J}},E_{\mathcal{J}})$, where $\Omega_{\mathcal{J}}$ is the set of normalized elements of $\mathcal{J}_+$, and $E_{\mathcal{J}}$ is the set of elements $e$ of $\mathcal{J}_+$ such that $\langle \omega,e\rangle\in [0,1]$ for all $\omega\in\Omega_{\mathcal{J}}$.

The Euclidean Jordan algebras have appeared in various places in the foundations of quantum mechanics, including in early work of Jordan, von Neumann and Wigner~\cite{JordanvNW34} who also gave a complete classification: 
 these are the algebras of Hermitian matrices over the real numbers $\mathbb{R}$, the complex numbers $\mathbb{C}$, or the quaternions $\mathbb{H}$; so-called spin factors; the exceptional Jordan algebra $\mathbf{H}_3(\mathbb{O})$ of self-adjoint $3\times 3$ octonionic matrices; and direct sums of those. 
The exceptional Jordan algebra $\mathbf{H}_3(\mathbb{O})$ is not special, and it cannot appear as a summand in any special Jordan algebra.

The corresponding GPTs appear in many different axiomatizations of quantum theory (e.g.~\cite{Wilce11,Niestegge12,BarnumMU14,Wilce19})
 since they share many important properties with standard complex quantum theory. 
For example, if a GPT admits a strong symmetry property and a spectral decomposition of states, then it is either classical or corresponds to a simple Euclidean Jordan algebra~\cite{BarnumMU14,BarnumH19}.
We will give a more detailed description of the GPTs of special Euclidean Jordan algebras in \cref{sec:embed_examples}, where they will surprisingly show up as the {\em only} unrestricted GPTs that admit univalent simulation by standard complex quantum theory.

\section{Simulations, embeddings, and contextuality}
\label{sec:Simulation}
We are often confronted with effective theories $\mathcal{A}$ that arise from more fundamental theories $\mathcal{B}$ via some approximation or coarse-graining. 
For example, thermodynamics ($\mathcal{A}$) is expected to emerge from classical mechanics ($\mathcal{B}$) under some sort of coarse-graining; 
 or low-energy effective quantum field theory ($\mathcal{A}$) is expected to provide an approximation to some hitherto unknown high-energy theory $\mathcal{B}$ that perhaps includes a version of quantum gravity.

In this article, we are concerned with the special case that both $\mathcal{A}$ and $\mathcal{B}$ are probabilistic theories. 
The question of whether a GPT $\mathcal{A}$ can supervene on some other, potentially more fundamental, GPT $\mathcal{B}$ arises in several physically interesting contexts:
\begin{itemize}
\item The case of effective classical probability theory (CPT) $\mathcal{A}$ and fundamental quantum theory (QT) $\mathcal{B}$ addresses the question of \emph{emergence of classicality} via, for example, decoherence processes.
\item Reversing the order, i.e.\ asking for QT $\mathcal{A}$ to effectively arise from the more fundamental CPT $\mathcal{B}$, corresponds to the study of \emph{hidden-variable models} (ontological models) of QT and their properties (such as nonlocality or contextuality).
\item If $\mathcal{A}$ is an arbitrary GPT and $\mathcal{B}$ is QT, then we arrive at the main question of this paper: 
 which GPTs can we expect to find as effective theories (for example experimentally) if we assume that QT is fundamentally correct? 
 Can we find criteria that \emph{falsify} QT as a fundamental theory, given only experimental data on the effective theory?
\end{itemize}
In all these cases, there is a \emph{fundamental} level of physical description, corresponding to a GPT $\mathcal{B}$, and an \emph{effective} level described by another GPT $\mathcal{A}$.
Whenever we think that we are preparing a state $\omega_A$ (resp.\ measuring an effect $e_A$) in the effective theory, we will in fact prepare some state $\omega_B$ (resp.\ measure some effect $e_B$) in the fundamental theory, such that the probabilities $(\omega_A,e_A)$ are reproduced via $(\omega_B,e_B)$.

In more detail, consider the (``effective'') operational theory (recall Subsection~\ref{sec:PrepAndMeasure}) that describes all preparation and measurement procedures of an effective physical system available to us in some experimental scenario.
As explained in Subsection~\ref{SubsecGPT}, this gives rise to a description in terms of an (effective) GPT $\mathcal{A}$. 
However, here we imagine a scenario where there is also a more fundamental description in terms of some GPT $\mathcal{B}$.

Suppose we have two different preparation procedures, $p$ and $p'$, that prepare the same effective state $\omega_A$. 
In the effective operational theory, these procedures are thus equivalent, $p\sim p'$. 
However, $p$ and $p'$ may lead to the preparation of \emph{different} states of the fundamental GPT, $\omega_B\neq \omega'_B$. 
More generally, consider all preparation procedures $p$ that prepare the effective state $\omega_A$ (mathematically, all elements of the equivalence class $\omega_A$), and collect all the actually fundamentally prepared states $\omega_B$ into a set which we call ${\rm Sim}_{\mathcal{B}}(\omega_A)$. 
These are all the fundamental states that ``simulate'' the effective state $\omega_A$.

Similarly, to every effective effect $e_A$, we obtain a set of simulating effects ${\rm Sim}_{\mathcal{B}}(e_A)$ in the fundamental theory.
Together, these will reproduce the effective probabilities:
\begin{align}
   (\omega_A,e_A)=(\omega_B,e_B)\enspace\forall\omega_B\in{\rm Sim}_{\mathcal{B}}(\omega_A),e_B\in {\rm Sim}_{\mathcal{B}}(e_A).
\end{align}
While the ``true'' effective GPT $\mathcal{A}$ governing our experimental scenario must satisfy this equation exactly by construction, we will typically encounter situations where experimentally determined approximations $\mathcal{A}''\approx \mathcal{A}$ (see Subsection~\ref{SubsecApproxExp}) reproduce the probabilities only approximately. 
This motivates the more general formulation below (\cref{eq:match_stats} in \cref{def:epssim}).

What can we say about the set ${\rm Sim}_{\mathcal{B}}(\omega_A)$ of states that simulate a given state $\omega_A$? 
We would like to have a definition of simulation that is as broad as possible, and make no further assumptions except for those that follow from the very definition of an operational theory.
One crucial property of an operational theory is that it is in principle possible to have \emph{fluctuating preparations}, i.e.\ to use external randomness for the selection of a preparation procedure. 
Concretely, consider two preparation procedures $p$ and $p'$, and denote the effective resp.\ fundamental states that they prepare by $\omega_A$ and $\omega'_A$ resp.\ $\omega_B$ and $\omega'_B$. 
If $0<\lambda<1$ is any real number, then we can consider the preparation procedure that triggers $p$ with probability $\lambda$ and $p'$ with probability $(1-\lambda)$. 
Effectively, this prepares the state $\lambda \omega_A+(1-\lambda)\omega'_A$; fundamentally, it prepares the state $\lambda \omega_B+(1-\lambda)\omega'_B$. 
Thus, the latter is a fundamental state that simulates the former effective state:
\begin{equation}
   \lambda\omega_B+(1-\lambda)\omega'_B\in {\rm Sim}_{\mathcal{B}}(\lambda\omega_A+(1-\lambda)\omega'_A).
   \label{eqFluctuations}
\end{equation}
(We stress, in the absence of further restrictions, there remains the possibility that ${\rm Sim}_{\mathcal{B}}(\lambda\omega_A+(1-\lambda)\omega'_A)$ contains other states that also simulate $\lambda\omega_A+(1-\lambda)\omega'_A$.)

We can repeat this argument for the effects and obtain \cref{eq:Mixing1} in the following definition:
\begin{definition}[Simulation]
\label{def:epssim}
Consider $\mathcal{A}=(A,\Omega_A,E_A)$ (the ``effective GPT'') and $\mathcal{B}=(B,\Omega_B,E_B)$ (the ``fundamental GPT''), and let $\varepsilon\geq 0$. 
An \textbf{$\varepsilon$-simulation} of $\mathcal{A}$ by $\mathcal{B}$ assigns to each $\omega_A\in\Omega_A$ a nonempty set of states ${\rm Sim}_{\mathcal{B}}(\omega_A)\subset\Omega_B$ (``the states of $\mathcal{B}$ that simulate $\omega_A$''), and to every normalized effect $e_A\in E_A$ a nonempty set of effects ${\rm Sim}_{\mathcal{B}}(e_A)\subset E_B$ (``the effects of $\mathcal{B}$ that simulate $e_A$''), such that the following conditions hold:
\begin{itemize}
\item all outcome probabilities are reproduced up to $\varepsilon$: for all $\omega_A\in\Omega_A,e_A\in E_A$, we have
\begin{equation}
   |(\omega_A,e_A)-(\omega_B,e_B)|\leq \varepsilon
      \label{eq:match_stats}
\end{equation}
for all $\omega_B\in {\rm Sim}_{\mathcal{B}}(\omega_A),e_B\in {\rm Sim}_{\mathcal{B}}(e_A)$;
\item mixtures of simulating states (effects) are valid simulations of mixtures of states (effects): for all probabilities $0<\lambda<1$, we have
\begin{align}
\omega_B\in{\rm Sim}_{\mathcal{B}}(\omega_A)\mbox{ and }\omega'_B\in{\rm Sim}_{\mathcal{B}}(\omega'_A)\nonumber\\
\Rightarrow\lambda\omega_B+(1-\lambda)\omega'_B\in{\rm Sim}_{\mathcal{B}}\left(\strut \lambda\omega_A+(1-\lambda)\omega'_A\right),
   \label{eq:Mixing1}
\end{align}
and the analogous condition for effects;
\item the fundamentally impossible effect is a valid simulation of the effectively impossible effect:
\begin{align}
0 \in {\rm Sim}_{\mathcal{B}}(0).
\label{eq:ZeroToZero}
\end{align}
\end{itemize}
An ($\varepsilon\!=\!0$)-simulation is called an \textbf{exact simulation}. 
The simulation is called \textbf{preparation--univalent} if $|{\rm Sim}_{\mathcal{B}}(\omega_A)|=1$ for all $\omega_A\in\Omega_A$, \textbf{measurement--univalent} if $|{\rm Sim}_{\mathcal{B}}(e_A)|=1$ for all $e_A\in E_A$, and \textbf{univalent} if it is both preparation-- and measurement--univalent. 
As the negation of ``univalent'', we use the term ``multivalent''.
\end{definition}

%%
%% FIGURE: Holevo embedding
%%
\begin{figure}[tbh]
\includegraphics[width=0.3\textwidth]{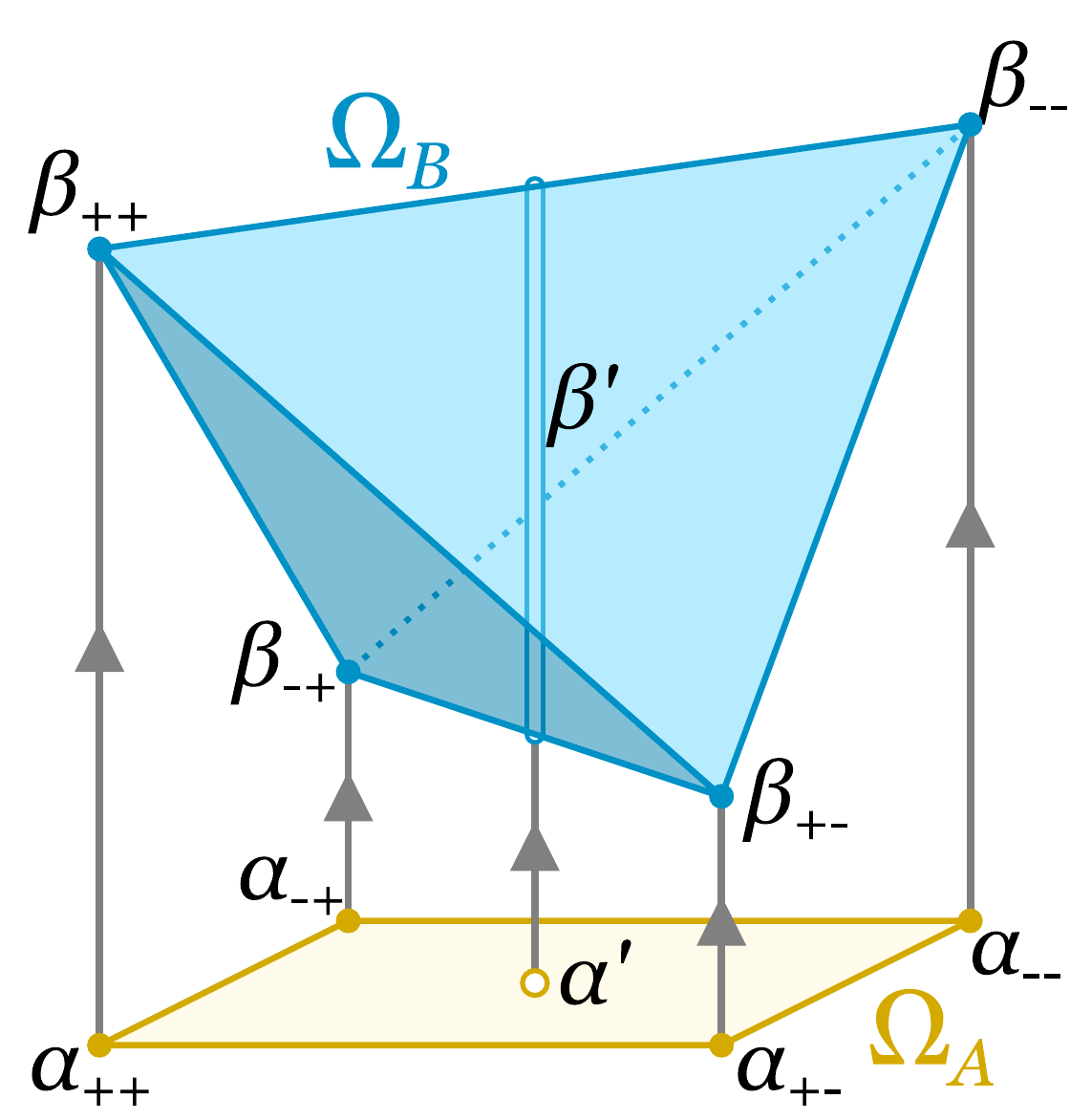}
\caption{
\label{fig:Holevo}
\caphead{Simulating gbit states classically.} 
The ``Holevo projection''~\cite{Holevo82} simulates a gbit (yellow square $\Omega_A$) by a four-level classical system (blue tetrahedron $\Omega_B$).
While the extremal gbit states $\{\alpha_{++}, \alpha_{+-}, \alpha_{-+}, \alpha_{--}\}$ are mapped to unique classical states $\{\beta_i\} := {\rm Sim}_{\mathcal{B}}(\alpha_i)$,
 for general states the mapping is not single-valued.
E.g.\ the state $\alpha' = \frac{1}{2}(\alpha_{++} + \alpha_{--}) = \frac{1}{2}(\alpha_{+-} + \alpha_{-+})$ maps onto a line of infinitely many states $\{\beta'\}={\rm Sim}_{\mathcal{B}}(\alpha')$,
 where the different contexts in which $\alpha'$ is prepared (e.g.\ mixing $\alpha_{++}$ and $\alpha_{--}$, or mixing $\alpha_{+-}$ and $\alpha_{-+}$) necessitate the preparation of different states in $\mathcal{B}$.
This simulation is thus {\em preparation-multivalent}.
}
\end{figure}

As we will elaborate on in Section~\ref{sec:StandardContextuality} below, Spekkens' notion of generalized noncontextuality~\cite{Spekkens05} corresponds, in the discrete case, to univalent simulation in the special case $\mathcal{B}=\mathcal{C}_n$, i.e.\ where the fundamental theory is classical.
Hence, the above definition generalizes this notion to the simulation of any GPT by any other GPT.
Indeed, for much of this paper, we will be concerned with identifying theories that can beunivalently simulated by {\em quantum theory}. 
In \cref{sec:Plausibility}, we will discuss in more detail why univalence is a plausible assumption in our setting, generalizing arguments that have been put forward in favour of generalized noncontextuality.

As an example, let us consider an exact simulation of a {\em gbit} $\mathcal{A}=(\reals^3, E_A, \Omega_A)$,
 with state and effect spaces as defined in Subsection~\ref{SubsecGPT} above (\cref{eq:gbit_effects,eq:gbit_states}).
This theory can be simulated by $4$-level classical probability theory $C:=\mathcal{C}_4$ using an observation of~\citet{Holevo82}.
In particular, consider the map $e_i \mapsto f_i$ that acts to take the extremal gbit effects to
\begin{align}
f_{x+}\!=\!\left(\begin{small}\begin{array}{c} 1 \\ 1 \\ 0 \\ 0 \end{array}\end{small}\right)\!,
f_{x-}\!=\!\left(\begin{small}\begin{array}{c} 0 \\ 0 \\ 1 \\ 1 \end{array}\end{small}\right)\!,
f_{z+}\!=\!\left(\begin{small}\begin{array}{c} 1 \\ 0 \\ 1 \\ 0 \end{array}\end{small}\right)\!,
f_{z-}\!=\!\left(\begin{small}\begin{array}{c} 0 \\ 1 \\ 0 \\ 1 \end{array}\end{small}\right)\!,
\label{eq:embedded_gbit_effects}
\end{align}
and $u_A\mapsto u_B := \left(1, 1, 1, 1\right)\trans$. 
This assignment can be linearly extended to a map $\mathbb{R}^3\to\mathbb{R}^4$ which is positive and unital, and which hence defines for every gbit effect $e_A\in E_A$ a corresponding classical effect $e_C\in E_C$. 
However, the set of these effects $e_C$ is not linearly independent and spans only a subspace of $\mathbb{R}^4$.

Now, how do we simulate the gbit states such that probabilities are preserved?
Going through each of the four corners $\alpha_i$, we can pick a vector $\beta_i$ that satisfies \cref{eq:match_stats} for $\varepsilon=0$, when the effects are defined as in \cref{eq:embedded_gbit_effects}.
For these states, there is only one such choice, namely, to map each corner to an orthogonal classical state:
\begin{align}
\beta_{++} := \left(1, 0, 0, 0\right)\trans, \qquad
& \beta_{+-} := \left(0, 1, 0, 0\right)\trans, \nonumber \\
\beta_{-+} := \left(0, 0, 1, 0\right)\trans, \qquad
& \beta_{--} := \left(0, 0, 0, 1 \right)\trans,
\end{align}
as drawn in \cref{fig:Holevo}.

The assignment $\alpha_i\mapsto \beta_i$ for $i\in\{++,+-,-+,--\}$ cannot be extended to a linear map (the image of a square under any linear map cannot contain more than two linearly independent elements), and the trouble this causes is more obvious when we consider the simulation of the nonextremal gbit states.
Consider the gbit state $\alpha'=\left(1, 0, 0\right)\trans$ that is in the center of the gbit's square state space.
Any state of the form $\beta'=\left(\frac{\lambda}{2}, \frac{1-\lambda}{2}, \frac{1-\lambda}{2}, \frac{\lambda}{2},\right)\trans$ for $\lambda\in[0,1]$ will yield the correct statistics on the effects $e_C$, and so is a suitable candidate for simulation. 
Thus, we obtain a \emph{set} of states $\{\beta'\}={\rm Sim}_{\mathcal{C}}(\alpha')$, in line with \cref{eq:Mixing1}.
Suppose we consider $a'$ as the equal mixture of $\alpha_{++}$ and $\alpha_{--}$: then the set ${\rm Sim}_{\mathcal{B}}(\alpha')={\rm Sim}_{\mathcal{B}}(\frac{1}{2} \alpha_{++} + \frac{1}{2}\alpha_{--})$ must contain the mid-point of the line between $\beta_{++}$ and $\beta_{--}$ (midpoint of top line of tetrahedron in \cref{fig:Holevo}).
Conversely, the very same point $a'$ is also the equal mixture of $\alpha_{+-}$ and $\alpha_{-+}$, mandating that ${\rm Sim}_{\mathcal{B}}(\alpha')={\rm Sim}_{\mathcal{B}}(\frac{1}{2}\alpha_{+-} + \frac{1}{2}\alpha_{-+})$ must contain the mid-point of the line between $\beta_{+-}$ and $\beta_{-+}$ (midpoint of bottom line of tetrahedron in \cref{fig:Holevo}).
Indeed, writing $\alpha' := \frac{\lambda}{2}\left(\alpha_{++} + \alpha_{--}\right) + \frac{1-\lambda}{2}\left(\alpha_{+-}+\alpha_{-+}\right)$ for $\lambda\in[0,1]$,
 we see that this mandates the inclusion of every point in $\beta'$.
Indeed, for this gbit simulation, every nonextremal state maps to an infinite number of classical states. 
That is, this classical simulation of the gbit is preparation-multivalent.

In particular: suppose we prepare with $50\%$ probability $\alpha_{++}$, and $50\%$ probability $\alpha_{--}$, necessitating the preparation of $\beta_{++}$ or $\beta_{--}$ in the fundamental theory $\mathcal{B}$.
If we wanted to prepare the mixture of $50\%$ $\alpha_{+-}$ and $50\%$ $\alpha_{-+}$, we would have to prepare completely different states $\beta_{+-}$ and $\beta_{-+}$ in $\mathcal{B}$.
Thus, to know how $\alpha'$ is simulated in $\mathcal{B}$, knowledge of the operational statistics is insufficient: 
 we would also need to know the {\em context} ($\alpha_{++}$/$\alpha_{--}$ or $\alpha_{+-}$/$\alpha_{-+}$) in which it was prepared.
This arguably introduces an implausible amount of finetuning in the explanation of how the effective theory $\mathcal{A}$ is supposed to arise from the fundamental theory $\mathcal{B}$.

We will say more about this notion of univalence, its motivation, and its relation to generalized noncontextuality in \cref{sec:StandardContextuality,sec:Plausibility} below.

The above Holevo simulation of the gbit is not a rare pathological example: \emph{every} finite-dimensional GPT can be classically simulated in this way.

%%
%% FIGURE: Shrink convex sets
%%
\begin{figure}[tbh]
\includegraphics[width=0.2\textwidth]{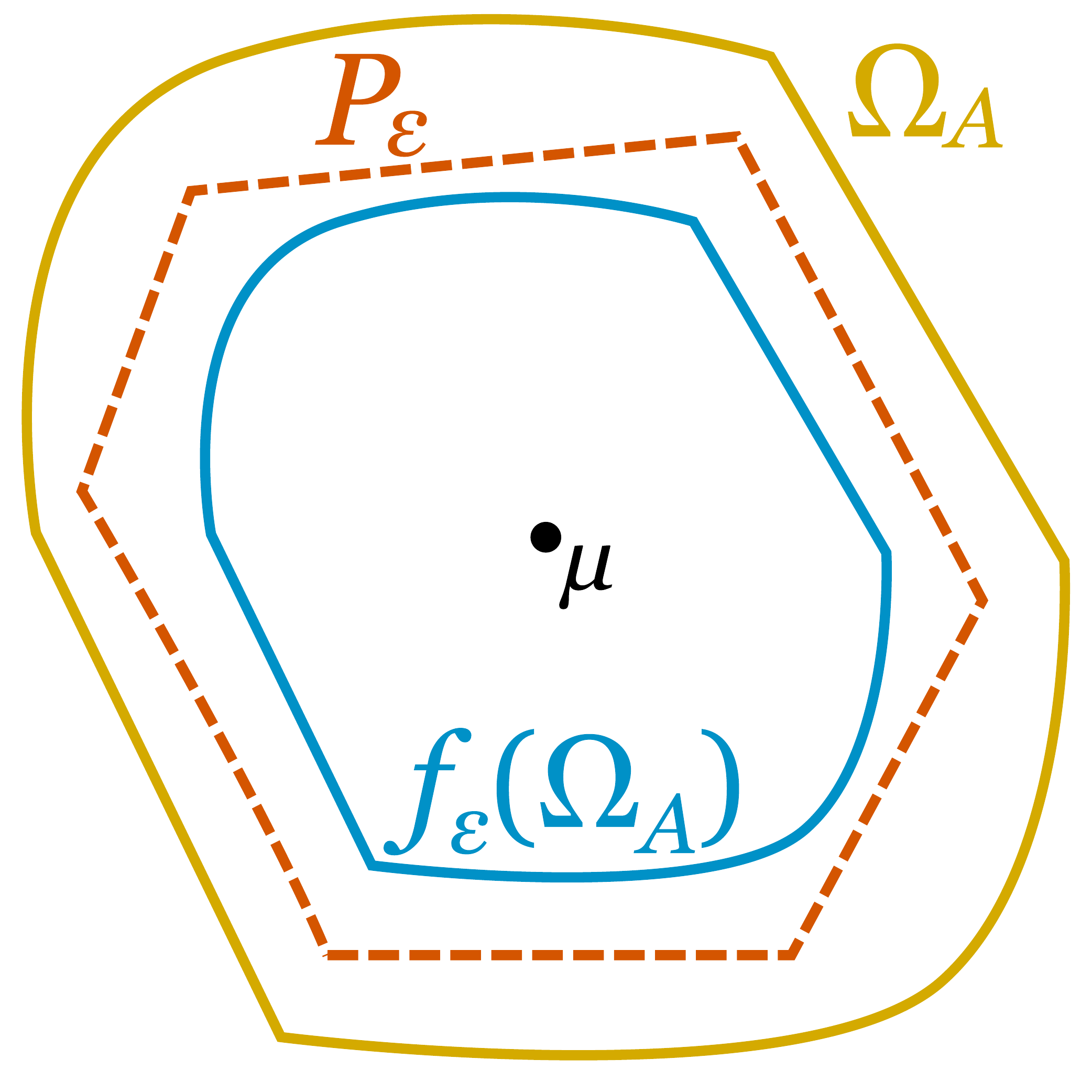}
\caption{
\label{fig:ShrinkConvex}
\caphead{Approximating convex sets by polytopes.} 
When map $f_\varepsilon$ ``shrinks'' convex set $\Omega_A$ towards interior point $\mu$, there always exists a polytope $P_\varepsilon$ between $\Omega_A$ and $f_\varepsilon(\Omega_A)$, the ``shadow'' of a classical GPT. Together with a bound on the number of vertices, this is proven in Appendix~\ref{app:Polytope}.
}
\end{figure}

\begin{lemma}
\label{lem:AllContextual}
Let $\mathcal{A}$ be any GPT. 
Then, for every $\varepsilon>0$, there is a measurement-univalent (but, in general, preparation-multivalent) $\varepsilon$-simulation of $\mathcal{A}$ by $\mathcal{C}_n$ (and thus by $\mathcal{Q}_n$) for some $\displaystyle n\leq\left\lceil \left(\frac c \varepsilon\right)^{(\dim A-2)/2}\right\rceil$, where $c>0$ is a constant that only depends on $\Omega_A$.
\end{lemma}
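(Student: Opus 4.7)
The plan is to approximate the state space by an inner polytope and then apply the Holevo-style classical simulation. Fix an interior point $\mu$ of $\Omega_A$ and, for a parameter $\delta\in(0,1]$ to be chosen later, define the affine shrinking map $f_\delta(\omega):=(1-\delta)\omega+\delta\mu$, which maps $\Omega_A$ into its relative interior. The essential geometric input, illustrated by \cref{fig:ShrinkConvex} and established in Appendix~\ref{app:Polytope}, is a Bronshtein-type sandwich statement: there exists a polytope $P_\delta$ with $f_\delta(\Omega_A)\subset P_\delta\subset\Omega_A$ and at most $\lceil(c/\delta)^{(\dim A-2)/2}\rceil$ vertices, where $c>0$ depends only on $\Omega_A$. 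This quantitative inner polytope approximation in affine dimension $\dim A-1$ is the only nontrivial geometric ingredient and is the step I expect to be the main obstacle; the remainder is a routine verification.

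Given such a $P_\delta$ with vertices $\alpha_1,\ldots,\alpha_n\in\Omega_A$, construct the simulation into $\mathcal{C}_n$ by a single-valued linear assignment on effects together with a multi-valued lift on states. For effects, define $\phi:A\to\mathbb{R}^n$ by $\phi(e)_i:=(\alpha_i,e)$ and set $\mathrm{Sim}_{\mathcal{C}_n}(e_A):=\{\phi(e_A)\}$ for every $e_A\in E_A$. Since $0\leq(\alpha_i,e_A)\leq 1$, we have $\phi(E_A)\subset E_{\mathcal{C}_n}$, and linearity of $\phi$ gives $\phi(u_A)=u_{\mathcal{C}_n}$, $\phi(0)=0$, and automatic consistency with convex mixtures; hence the effect side of \cref{def:epssim} holds and the simulation is measurement-univalent by construction. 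For states, $f_\delta(\omega_A)\in P_\delta$ admits (generally non-unique) convex decompositions $f_\delta(\omega_A)=\sum_i\beta_i\alpha_i$ with $\beta\in\Omega_{\mathcal{C}_n}$, and we declare $\mathrm{Sim}_{\mathcal{C}_n}(\omega_A)$ to consist of all such probability vectors $\beta$.

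To verify the remaining conditions of \cref{def:epssim}, compute
\begin{equation}
(\beta,\phi(e_A))=\sum_i\beta_i(\alpha_i,e_A)=(f_\delta(\omega_A),e_A)
\end{equation}
for any $\beta\in\mathrm{Sim}_{\mathcal{C}_n}(\omega_A)$, so that $|(\omega_A,e_A)-(\beta,\phi(e_A))|=\delta\,|(\omega_A-\mu,e_A)|\leq\delta$, using $(\omega_A,e_A),(\mu,e_A)\in[0,1]$; the choice $\delta=\varepsilon$ then secures \eqref{eq:match_stats}. Linearity of $f_\delta$ in $\omega$ ensures that a convex combination of admissible decompositions of $f_\delta(\omega_A)$ and $f_\delta(\omega_A')$ is itself an admissible decomposition of $f_\delta(\lambda\omega_A+(1-\lambda)\omega_A')$, yielding \eqref{eq:Mixing1} on the state side. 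Finally, classical probability theory embeds into $\mathcal{Q}_n$ via the diagonal identification $\mathbb{R}^n\hookrightarrow\Csa{n}$ recalled in Subsection~\ref{SubsecGPT}, and the composed assignment is still measurement-univalent and still satisfies all conditions of \cref{def:epssim}, giving the quoted bound with the same $n$.
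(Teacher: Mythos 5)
Your proof is correct and follows essentially the same route as the paper's: shrink $\Omega_A$ towards an interior point, sandwich a polytope $P_\varepsilon$ with the vertex bound from Appendix~\ref{app:Polytope}, use the dual map $e\mapsto((\alpha_i,e))_i$ as the single-valued effect assignment, and take all convex decompositions of $f_\varepsilon(\omega_A)$ over the vertices as the (generally multi-valued) state assignment. The error estimate, the verification of \cref{eq:Mixing1,eq:ZeroToZero}, and the final diagonal embedding into $\mathcal{Q}_n$ all match the paper's argument.
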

\begin{proof}
Pick some point $\mu$ in the relative interior~\cite{Webster94} of $\Omega_A$. 
Then the function $f_\varepsilon:\Omega_A\to\Omega_A$
\begin{align}
   f_\varepsilon(\omega_A):=\varepsilon\mu+(1-\varepsilon)\omega_A
\end{align}
``shrinks'' $\Omega_A$ towards $\mu$. 
Geometric intuition (\cref{fig:ShrinkConvex}) suggests that there exists a convex polytope $P_\varepsilon$ with all vertices in $\Omega_A\setminus f_\varepsilon(\Omega_A)$, such that $f_\varepsilon(\Omega_A)\subset P_\varepsilon\subset \Omega_A$. {}Lemma~\ref{LemPolytopeInBetween} in Appendix~\ref{app:Polytope} gives a rigorous proof that this is indeed the case, and gives the claimed bound on the number of vertices $n:=n_\varepsilon$.
Denote the vertices of $P_\varepsilon$ (in arbitrary order) by $v_{\varepsilon,1},v_{\varepsilon,2},\ldots,v_{\varepsilon,n_\varepsilon}$, and define the linear map $L_\varepsilon:\mathbb{R}^{n_\varepsilon}\to A^*$ via $L_\varepsilon e_i:=v_{\varepsilon,i}$ for $i=1,\ldots,n_\varepsilon$, where $e_i$ denotes the $i$th unit vector of $\mathbb{R}^{n_\varepsilon}$.
Consider the classical GPT $\mathcal{C}:=\mathcal{C}_{n_\varepsilon}$, then the polytope $P_\varepsilon$ is the image of the simplex $\Omega_C$ under $L_\varepsilon$~\cite{Ziegler95}.
For $\omega_A\in\Omega_A$ and $e_A\in E_A$, define the sets
\begin{align}
	{\rm Sim}_{\mathcal{C}}(\omega_A)&:=\{\omega_C\in\Omega_C\,\,|\,\, L_\varepsilon\omega_C = f_\varepsilon(\omega_A)\},\\
	{\rm Sim}_{\mathcal{C}}(e_A)&:= \{L_\varepsilon^*(e_A)\},
\end{align}
where $L_\varepsilon^*:A\to \mathbb{R}^{n_\varepsilon}$ is the dual of $L_\varepsilon$.
Since $f_\varepsilon(\omega_A)\in P_\varepsilon$, there must be at least one $\omega_C\in\Omega_C$ which is mapped to this point via $L_\varepsilon$, hence ${\rm Sim}_{\mathcal{C}}(\omega_A)$ is a nonempty subset of classical states. The set ${\rm Sim}_{\mathcal{C}}(e_A)$ contains a single element $e_C$, and it satisfies $(\omega_C,e_C)=(L_\varepsilon\omega_C,e_A)\in [0,1]$ for all $\omega_C\in\Omega_C$ since $L_\varepsilon\omega_C\in P_\varepsilon\subseteq \Omega_A$. Thus ${\rm Sim}_{\mathcal{C}}(e_A)\subset E_C$. 
Furthermore, \cref{eq:Mixing1,eq:ZeroToZero} follow from convex-linearity of $f_\varepsilon$ and linearity of $L_\varepsilon$.

Now, for $\omega_A\in\Omega_A$ and $e_A\in E_A$, pick any $\omega_C\in{\rm Sim}_{\mathcal{C}}(\omega_A)$ and $e_C\in {\rm Sim}_{\mathcal{C}}(e_A)$. Then
\begin{align}
(\omega_C,e_C)&=(L_\varepsilon \omega_C,e_A)=(f_\varepsilon(\omega_A),e_A) \nonumber \\
&= (\omega_A,e_A)+\varepsilon(\mu-\omega_A,e_A).
\end{align}
But $|(\mu-\omega_A,e_A)|\leq 1$, and so $|(\omega_C,e_C)-(\omega_A,e_A)|\leq\varepsilon$. This shows that the above maps define an $\varepsilon$-simulation of $\mathcal{A}$ by $\mathcal{C}_{n_\varepsilon}$.
\end{proof}
Classical simulations cannot in general preserve compositional structure: simulating two GPTs $\mathcal{A}_1$ and $\mathcal{A}_2$ as in Lemma~\ref{lem:AllContextual} will not in general yield a valid classical simulation of a given composite GPT $\mathcal{A}_1\mathcal{A}_2$. This is why Lemma~\ref{lem:AllContextual} is unrelated to, and in particular not in conflict with, Bell's theorem.

It turns out that univalent simulations have a particularly simple structure: they are (linear) \emph{embeddings}. 
We will now first define the notion of embedding and then prove this statement as a lemma.
\begin{definition}[Embedding]
\label{def:EmbeddingMap}
Let $\mathcal{A}=(A,\Omega_A,E_A)$ and $\mathcal{B}=(B,\Omega_B,E_B)$ be GPTs, and let $\varepsilon\geq 0$. A pair of linear maps $\Phi:A\to B$ and $\Psi:A^*\to B^*$ is said to be an {\bf $\varepsilon$-embedding} of $\mathcal{A}$ into $\mathcal{B}$ if
\begin{enumerate}[(i)]
\item $\Phi$ and $\Psi$ are {\em positive} and $\Psi$ is normalization-preserving, i.e.\ $\Phi(E_A)\subseteq E_B$ and $\Psi(\Omega_A)\subseteq \Omega_B$;
 \item $\Phi$ and $\Psi$ {\em preserve outcome probabilities up to $\varepsilon$}; i.e.\ $|(\omega,e)-(\Psi(\omega),\Phi(e))|\leq\varepsilon$ for all $e\in E_A$, $\omega \in \Omega_A$.
\end{enumerate}
 If in addition $\Phi(u_A)=u_B$, then we say that the embedding is {\bf unital}. An ($\varepsilon=0$)-embedding is also called an {\bf exact embedding}.
\end{definition}
This notion of approximate embeddings as pairs of structure-preserving (and thus \textbf{linear}) maps has already been introduced and studied by~\citet[p.\ 150]{Werner82} for the case that $\mathcal{B}$ is a quantum system and $\mathcal{A}$ a possibly infinite-dimensional classical system. 
Here we are concerned with general GPTs and finite-dimensional $\mathcal{A}$.
\begin{lemma}[Univalent simulations are embeddings]
\label{lem:Linear}
Every $\varepsilon$-embedding of $\mathcal{A}$ into $\mathcal{B}$ defines an univalent $\varepsilon$-simulation of $\mathcal{A}$ by $\mathcal{B}$, and vice versa.
\end{lemma}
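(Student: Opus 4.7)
The plan is to prove the equivalence by exhibiting an explicit correspondence in each direction, with the forward implication amounting to routine verification and the converse requiring a cone-extension argument.

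For the ``embedding implies univalent simulation'' direction, given an $\varepsilon$-embedding $(\Phi,\Psi)$, I would simply set ${\rm Sim}_{\mathcal{B}}(\omega_A):=\{\Psi(\omega_A)\}$ and ${\rm Sim}_{\mathcal{B}}(e_A):=\{\Phi(e_A)\}$. These singletons lie in $\Omega_B$ resp.\ $E_B$ by Definition~\ref{def:EmbeddingMap}(i); the mixing condition \cref{eq:Mixing1} and its effect analogue follow from linearity of $\Psi$ and $\Phi$; \cref{eq:ZeroToZero} follows from $\Phi(0)=0$; and the $\varepsilon$-probability match \cref{eq:match_stats} is exactly clause (ii) of the embedding definition. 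Univalence is by construction.

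For the converse, I start from a univalent $\varepsilon$-simulation and write $\psi(\omega_A)$ resp.\ $\phi(e_A)$ for the unique elements of ${\rm Sim}_{\mathcal{B}}(\omega_A)$ resp.\ ${\rm Sim}_{\mathcal{B}}(e_A)$. The univalent form of \cref{eq:Mixing1} reads
\begin{equation}
\psi(\lambda\omega_A+(1-\lambda)\omega'_A)=\lambda\psi(\omega_A)+(1-\lambda)\psi(\omega'_A),
\end{equation}
so $\psi$ is affine on $\Omega_A$, and the effect analogue makes $\phi$ affine on $E_A$. I would then extend $\psi$ to the state cone $A^*_+$ by positive homogeneity: any nonzero $x\in A^*_+$ has the form $\lambda\omega_A$ with $\lambda>0$ and $\omega_A\in\Omega_A$, so I set $\Psi(x):=\lambda\psi(\omega_A)$ together with $\Psi(0):=0$. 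The analogous construction defines $\Phi$ on $A_+$, where \cref{eq:ZeroToZero} combined with univalence forces $\phi(0)=0$, which is exactly what is needed to make the positive-homogeneous extension of $\phi$ consistent at the apex of the cone.

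The key technical point is verifying additivity $\Psi(x+y)=\Psi(x)+\Psi(y)$ on $A^*_+$: writing $x=\lambda_x\omega_x$ and $y=\lambda_y\omega_y$, the sum equals $(\lambda_x+\lambda_y)$ times the convex combination $\tfrac{\lambda_x}{\lambda_x+\lambda_y}\omega_x+\tfrac{\lambda_y}{\lambda_x+\lambda_y}\omega_y\in\Omega_A$, and the desired identity drops out of the affinity of $\psi$; the effect side is identical. Once additivity and positive homogeneity are secured, a standard argument extends $\Psi$ uniquely to $\mathrm{span}(A^*_+)=A^*_+-A^*_+$ (well-definedness reducing again to additivity on the cone) and then arbitrarily, e.g.\ by zero on a complementary subspace, to all of $A^*$; likewise for $\Phi$ on $A$. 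Positivity of $\Phi,\Psi$ and normalization preservation of $\Psi$ are immediate from $\phi(E_A)\subseteq E_B$ and $\psi(\Omega_A)\subseteq\Omega_B$, and \cref{eq:match_stats} directly supplies condition (ii) of Definition~\ref{def:EmbeddingMap}. The only genuine obstacle is this cone-extension bookkeeping; it is conceptually routine but must be executed carefully on both sides to guarantee that the resulting $\Phi,\Psi$ really are linear.
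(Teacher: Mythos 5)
Your proposal is correct, and the forward direction coincides with the paper's proof verbatim (singleton simulation sets from $\Phi,\Psi$). For the converse, however, you take a genuinely different route. The paper picks $d=\dim A$ linearly independent states, defines $\Psi$ as the linear extension of the simulation on that basis, shows agreement with the simulation on their convex hull $C$ via the mixing axiom, and then covers the rest of $\Omega_A$ by a geometric argument: for $\omega_A\notin C$, draw the line through $\omega_A$ and a point in the relative interior of $C$, express a point of $C$ as a proper convex combination of the two, and solve for ${\rm Sim}_{\mathcal{B}}(\omega_A)$. You instead observe directly that univalence turns \cref{eq:Mixing1} into affinity of $\psi$ on all of $\Omega_A$, then extend by positive homogeneity to the cone and by additivity to its span. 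Your route is arguably cleaner in that it avoids choosing a basis and the relative-interior trick, at the cost of the cone-extension bookkeeping you acknowledge. One point you should spell out more explicitly on the effect side: unlike $\Omega_A$, which is a base of the state cone (so the representation $x=\lambda\omega_A$ is unique), $E_A$ is \emph{not} a base of $A_+$ — a nonzero $x\in A_+$ may be written as $\lambda e=\lambda' e'$ with distinct $e,e'\in E_A$ — so well-definedness of the homogeneous extension of $\phi$ requires checking that $\lambda\phi(e)=\lambda'\phi(e')$. This does follow from affinity together with $\phi(0)=0$ (write $e'=\tfrac{\lambda}{\lambda'}e+(1-\tfrac{\lambda}{\lambda'})\cdot 0$ when $\lambda\leq\lambda'$), so it is the same ingredient you invoke "at the apex," but it is needed at every point of the cone, not only there. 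With that check made explicit, your argument is complete; both approaches rely on the standing assumptions that $\Omega_A$ spans $A^*$ and $E_A$ spans $A$.
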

The proof is straightforward and thus deferred to Appendix~\ref{app:EmbeddingProperties}. In the special case of $\mathcal{B}=\mathcal{C}_n$ and $\varepsilon=0$, univalence is equivalent to generalized noncontextuality~\cite{Spekkens05} (more on this in Section~\ref{sec:StandardContextuality}), and the equivalence of this to (linear) embeddings has already been shown in Refs.~\cite{SchmidSWKW21,Plavala21Incompatibility}. Since univalent simulations are a further generalization of this notion, it should not come as a surprise that we obtain a linear notion of embedding as well.

%%
%% FIGURE: Preparation noncontextual simulation
%%
\begin{figure}[tbh]
\begin{centering}
\includegraphics[width=0.45\textwidth]{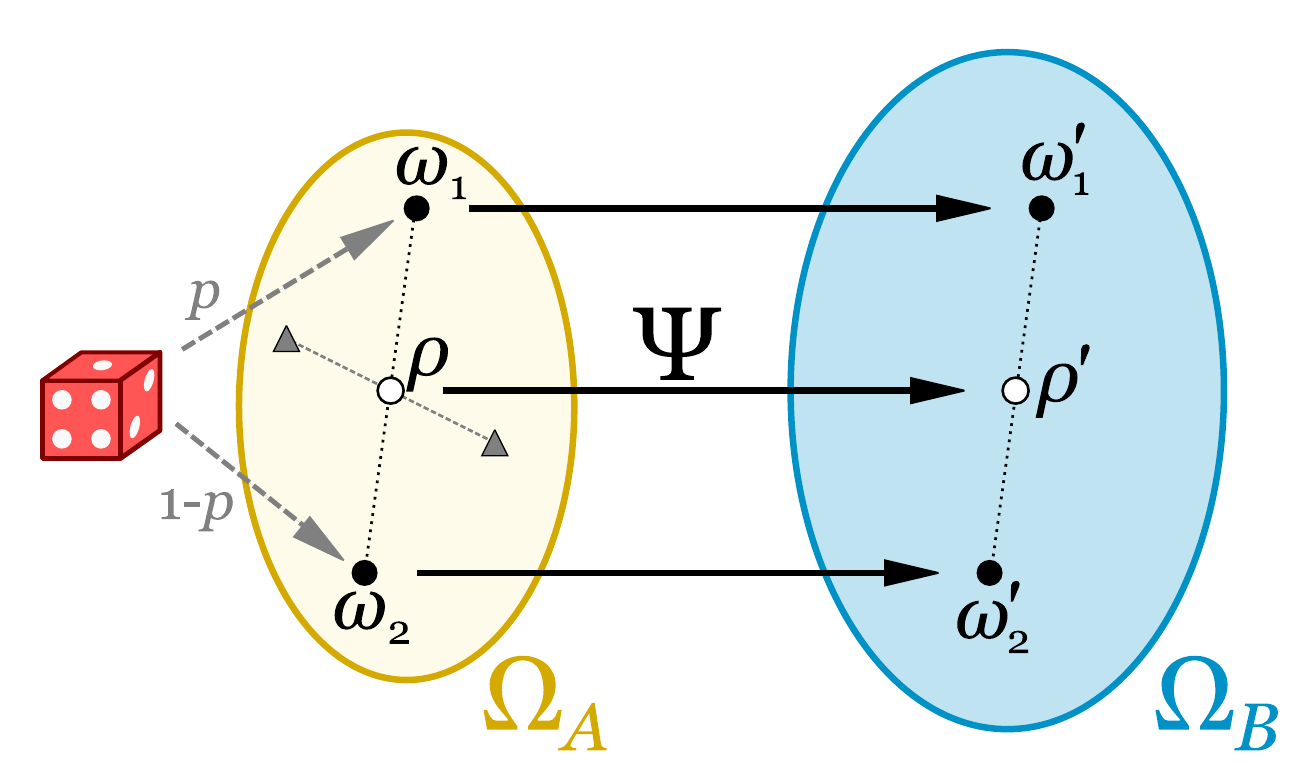}
\caption{%
\label{fig:ProcIndep}
\caphead{\mbox{Why embeddings must be linear.} }
Consider preparing the effective state $\rho:= p \omega_1+\left(1-p\right) \omega_2$ in $\mathcal{A}$ by randomly preparing either $\omega_1$ with probability $p$ or otherwise preparing $\omega_2$. 
For an univalent simulation where $\omega_i$ is uniquely simulated by $\omega'_i=\Psi(\omega_i)$, this will prepare the fundamental state $\omega'_1$ with probability $p$ and $\omega'_2$ with probability $1-p$, i.e.\ it will prepare $\rho'=p\omega'_1+(1-p)\omega'_2$. Hence $\Psi(\rho)=\rho'$, which is an instance of linearity.
}
\end{centering}
\end{figure}
It is important to understand that linearity of embeddings is not an \emph{assumption} of our approach, but a \emph{result}: it follows from the crucial property of operational theories to admit random preparations, as explained around Eq.~(\ref{eqFluctuations}). 
It implies that statistical mixtures of fundamental states are always valid simulations of the corresponding mixtures of effective states. 
This leads to condition~(\ref{eq:Mixing1}) in Definition~\ref{def:epssim}, which becomes linearity in the case of univalent simulations, as also illustrated in Figure~\ref{fig:ProcIndep}.

Furthermore, if we deliberately dropped the condition of linearity from the definition of an embedding (ignoring its central importance in expressing the possibility of random preparations!), then the resulting notion would become trivial: \emph{all} GPTs would then be embeddable into classical probability theory, as demonstrated in Lemma~\ref{lem:AllContextual}, and no experimental tests of contextuality \emph{or} of QT could rely on this nonlinear notion of embeddability.

It is clear that embeddings satisfy a transitivity property: for GPTs $\mathcal{A}$, $\mathcal{B}$ and $\mathcal{C}$, embedding $\mathcal{A}$ into $\mathcal{B}$ and then $\mathcal{B}$ into $\mathcal{C}$ defines an embedding of $\mathcal{A}$ into $\mathcal{C}$:
\begin{lemma}[Transitivity of embeddings]
\label{LemIterate}
Let $(\Phi,\Psi)$ define an $\varepsilon$-embedding of $\mathcal{A}$ into $\mathcal{B}$, and $(\Phi',\Psi')$ define a $\delta$-embedding of $\mathcal{B}$ into $\mathcal{C}$, where $\varepsilon,\delta\geq 0$. Then $(\Phi'\circ\Phi,\Psi'\circ\Psi)$ defines an $(\varepsilon+\delta)$-embedding of $\mathcal{A}$ into $\mathcal{C}$.
\end{lemma}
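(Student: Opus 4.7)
The plan is to verify the two defining conditions of an $(\varepsilon+\delta)$-embedding (Definition~\ref{def:EmbeddingMap}) directly for the composite pair $(\Phi'\circ\Phi,\Psi'\circ\Psi)$, exploiting the fact that composition of linear maps is linear and that the hypotheses on $(\Phi,\Psi)$ and $(\Phi',\Psi')$ feed cleanly into each other.

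First I would handle condition (i), positivity and normalization preservation. Since $\Phi$ is positive and maps $E_A$ into $E_B$, and $\Phi'$ is positive and maps $E_B$ into $E_C$, the composition satisfies $(\Phi'\circ\Phi)(E_A)\subseteq \Phi'(E_B)\subseteq E_C$; linearity of the composition is automatic. The same argument applied to $\Psi$ and $\Psi'$ yields $(\Psi'\circ\Psi)(\Omega_A)\subseteq\Omega_C$, which gives normalization preservation.

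Next I would verify condition (ii) using the triangle inequality. For any $\omega\in\Omega_A$ and $e\in E_A$, write
\begin{align}
&\bigl|(\omega,e)-(\Psi'\Psi(\omega),\Phi'\Phi(e))\bigr| \nonumber \\
&\leq \bigl|(\omega,e)-(\Psi(\omega),\Phi(e))\bigr| \nonumber \\
&\quad + \bigl|(\Psi(\omega),\Phi(e))-(\Psi'\Psi(\omega),\Phi'\Phi(e))\bigr|.
\end{align}
The first term is bounded by $\varepsilon$ by the assumption on $(\Phi,\Psi)$. For the second term I use the fact that $\Psi(\omega)\in\Omega_B$ and $\Phi(e)\in E_B$ (from condition (i) applied to the first embedding), so the $\delta$-embedding property of $(\Phi',\Psi')$ bounds it by $\delta$. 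Summing gives $\varepsilon+\delta$, as required.

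There is no substantive obstacle: transitivity holds essentially because composition of positive linear maps is positive and linear, and because the approximation errors add under the triangle inequality. The only point worth being careful about is to invoke (i) for the first embedding \emph{before} applying the probability-preservation bound of the second, so that $\Psi(\omega)$ and $\Phi(e)$ genuinely lie in the domains where the $\delta$-bound is guaranteed.
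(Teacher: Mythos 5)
Your proof is correct and is precisely the straightforward argument the paper has in mind when it omits the proof: condition (i) composes directly, and condition (ii) follows from the triangle inequality once you note that $\Psi(\omega)\in\Omega_B$ and $\Phi(e)\in E_B$ so the $\delta$-bound applies to them. Nothing further is needed.
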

\noindent The proof is straightforward and thus omitted. 

Univalence thus extends transitively across different levels of description: think of $\mathcal{A}$ as an effective theory, $\mathcal{B}$ as a somewhat more fundamental (``intermediate'') theory, and $\mathcal{C}$ as the most fundamental among the three. 
If $\mathcal{A}$ has an univalent explanation in terms of $\mathcal{B}$, and so does $\mathcal{B}$ in terms of $\mathcal{C}$, then the effective theory $\mathcal{A}$ has an univalen explanation in terms of the fundamental one $\mathcal{C}$ (with the approximation errors adding up).

Univalence does \emph{not} imply that there cannot be distinct elements of the fundamental (simulating) GPT $\mathcal{B}$ that produce identical statistics in the effective (simulated) GPT $\mathcal{A}$. 
These elements can (and typically will) exist, but the point is that they are not necessary for the simulation. 
For example, let $\mathcal{A}=\mathcal{C}_2$ and $\mathcal{B}=\mathcal{C}_4$ be classical $2$- and $4$-level systems, respectively. 
Then the maps
\begin{align}
   \left(\begin{array}{c} p_1 \\ p_2\end{array}\right)\stackrel{\Psi}\mapsto \left(\begin{array}{c} p_1/2 \\ p_1/2 \\ p_2/2 \\ p_2/2\end{array}\right),\quad
   \left(\begin{array}{c}e_1 \\ e_2 \end{array}\right)\stackrel{\Phi}\mapsto \left(\begin{array}{c} e_1 \\ e_1 \\ e_2 \\ e_2 \end{array}\right)
   \label{eq:Coarsegraining}
\end{align}
define an exact embedding of $\mathcal{A}$ into $\mathcal{B}$, i.e.\ an exact univalent simulation of $\mathcal{A}$ by $\mathcal{B}$. 
We can interpret $\mathcal{A}$ as a coarse-graining of $\mathcal{B}$. Instead of simulating the state $(1,0)\trans\in\Omega_A$ via $\Psi((1,0)\trans)=(\frac{1}{2},\frac{1}{2},0,0)\trans$, we \emph{could} also simulate it via, for example, $(\frac{3}{4}, \frac{1}{4},0,0)\trans\in\Omega_B$, since this would reproduce the exact same probabilities on all simulated effects. 
Crucially, however, the simulation does not {\em require} such alternatives (cf.\ the Holevo projection). 
We will say more about coarse-graining processes and contextuality in \cref{sec:Decoherence}.

Embeddings have a rich mathematical structure, which we explore in the following lemmas.
These properties will become relevant in later sections of this article.

\begin{lemma}
\label{lem:Unital}
Let $\mathcal{A}=(A,\Omega_A,E_A)$ and $\mathcal{B}=(B,\Omega_B,E_B)$ be GPTs, and let $\varepsilon\geq 0$. If there exists an $\varepsilon$-embedding of $\mathcal{A}$ into $\mathcal{B}$, then there exists a $(2\varepsilon)$-embedding of $\mathcal{A}$ into $\mathcal{B}$ that is \emph{unital}.
\end{lemma}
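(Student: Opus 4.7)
The obstacle is that $\Phi(u_A)$ is an effect in $E_B$, but need not equal $u_B$. However, the $\varepsilon$-embedding condition applied to $u_A$ forces $\Phi(u_A)$ to be close to $u_B$ on the image of the states, since $|(\omega,u_A)-(\Psi(\omega),\Phi(u_A))|=|1-(\Psi(\omega),\Phi(u_A))|\leq\varepsilon$ for every $\omega\in\Omega_A$. The plan is to ``top up'' $\Phi$ by a rank-one correction so that the unit effect is hit exactly, paying a factor of $2$ in the error bound.

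Concretely, I would set $\xi:=u_B-\Phi(u_A)$. Since $\Phi(E_A)\subseteq E_B$ and $E_B$ is closed under $e\mapsto u_B-e$, we have $\xi\in E_B\subseteq B_+$. Next, fix any $\omega_0\in\Omega_A$ and define the modified map
\begin{equation}
\tilde\Phi(a):=\Phi(a)+(\omega_0,a)\,\xi\qquad(a\in A),
\end{equation}
which is linear, and take $\tilde\Psi:=\Psi$. Unitality is immediate: $\tilde\Phi(u_A)=\Phi(u_A)+(\omega_0,u_A)\xi=\Phi(u_A)+\xi=u_B$.

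For positivity, given any $e\in E_A$ I would write
\begin{equation}
u_B-\tilde\Phi(e)=\Phi(u_A-e)+\bigl(1-(\omega_0,e)\bigr)\xi,
\end{equation}
which is a sum of two elements of $B_+$ (using $\Phi$ positive and $(\omega_0,e)\in[0,1]$); combined with $\tilde\Phi(e)=\Phi(e)+(\omega_0,e)\xi\in B_+$, this gives $\tilde\Phi(e)\in E_B$.

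Finally, for the probability-preservation bound, I would compute
\begin{equation}
(\Psi(\omega),\tilde\Phi(e))=(\Psi(\omega),\Phi(e))+(\omega_0,e)\,(\Psi(\omega),\xi),
\end{equation}
and use $|(\Psi(\omega),\xi)|=|1-(\Psi(\omega),\Phi(u_A))|\leq\varepsilon$ together with $(\omega_0,e)\in[0,1]$ to get, via the triangle inequality,
\begin{equation}
|(\omega,e)-(\Psi(\omega),\tilde\Phi(e))|\leq\varepsilon+\varepsilon=2\varepsilon.
\end{equation}
No step seems genuinely hard; the only subtlety is verifying that $\xi\in B_+$ without needing any structural assumption on $\mathcal{B}$, which is handled by noting that $u_A-e\in E_A$ for every effect $e\in E_A$ and applying $\Phi$'s positivity. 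The factor of $2$ is the price of the correction term and appears unavoidable in this direct construction.
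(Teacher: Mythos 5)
Your proof is correct and essentially identical to the paper's: the same correction term $\xi=u_B-\Phi(u_A)$, the same rank-one modification $\tilde\Phi(e)=\Phi(e)+(\omega_0,e)\,\xi$ with an arbitrary reference state and $\Psi$ left unchanged, and the same $\varepsilon+\varepsilon$ error accounting. The only point worth tightening is the positivity step: inferring $\tilde\Phi(e)\in E_B$ from $\tilde\Phi(e)\in B_+$ and $u_B-\tilde\Phi(e)\in B_+$ is not automatic for a restricted $\mathcal{B}$ (where $E_B$ may be a proper subset of $B_+\cap(u_B-B_+)$); it is cleaner to observe that $\tilde\Phi(e)=\bigl(1-(\omega_0,e)\bigr)\Phi(e)+(\omega_0,e)\bigl(\Phi(e)+\xi\bigr)$ is a convex combination of two elements of $E_B$, namely $\Phi(e)$ and $u_B-\Phi(u_A-e)$, which is how the paper's argument is meant to be read.
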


\begin{lemma}
\label{lem:EmbeddingFacts}
For an exact embedding $\mathcal{A}\to\mathcal{B}$ with maps $\Phi:A\to B$ and $\Psi:A\conj \to B\conj$:
\begin{enumerate}[(i)]
\item $\Psi\conj\Phi=\mathbf{1}_A$, i.e.\ the dual of $\Psi$ is a left-inverse of $\Phi$, hence $\dim A \leq \dim B$. 
Likewise, $\Phi\conj\Psi = \mathbf{1}_{A\conj}$.
\item The map $P:=\Phi\Psi\conj: B \to B$ is a projection onto the image of $\Phi$ ($P^2 = P$ and $P(B) =\Phi(A)$). Furthermore, if the embedding is unital, we have $P u_B=u_B$, i.e.\ $P$ is unital too.
Similarly, $P\conj = \Psi\Phi\conj: B\conj \to B\conj$ is a projection onto the image of $\Psi$ ($P^{*2} = P\conj$ and $P^*(B\conj)=\Psi(A^*)$).
\end{enumerate}
\end{lemma}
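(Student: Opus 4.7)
My plan is to read off all the claims from the defining outcome-preservation identity of an exact embedding, combined with the definition of the adjoint map. Throughout, I take as standing conventions that $E_A$ spans $A$ (explicitly assumed in the paper) and that $\Omega_A$ spans $A^*$ (the analogous dual convention); in finite dimensions I also freely identify $X^{**}$ with $X$.

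For part (i), the starting point is the equality $(\omega, e) = (\Psi(\omega), \Phi(e))$ for $\omega\in\Omega_A$ and $e\in E_A$, which is clause (ii) of Definition~\ref{def:EmbeddingMap} specialized to $\varepsilon=0$. Rewriting the right-hand side via the defining property of the adjoint gives $(\omega, \Psi^*\Phi(e)) = (\omega, e)$, i.e.,
\begin{equation*}
(\omega, (\Psi^*\Phi - \mathbf{1}_A)\, e) = 0 \qquad \forall\,\omega\in\Omega_A,\,e\in E_A.
\end{equation*}
The spanning properties of $E_A$ and $\Omega_A$ then force $\Psi^*\Phi = \mathbf{1}_A$. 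The companion identity $\Phi^*\Psi = \mathbf{1}_{A^*}$ follows either by the symmetric argument or by dualizing. The dimension bound $\dim A \le \dim B$ is immediate, since $\Phi$ admits a left inverse and hence must be injective.

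For part (ii), idempotence is a one-line computation using (i): $P^2 = \Phi\Psi^*\Phi\Psi^* = \Phi(\Psi^*\Phi)\Psi^* = \Phi\Psi^* = P$. The inclusion $P(B) = \Phi(\Psi^*(B)) \subseteq \Phi(A)$ is trivial, and the reverse inclusion holds because $P(\Phi a) = \Phi(\Psi^*\Phi a) = \Phi a$ for every $a\in A$. The only step that requires a genuinely new ingredient is the unitality of $P$ under the unital-embedding hypothesis: it reduces to showing $\Psi^* u_B = u_A$, which I would obtain from normalization preservation of $\Psi$ --- for each $\omega\in\Omega_A$, $(\omega, \Psi^* u_B) = (\Psi\omega, u_B) = 1 = (\omega, u_A)$, and spanning of $\Omega_A$ lifts this to the identity $\Psi^* u_B = u_A$. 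Then $P u_B = \Phi(\Psi^* u_B) = \Phi(u_A) = u_B$, with the last equality being unitality of the embedding. The corresponding statements for $P^* = \Psi\Phi^*$ follow by mirror-image arguments (or simply by dualizing what has already been proved). There is no genuine obstacle --- the lemma is essentially the algebraic shadow of the outcome-preservation identity --- the only subtlety being the need to invoke normalization preservation (rather than just positivity and probability preservation) to cover the unital case.
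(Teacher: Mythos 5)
Your proof is correct and follows essentially the same route as the paper's: part (i) from the outcome-preservation identity plus the spanning of $E_A$ and $\Omega_A$, and part (ii) by the same algebraic computations. The only (immaterial) difference is in the unitality step, where you first establish $\Psi^*u_B=u_A$ via normalization preservation, whereas the paper gets $Pu_B=u_B$ more directly by applying the already-proved fact that $P$ fixes $\Phi(A)$ to the element $\Phi(u_A)=u_B$.
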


\begin{lemma}
\label{lem:UnrestrictedFacts}
Let $\mathcal{A}$ be unrestricted.
For an exact embedding $\mathcal{A}\to\mathcal{B}$ with maps $\Phi:A\to B$ and $\Psi:A\conj \to B\conj$:
\begin{enumerate}[(i)]
\item $\Phi^*$, $\Psi^*$, $P :=\Phi\Psi\conj$ and $P\conj :=\Psi\Phi\conj$ are all positive maps,
\item $\Phi(A_+)=\Phi(A)\cap B_+=P(B_+)$.
\end{enumerate}
\end{lemma}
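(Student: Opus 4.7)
The plan is to exploit two things. First, because $\mathcal{A}$ is unrestricted, the cones $A_+$ and $A^*_+$ are mutually dual: a vector lies in $A^*_+$ iff it pairs nonnegatively with every element of $A_+$, and vice versa. Second, Lemma \ref{lem:EmbeddingFacts} already gives $\Psi^*\Phi=\mathbf{1}_A$, $\Phi^*\Psi=\mathbf{1}_{A^*}$, and the fact that $P=\Phi\Psi^*$ is a projection with image $\Phi(A)$. These two ingredients should carry the entire proof in a few short steps.

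For part (i), I would first establish positivity of $\Phi^*$ directly from the defining property of $\Phi$ and the unrestricted hypothesis. Concretely, for $\omega_B\in B^*_+$ and any $e\in A_+$, the pairing $(\Phi^*\omega_B,e)=(\omega_B,\Phi e)$ is nonnegative since $\Phi$ is positive and $\omega_B\in B^*_+$; by the duality $A^*_+=(A_+)^{\vee}$ this means $\Phi^*\omega_B\in A^*_+$. The symmetric argument, using positivity of $\Psi$ together with the dual characterization of $A_+$, shows $\Psi^*$ is positive. Positivity of $P=\Phi\Psi^*$ and $P^*=\Psi\Phi^*$ then follows immediately from their being compositions of positive maps.

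For part (ii), I would prove the three inclusions $\Phi(A_+)\subseteq\Phi(A)\cap B_+\subseteq P(B_+)\subseteq\Phi(A_+)$ and close the circle. The first inclusion is trivial from positivity of $\Phi$. For the second, if $b=\Phi(a)$ lies in $B_+$, then using that $P$ fixes the image of $\Phi$ (since $P\Phi=\Phi\Psi^*\Phi=\Phi$), we get $b=Pb\in P(B_+)$. For the third, given $Pb=\Phi(\Psi^*b)$ with $b\in B_+$, the positivity of $\Psi^*$ from part (i) places $\Psi^*b$ in $A_+$, so $Pb\in\Phi(A_+)$.

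The argument is essentially bookkeeping once one has the duality $A_+=(A^*_+)^{\vee}$ in hand, so I do not anticipate a serious obstacle. The only point where one must be careful is to invoke the unrestrictedness hypothesis explicitly when passing from positivity of $\Phi$ (resp.\ $\Psi$) to positivity of $\Psi^*$ (resp.\ $\Phi^*$); without the no-restriction hypothesis this implication would fail in general, and the conclusions of (i) and (ii) are correspondingly not expected to hold for arbitrary restricted $\mathcal{A}$.
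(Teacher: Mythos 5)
Your proposal is correct and follows essentially the same route as the paper: part (i) is identical (dualizing the positivity of $\Phi$ and $\Psi$ against the mutually dual cones of an unrestricted GPT), and part (ii) uses the same ingredients ($\Psi^*\Phi=\mathbf{1}_A$, $P$ projecting onto $\Phi(A)$). The only cosmetic difference is in closing the chain of inclusions: you derive $P(B_+)\subseteq\Phi(A_+)$ from the positivity of $\Psi^*$ established in (i), whereas the paper instead shows $\Phi(A)\cap B_+\subseteq\Phi(A_+)$ directly from probability preservation and cone duality --- both are valid and interchangeable.
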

The proofs are given in \cref{app:EmbeddingProperties}.
For the complementary case of embeddings into infinite-dimensional classical systems, similar results are obtained in the forthcoming work of \citet{BarnumL20}.

\section{Embeddings into classical probability theory: \mbox{generalized noncontextuality}}
\label{sec:StandardContextuality}
In this section, we will briefly review Spekkens' generalized notion of contextuality~\cite{Spekkens05} and prove equivalence to ours in the special case of embeddings into (and simulations by) classical probability theory $\mathcal{C}_n$. 
We subsequently elucidate the relation between contextuality inequalities~\cite{KunjwalS15,MazurekPKRS16,KrishnaSW17,KunjwalS18} and our notion of approximate embeddings into $\mathcal{C}_n$.
\vspace*{1em}

\subsection{Equivalence of simulations and \mbox{ontological models}}
Recall from \cref{sec:PrepAndMeasure} that an operational theory describes a collection of procedures accessible in a laboratory, and specifies the probabilities $P(k|p,m)$ of obtaining outcome $k$ when measurement $m$ is performed after preparation $p$.
For any given operational theory, we can study \emph{ontological models} for that theory, and the properties of such models. 
As defined by \citet{Spekkens05}:

\emph{``An ontological model is an attempt to offer an explanation of the success of an operational theory by assuming that there exist physical systems that are the subject of the experiment. These systems are presumed to have attributes regardless of whether they are being subjected to experimental test, and regardless of what anyone knows about them. These attributes describe the real state of affairs of the system. Thus, a specification of which instance of each attribute applies at a given time we call the \emph{ontic state} of the system.''}

The ontic state of the given system will be denoted $\lambda$, and the set of all such states (formally, a measurable space) is $\Lambda$. 
Here, as in~\citet{SchmidSWKW21}, we restrict our attention to discrete $\Lambda$ for simplicity. 
An ontological model should reproduce the probabilistic predictions of the operational theory as follows:
To every preparation procedure $p$, there is a specific probability distribution $\mu_p$ over the ontic states, and to every measurement $m$ and outcome $k$, there is a specific function $\chi_{k,m}:\Lambda\to[0,1]$ with
 \begin{equation}
    P(k|p,m)=\sum_{\lambda\in\Lambda} \mu_p(\lambda)\chi_{k,m}(\lambda)
    \label{eqOntModel}
\end{equation}
 such that $\sum_{\lambda\in\Lambda}\mu_p(\lambda)=1$ and $\sum_k \chi_{k,m}(\lambda)=1$ for all $\lambda$. 

We can interpret an ontological model as specifying a way to reproduce the statistics of the operational theory in classical terms: the preparation $p$ amounts to sampling an ontic state $\lambda$ according to a certain probability distribution $\mu_p$, and the measurement procedure implements a (possibly noisy) classical measurement of that ontic state.
 We assume that ontological models are closed under statistical mixing, i.e.\ that we can implement one of two preparation procedures with 
some probability and obtain the corresponding convex combinations of the involved $\mu_p$. 
 We denote the resulting procedure by $q p+(1-q)p'$, where $0\leq q\leq 1$ is the probability of implementing $p$. 
Similar reasoning applies to measurements and their outcomes and their corresponding $\chi_{k,m}$.
We also assume that $\chi_{k,m}\equiv 0$ is a valid response function that describes an impossible outcome.

Recall furthermore from \cref{sec:PrepAndMeasure} that by identifying equivalence classes of preparation and measurement procedures, we can associate a GPT with an operational theory.
The notion of equivalence of procedures is also the main ingredient for Spekkens' definition of noncontextuality: 
 an ontological model of an operational theory is \emph{preparation--noncontextual} if equivalent preparations $p\sim p'$ yield identical distributions of ontic states, $\mu_p=\mu_{p'}$, and \emph{measurement--noncontextual} if equivalent outcome-measurement pairs~\cite{SchmidSWKW21} $(k,m)\sim (k',m')$ yield identical response functions, $\chi_{k,m}=\chi_{k',m'}$. 
The model is called \emph{noncontextual} if it is both preparation-- and measurement--noncontextual. 

It turns out that our notion of ``simulation by a classical GPT'' (special case $\mathcal{B}=\mathcal{C}_n$ of \cref{def:epssim}) is equivalent to that of an ontological model, and that the corresponding notions of contextuality are equivalent:
\begin{theorem}
\label{thm:Correspondence}
Every discrete ontological model of an operational theory defines an exact simulation of the corresponding GPT by some $\mathcal{C}_n$, and vice versa.
Moreover, the ontological model is preparation--noncontextual / measurement--noncontextual / noncontextual if and only if the corresponding simulation is preparation-univalent / measurement-univalent / univalent.
\end{theorem}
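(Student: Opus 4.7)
The plan is to establish the correspondence by giving explicit constructions in both directions, and then verifying that the (non)contextuality and (uni)valence attributes match up. Throughout, identify $\mathcal{C}_n$ with $n := |\Lambda|$, so that classical states are probability distributions on $\Lambda$ and classical effects are functions $\Lambda \to [0,1]$.

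For the forward direction (ontological model $\Rightarrow$ simulation), given an ontological model $\{\mu_p, \chi_{k,m}\}$, the natural move is to define, for each effective state $\omega_A$ and effect $e_A$,
\begin{align}
{\rm Sim}_{\mathcal{C}_n}(\omega_A) &:= \{\mu_p : [p] = \omega_A\}, \\
{\rm Sim}_{\mathcal{C}_n}(e_A) &:= \{\chi_{k,m} : [(k,m)] = e_A\}.
\end{align}
Exact probability reproduction (\cref{eq:match_stats} with $\varepsilon = 0$) is then just \cref{eqOntModel}; the mixing condition \cref{eq:Mixing1} follows from the stated closure of ontological models under statistical mixing of both preparation procedures and measurement outcomes; and \cref{eq:ZeroToZero} follows from the stated assumption that $\chi_{k,m} \equiv 0$ is a valid response function.

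For the reverse direction (simulation $\Rightarrow$ ontological model), set $\Lambda := \{1, \ldots, n\}$, and for each individual preparation procedure $p$ with effective state $\omega_A$ pick some $\mu_p \in {\rm Sim}_{\mathcal{C}_n}(\omega_A)$; analogously, for each measurement--outcome pair $(k,m)$ realizing the effect $e_A$, pick $\chi_{k,m} \in {\rm Sim}_{\mathcal{C}_n}(e_A)$. Normalization of $\mu_p$ is automatic since $\mu_p \in \Omega_{\mathcal{C}_n}$; the normalization $\sum_k \chi_{k,m} = u_{\mathcal{C}_n}$ can be secured by first noting that exact probability reproduction forces ${\rm Sim}_{\mathcal{C}_n}(u_A) = \{u_{\mathcal{C}_n}\}$ (since $(\omega_C, e_C) = 1$ on all classical states uniquely determines $e_C$), and then using the mixing axiom to make the choices across the outcomes of a fixed measurement compatibly. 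Equation \cref{eqOntModel} is then the $\varepsilon = 0$ instance of \cref{eq:match_stats}.

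The contextuality--univalence correspondence will then follow essentially by inspection: in the forward construction, ${\rm Sim}_{\mathcal{C}_n}(\omega_A)$ is a singleton iff all procedures $p \sim p'$ in the equivalence class share a common $\mu_p$, i.e.\ iff the model is preparation-noncontextual, and similarly for effects. In the reverse construction, preparation-univalence of the simulation forces the assignment $p \mapsto \mu_p$ to be constant on each equivalence class, producing a preparation-noncontextual model, while multivalence permits an assignment giving distinct $\mu_p \neq \mu_{p'}$ to equivalent procedures, producing a preparation-contextual one. The main obstacle is not any single nontrivial step but rather keeping the bookkeeping straight between individual procedures (on which the ontological model's $\mu_p$ and $\chi_{k,m}$ are defined) and the equivalence classes $\omega_A$, $e_A$ on which ${\rm Sim}$ is defined; the subtler point is the reverse-direction requirement that the free choices within different ${\rm Sim}_{\mathcal{C}_n}(e_{A,k})$ be coordinated across the outcomes of one measurement so that they sum to the classical unit effect, which is where the combination of the uniqueness of $u_{\mathcal{C}_n}$ with the mixing axiom is essential.
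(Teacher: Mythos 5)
Your forward direction (ontological model $\Rightarrow$ simulation) is exactly the paper's: define ${\rm Sim}_{\mathcal{C}_n}(\omega_A)$ and ${\rm Sim}_{\mathcal{C}_n}(e_A)$ as the sets of $\mu_p$ and $\chi_{k,m}$ ranging over the respective equivalence classes, and verify the three simulation axioms from \cref{eqOntModel}, closure under statistical mixing, and the admissibility of the zero response function. That half is fine.

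The reverse direction is where there is a genuine gap. You build the model by \emph{choosing}, for each individual procedure $p$ of a pre-given operational theory, some $\mu_p\in{\rm Sim}_{\mathcal{C}_n}(\omega_A)$. This secures ``univalent $\Rightarrow$ noncontextual,'' but not the other half of the biconditional: if the simulation is multivalent, your construction merely \emph{permits} a contextual assignment --- one can equally well choose the same $\mu$ for every member of an equivalence class (and must, if the class happens to contain only one procedure), thereby producing a noncontextual model from a multivalent simulation. Since the theorem asserts ``noncontextual $\Leftrightarrow$ univalent'' for \emph{the} corresponding model, the correspondence must be canonical; the paper achieves this by \emph{constructing} the operational theory from the simulation, taking the preparation procedures to be the elements of ${\rm Sim}_{\mathcal{C}}(\omega_A)$ themselves (and the outcome--measurement pairs to be the elements of ${\rm Sim}_{\mathcal{C}}(e_A)$, with measurements defined as collections summing to unit probability). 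Then the equivalence classes \emph{are} the ${\rm Sim}$ sets, and contextuality of the model equals multivalence by construction. Separately, your normalization argument is unsound as stated: ${\rm Sim}_{\mathcal{C}_n}(u_A)=\{u_{\mathcal{C}_n}\}$ does not follow from exact probability reproduction, because the condition $(\omega_C,e_C)=1$ is only guaranteed for the classical states $\omega_C$ that actually occur in some ${\rm Sim}$ set, and these need not include all point masses $\delta_\lambda$ (nor even collectively have full support on $\Lambda$), so $e_C$ is not pinned down pointwise; the subsequent appeal to the mixing axiom to ``coordinate'' the choices across outcomes inherits the same problem. The paper's construction avoids this coordination issue entirely by defining the measurements of the operational theory to be exactly those collections of simulating effects that sum correctly.
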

The proof is given in \cref{app:Correspondence}.
Essentially, the claim follows by associating each distinct distribution $\mu$ with a simulating state in $\mathcal{C}_n$, and each response function $\chi$ with a simulating effect.

This theorem implies a simple corollary that subsumes the main result of~\cite{SchmidSWKW21}: a GPT admits of a discrete ontological model (in the restricted sense of their definition, i.e.\ noncontextual) if and only if the GPT is simplex-embeddable (recall that the state spaces $\Omega_n$ of the classical GPTs $\mathcal{C}_n$ are simplices).
\begin{corollary}
An operational theory admits of a discrete noncontextual ontological model if and only if the corresponding GPT is embeddable into some $\mathcal{C}_n$.
\end{corollary}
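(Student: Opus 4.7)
The plan is to chain together Theorem~\ref{thm:Correspondence} and Lemma~\ref{lem:Linear}, both of which have already been established in the excerpt. Theorem~\ref{thm:Correspondence} sets up a correspondence between discrete ontological models of an operational theory and exact simulations of the associated GPT by some $\mathcal{C}_n$, and it matches noncontextuality on the ontological side with univalence on the simulation side. Lemma~\ref{lem:Linear} in turn identifies univalent simulations with (linear) embeddings. Putting these two together should yield the corollary almost immediately, so the ``proof'' is really just a short composition of equivalences.

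More concretely, I would argue as follows. For the forward direction, suppose the operational theory admits a discrete noncontextual ontological model. By the noncontextual case of Theorem~\ref{thm:Correspondence}, this yields an exact univalent simulation of the corresponding GPT $\mathcal{A}$ by some $\mathcal{C}_n$. By Lemma~\ref{lem:Linear}, every exact univalent simulation of $\mathcal{A}$ by $\mathcal{C}_n$ is induced by an exact embedding $\mathcal{A}\hookrightarrow \mathcal{C}_n$, so $\mathcal{A}$ is embeddable into a classical GPT. For the converse, given an exact embedding of $\mathcal{A}$ into some $\mathcal{C}_n$, Lemma~\ref{lem:Linear} turns it into an exact univalent simulation of $\mathcal{A}$ by $\mathcal{C}_n$; then the ``only if'' direction of Theorem~\ref{thm:Correspondence} (applied in reverse, i.e.\ using the ``vice versa'' clause) produces a discrete ontological model of any operational theory whose equivalence-class structure yields $\mathcal{A}$, and the univalence of the simulation forces both preparation- and measurement-noncontextuality of that model.

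There is essentially no obstacle here beyond bookkeeping: the only subtlety is making sure that ``the corresponding GPT'' in the statement is interpreted consistently on the two sides, i.e.\ that the operational theory whose ontological model we produce in the converse direction really is one whose state and effect equivalence classes reproduce $\mathcal{A}$. This is handled inside the proof of Theorem~\ref{thm:Correspondence} (in Appendix~\ref{app:Correspondence}), where a canonical operational theory is built from the GPT data, so no additional work is needed here. The whole proof can therefore be written in two or three sentences just invoking the two cited results.
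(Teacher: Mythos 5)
Your proposal is correct and follows exactly the paper's own route: the paper proves this corollary in one line by combining Theorem~\ref{thm:Correspondence} (ontological models correspond to classical simulations, with noncontextuality matching univalence) with Lemma~\ref{lem:Linear} (univalent simulations are embeddings). Your more detailed unpacking of the two directions is just an expanded version of the same argument.
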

This follows from Theorem~\ref{thm:Correspondence} because a univalent simulation is an embedding (see Lemma~\ref{lem:Linear} and \cref{def:EmbeddingMap}). 
Furthermore, our results on exact embeddings into quantum theory (in \cref{sec:QuantumExact} below) imply as a simple consequence a result that has also been found in~\cite{Shahandeh21,BarnumL20,SchmidSWKW21}:
\begin{corollary}
\label{col:Classical}
An unrestricted GPT can be embedded exactly into finite-dimensional classical theory, i.e.\ into some $\mathcal{C}_n$, if and only if it is classical.
\end{corollary}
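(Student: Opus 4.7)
The ``if'' direction is trivial: if $\mathcal{A}=\mathcal{C}_n$ is classical, then the pair $(\mathrm{id},\mathrm{id})$ is an exact embedding of $\mathcal{A}$ into itself.

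For the converse, suppose $\mathcal{A}$ is unrestricted and $(\Phi,\Psi)$ is an exact embedding of $\mathcal{A}$ into $\mathcal{C}_n$, and set $V:=\Phi(A)\subseteq\mathbb{R}^n$. The first step is to invoke Lemma~\ref{lem:UnrestrictedFacts}(ii) to obtain $\Phi(A_+)=V\cap\mathbb{R}^n_+$; that is, the effect cone of $\mathcal{A}$ is linearly isomorphic to a linear section of the positive orthant. Second, combining Lemma~\ref{lem:EmbeddingFacts}(ii) with Lemma~\ref{lem:UnrestrictedFacts}(i), the composition $P:=\Phi\Psi\conj$ is a \emph{positive} (entrywise nonnegative) linear projection on $\mathbb{R}^n$ with range $V$, and one has the reformulation $V\cap\mathbb{R}^n_+=P(\mathbb{R}^n_+)$.

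The crux is then a structural result on positive projections on $\mathbb{R}^n$: any entrywise-nonnegative idempotent matrix decomposes, after permuting the standard basis, as a direct sum of rank-one positive idempotents of the form $v_i w_i\trans$ with $v_i,w_i\geq 0$, $w_i\trans v_i=1$, and pairwise disjoint supports (Flor's theorem on nonnegative idempotent matrices). In particular, $V=\mathrm{span}(v_1,\ldots,v_r)$ is spanned by vectors with mutually disjoint supports in $\{1,\ldots,n\}$, so that
\begin{equation*}
   V\cap\mathbb{R}^n_+=\mathrm{cone}(v_1,\ldots,v_r)
\end{equation*}
is a simplicial cone. Transporting this back along the linear isomorphism $\Phi|_{A_+}$ makes $A_+$ simplicial, and unrestrictedness then forces the dual cone $A\conj_+=(A_+)\conj$ to be simplicial as well; hence $\mathcal{A}$ is (isomorphic to) some $\mathcal{C}_k$.

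The principal obstacle is the structural theorem on nonnegative idempotents; should one wish to stay purely within the framework of the paper, an alternative route is to apply Lemma~\ref{lem:UnrestrictedFacts}(ii) also to $\Psi$ (yielding $\Psi(A\conj_+)=W\cap\mathbb{R}^n_+$ for $W:=\Psi(A\conj)$), and then use the duality $\Psi\conj\Phi=\mathbf{1}_A$ between the sections $V$ and $W$ to show directly that distinct extreme rays of $V\cap\mathbb{R}^n_+$ must be supported on disjoint coordinate sets of $\{1,\ldots,n\}$. This self-contained route reduces the remaining work to elementary linear algebra and furnishes the same simpliciality conclusion.
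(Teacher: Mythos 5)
Your proof is correct, but it takes a genuinely different route from the paper's. The paper composes the given classical embedding with the standard embedding of $\mathcal{C}_n$ into $\mathcal{Q}_n$ (Lemma~\ref{LemIterate}), invokes the full classification of Theorem~\ref{thm:Simulate} to conclude that $\mathcal{A}$ is a special Euclidean Jordan algebra, and then uses Lemma~\ref{lem:UnrestrictedFacts}(ii) only to observe that $A_+$ is polyhedral, which rules out every non-classical Jordan factor. You instead stay entirely inside classical linear algebra: since $B_+=\mathbb{R}^n_+$, positivity of $P=\Phi\Psi\conj$ (Lemmas~\ref{lem:EmbeddingFacts} and~\ref{lem:UnrestrictedFacts}) means $P$ is an entrywise nonnegative idempotent matrix, and the structure theory of such matrices forces $P(\mathbb{R}^n_+)=\Phi(A_+)$ to be a simplicial cone, whence $\mathcal{A}\cong\mathcal{C}_r$ by duality. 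This is more self-contained -- it avoids Kadison's inequality and the Jordan-algebra classification entirely -- whereas the paper's argument is essentially free once Theorem~\ref{thm:Simulate} is in hand. One small imprecision: Flor's theorem gives $P=\sum_{i=1}^r v_i w_i\trans$ with $v_i,w_i\geq 0$ and $w_i\trans v_j=\delta_{ij}$, but the supports of the $v_i$ need not be pairwise disjoint (they may overlap on coordinates where all $w_j$ vanish; e.g.\ $v_1=(1,0,1)\trans$, $v_2=(0,1,1)\trans$, $w_i=e_i$ in $\mathbb{R}^3$). This does not damage the argument, because the biorthogonality alone already yields $P(\mathbb{R}^n_+)=\mathrm{cone}(v_1,\ldots,v_r)$ with the $v_i$ linearly independent, and you have already identified $V\cap\mathbb{R}^n_+$ with $P(\mathbb{R}^n_+)$ via Lemma~\ref{lem:UnrestrictedFacts}(ii); so the simpliciality conclusion stands, just not via the disjoint-support shortcut.
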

A simple argument for this is given in Appendix~\ref{app:EmbeddingProperties}.

\subsection{Approximate embeddability and noncontextuality inequalities}
There has been a wave of recent interest on how contextuality (in the sense of~\citet{Spekkens05}) can be experimentally tested~\cite{KunjwalS15,MazurekPKRS16,KrishnaSW17,KunjwalS18}. 
This requires noncontextuality certificates that are robust to a certain amount of noise. 
One way to achieve this is via \emph{noncontextuality inequalities}, whose experimental violation (subject to certain assumptions~\cite{MazurekPKRS16}) rule out the existence of a noncontextual ontological model. 
We will now demonstrate that noncontextuality inequalities imply statements about the $\varepsilon$-embeddability of quantum theory (or other GPTs) into classical probability theory $\mathcal{C}_n$.

Consider the noncontextuality inequality derived by \citet{MazurekPKRS16}:
\begin{align}
A:=\frac{1}{6} \sum_{t\in\{1,2,3\}}\sum_{b\in\{0,1\}} P(b\,|\,p_{t,b}, m_t)\leq \frac{5}{6}.
\label{eq:ContIn}
\end{align}

Here, $p_{t,b}$ denotes six preparation procedures and $m_t$ three measurement procedures (with two possible outcomes $b\in\{0,1\}$) in an operational theory. 
By assumption, the three preparation procedures $p_t:=\frac{1}{2} p_{t,0}+\frac{1}{2} p_{t,1}$ (obtained by tossing a fair coin and implementing either $p_{t,0}$ or $p_{t,1}$) are operationally equivalent, i.e.\ statistically indistinguishable. 
Furthermore, $m=\frac {1}{3} m_1+\frac{1}{3}m_2+\frac{1}{3} m_3$ resembles a fair coin toss, i.e.\ yields outcomes $0$ or $1$ with equal probability regardless of the preparation.

\Citet{MazurekPKRS16} show that the existence of a noncontextual ontological model implies inequality~\eqref{eq:ContIn}. 
However, this inequality can be violated via preparations and measurements of a quantum bit, which admit a value of $A=1$. 
These preparations and measurements lie in an equatorial plane of the Bloch ball, and can hence be interpreted as procedures within quantum theory over the real numbers (i.e.\ as elements of a \emph{rebit}).

This contextuality inequality implies the nonexistence of an approximate embedding into classical probability theory:
\begin{example}
\label{ExRebit}
Let $\varepsilon<\frac{1}{6}$. 
Then the rebit (and thus, also the qubit) cannot be $\varepsilon$-embedded into any $\mathcal{C}_n$.	
\end{example}
\textit{Proof sketch}. Here we only summarize the proof; all the details are given in Appendix~\ref{SecProofRebit}. To the six preparation procedures, $p_{t,b}$, we associate six rebit states $\rho_{t,b}$, and to the outcomes $b$ of the measurements $m_t$, we associate the rebit effects $E_{t,b}$ such that $P(b'|p_{t,b},m_{t'})=\tr(\rho_{t,b} E_{{t'},b'})$, as in Ref.~\cite{MazurekPKRS16}. Consider any $\varepsilon$-embedding of the rebit into some $\mathcal{C}_n$. 
This defines classical states $\omega_{t,b}:=\Psi(\rho_{t,b})$ and effects $e_{t,b}:=\Phi(E_{t,b})$ such that $|(\omega_{t,b},e_{t',b'})-\tr(\rho_{t,b} E_{t',b'})|\leq\varepsilon$, and the linear maps $\Psi$ and $\Phi$ preserve the linear dependencies among the $\rho_{t,b}$ and among the $E_{t,b}$, i.e.\ the operational equivalences. 
But $\mathcal{C}_n$ certainly has a noncontextual ontological model (namely itself), hence
\begin{align}
\frac{5}{6} \geq \frac{1}{6} \sum_{t,b} (\omega_{t,b},e_{t,b})
\geq \frac{1}{6}\sum_{t,b}\left(\strut\tr(\rho_{t,b} E_{t,b})-\varepsilon\right)=1-\varepsilon.
\end{align}
Thus $\varepsilon\geq \frac 1 6$, and no $\varepsilon$-embedding is possible for any smaller value of $\varepsilon$.
\qed

Our arguments in \cref{sec:Experiment} below suggest that the converse is also true: proofs of nonexistence of $\varepsilon$-embeddings can in principle be used as experimental tests of contextuality. 
However, in this article, we focus on a different goal: rather than ruling out noncontextual ontological models, we exclude \emph{underlying quantum models} that would explain the experimental data by univalent simulation.

\section{Exact embeddings into \mbox{quantum theory}}
\label{sec:QuantumExact}
If we assume that the fundamental theory of nature is quantum, then all effective GPTs that we actually encounter in nature must be simulated by quantum theory. 
Thus, it is of particular interest to study embeddings into quantum theory. 
While we focus on embeddings into finite-dimensional quantum systems $\mathcal{Q}_n$, we will show in \cref{sec:Infinite} that exact embeddings into countably infinite-dimensional quantum systems $\mathcal{Q}_\infty$ can be approximated by almost-exact embeddings into $\mathcal{Q}_n$.

\subsection{Examples}
\label{sec:embed_examples}

\inlineheading{Example: Quantum theory.}
First, it can be useful to embed quantum theory of some finite dimension into quantum theory of a higher dimension.
For example, the quantum-error correcting Shor code~\cite{Shor95} maps a single logical qubit onto nine physical qubits ($\psi: \Csa{2} \to \bigotimes_{i=1}^9 \Csa{2}$) so as to allow for a random bit flip ($\sigma_x$) and/or phase flip ($\sigma_z$) on any of the nine physical qubits without affecting the encoded logical information.
Here both $\Psi$ and $\Phi$ take the form $X\mapsto V X V^\dagger$, where $V$ is an isometry, and thus satisfy the requirements of Lemma~\ref{lem:Linear} for $\varepsilon=0$.

More generally, identifying subspaces of larger quantum systems via the corresponding isometries defines nonunital exact embeddings. 
For example, $\mathcal{Q}_2$ (a qubit) is embedded into $\mathcal{Q}_3$ (a qutrit), by simply mapping the two qubit levels onto two of the three qutrit levels.

\inlineheading{Example: Classical theory.}
An important exact embedding is the inclusion of $n$-level classical probability theory within $n$-level quantum theory. 
There is a positive unital linear map $\Psi: \reals^n \to \Csa{n}$ -- onto the $n\times n$ diagonal matrices $\Psi: (p_1,\ldots,p_n)\trans\mapsto \rho := \sum_i p_i \ketbra{i}{i}$ for some choice of basis $\{\ket{i}\}_{i=1\ldots n}$. 
The same can be defined on effects, embedding them as diagonal POVM elements.
Here, the embedding is unital.

\inlineheading{Example: Real quantum theory.}
Real quantum theory is defined almost identically to complex quantum theory,
 except instead of using positive complex self-adjoint matrices $\Csap{n}$ to describe effects and states,
 real quantum theory uses the positive real symmetric matrices $\Rsap{n}$.
As each real symmetric matrix is also a complex self--adjoint matrix (with no imaginary elements), 
 real quantum theory can be trivially embedded by inclusion maps into complex quantum theory of the same dimension.

\inlineheading{Example: Spin-factors (``$d$-balls'').}
There are also more ``exotic'' GPTs that can be exactly embedded into finite-dimensional quantum theory.
For instance, the $d$-dimensional {\em spin-factor} models $\mathcal{B}_d=(\mathbb{R}^{d+1},\Omega_d,E_d)$, whose normalized states are given by $d$-dimensional balls $\Omega_d= \{ (1, \vec{x})\,\, |\,\, \lVert \vec{x} \rVert_2 \leq 1 \}\subseteq \reals\oplus\reals^d$ often arise as foil theories to quantum theory, generalizing the $3$-dimensional real ``Bloch ball'' representation of a qubit (see e.g.,\ \cite{PawlowskiW12,PaterekDB10,GarnerMD17,KrummM19}). 
The effect cone is $B_{d+}: = \{\left(n, \vec{x}\right)\trans \in \reals\oplus\reals^d \;|\; n\geq 0,\, \lVert \vec{x} \rVert_2 \leq n \}$ 
 and $\vec{u}_d=\left(1,0,\ldots 0 \right)\trans$. 
 This theory is defined to be unrestricted, and its states may also be associated with elements of $B_{d+} \subset \reals\oplus\reals^d$, with probabilities ascribed to states and effects via the inner product $\inn{( n_s, \vec{x}_s)}{(n_e, \vec{x}_e)} := n_1 n_2 + \vec{x}_s \cdot \vec{x}_e$ (where ``$\cdot$'' here is the usual Euclidean dot product). 
Since the cones of states and effects become identical under this inner product, the GPT is strongly self-dual.

A $d$-dimensional spin factor can be embedded into complex quantum theory via a map given by \citet{Tsirelson87} (as outlined by \citet{KleinmannOSW13}),
 specifically into the $m$-qubit system $\mathcal{Q}_{2^m}$ where $m = d/2$ if $d$ is even, and $m = \left(d+1\right)/2$ otherwise.
In particular, we define the matrices $\{\gamma_i\}_{i=1\ldots d}$ in $\Csa{2^m}$ as:
\begin{align}
\gamma_{i} := \begin{cases}
\underbrace{\sigma_z \otimes \ldots \otimes \sigma_z}_{\frac{i-1}{2}} \otimes \sigma_x \otimes \underbrace{\sigma_0 \otimes \ldots \otimes \sigma_0}_{m - \frac{i-1}{2}} 
& \mathrm{for~odd~} i, \\
\underbrace{\sigma_z \otimes \ldots \otimes \sigma_z}_{\frac{i}{2}-1} \otimes \sigma_y \otimes \underbrace{\sigma_0 \otimes \ldots \otimes \sigma_0}_{m - \frac{i}{2}}
& \mathrm{for~even~} i,
\end{cases}
\end{align}
where $\sigma_x, \sigma_y, \sigma_z$ are the $2\times2$ Pauli matrices, and $\sigma_0=\id_2$ is the $2\times2$ identity matrix.
Then, the embedding map $\Phi: \reals\oplus\reals^d \to \Csa{n}$ is specified (for $n\in\reals$, $\vec{x}\in \reals^d$) as:
\begin{align}
\Phi\left(n,\vec{x}\right)\trans:=  n\id_{2^m} + \sum_{i=1}^d x_i \gamma_i.
\end{align}
Likewise, the map $\Psi: \reals\oplus\reals^d \to \Csa{n}$ for embedding the spin-factor states is identically defined.
(See \cref{app:spin} for proof of positivity, and preservation of outcome probabilities). 
See also~\citet{BarnumGW20} for more details on this embedding and on the one below, and on the relation to Jordan algebras that we will turn to shortly.

\inlineheading{Example: Quaternionic quantum theory.}
The quaternions $\quat$ are a rank-$4$ associative, but noncommutative, algebra,
 admitting three imaginary units $i,j,k$ such that $ii = jj = kk = ijk = -1$.
Any $q\in\quat$ can be written $q = a + ib + jc + kd$ for $a,b,c,d\in\reals$,
 and the conjugate $q\conj$ is defined $q^* := a - ib - jc - kd$.
The set of $n\times n$ quaternionic self-adjoint matrices $\Hsa{n}$ is defined similarly to the complex Hermitian matrices: i.e.\ as the set of all matrices equal to their own transposed element-wise conjugate.
The cone of positive elements $\Hsap{n}$ is the subset of $\Hsa{n}$ with positive real eigenvalues.
{\em Quaternionic quantum theory}~\cite{Graydon11} is then defined to have effects and states both associated with elements of $\Hsap{n}$, 
 with outcome probabilities given via the self-dualizing inner product $\inn{\rho}{e} := \re{\tr\left(\rho e\right)}$ for $\rho, e\in\Hsap{n}$. 
The unit effect is the $n\!\times\!n$ identity matrix, and the normalized states accordingly have $\re{\tr\left(\rho\right)} = 1$.

Quaternionic quantum theory can be embedded into complex quantum theory of twice the dimension by expressing the elements of $\Hsa{n}$ as ``symplectic'' complex self-adjoint matrices.
Identifying the quaternionic unit $i$ with the complex imaginary unit $i$, any quaternionic self-adjoint matrix $Q\in\Hsap{n}$ can be written $Q = A+B j$ where $A$ and $B$ are $n\times n$ complex matrices, and such that the map $\Phi: \Hsap{n}\to\Csap{2n}$ on quaternionic effects:
\begin{align}
\label{eq:SympForm}
\Phi (Q):= \left(\begin{array}{cc}A & B \\ -B\conj & A\conj \end{array}\right),
\end{align}
together with a similarly defined $\Psi := \frac{1}{2} \Phi: \Hsap{n} \to \Csap{2n}$ on quaternionic states is an embedding into (complex) quantum theory.
(See \cref{app:quat} for proof of positivity, and preservation of outcome probabilities).

\inlineheading{Nonexample: The gbit.}
The Holevo simulation of the gbit by classical theory (and hence by quantum theory) was not an embedding, because it was preparation multivalent.
Could there be another way of embedding such a state space into $\mathcal{Q}_n$ for some choice of $n$?
We can see, somewhat straightforwardly, that the answer must be ``no'' for the gbit.

Suppose that there is an exact embedding of a gbit (recall \cref{fig:GPTCones}) into some $\mathcal{Q}_n$. 
Then the quantum states $\rho_{ij}:=\Psi(\alpha_{ij})$ ($i,j\in\{+,-\}$) are pairwise perfectly distinguishable. 
Thus, $\langle\rho_{ij},\rho_{kl}\rangle_{\rm HS}=\delta_{ik}\delta_{jl}$, where $\langle X,Y\rangle_{\rm HS}:={\rm tr}(X Y)$ is the Hilbert-Schmidt inner product on  $\mathbf{H}_n(\mathbb{C})$. 
Hence, these states span a four-dimensional subspace of $\mathbf{H}_n(\mathbb{C})$, which is in contradiction to the three-dimensionality of the span of the $\alpha_{ij}$. 
Thus, by contradiction, the gbit cannot be embedded exactly into quantum theory. 
We address the question of approximate embeddings in \cref{sec:ExGbit} below.

\subsection{All exact embeddings of unrestricted GPTs}
\label{SubsecAllExact}
Is there some deeper, structural reason, why gbits should fail to embed, while classical theory and the spin factors succeed?
It turns out -- via results by \citet{EffrosS79} (see also \citet{Werner82} and \citet{Idel13}) -- that the unrestricted GPTs that can be exactly embedded into quantum theory are rather limited and can be characterized algebraically, see Theorem~\ref{thm:Simulate} below. 
For completeness, we give a simplified version of the proof as it applies to our specific setting.

We begin by introducing a formal way of saying that an embedding into quantum theory should not be ``unnecessarily large'':

\begin{definition}[Minimal embedding]
\label{def:Min}
An $\varepsilon$-embedding of a GPT $\mathcal{A}$ into $n$-dimensional quantum theory $\mathcal{Q}_n$ is \textbf{minimal} if there does not exist any $m<n$ such that $\mathcal{A}$ can be $\varepsilon$-embedded into $\mathcal{Q}_m$.
\end{definition}
When we embed into quantum models, we may always choose the smallest possible Hilbert space dimension. 
Thus, we will restrict our attention to minimal embeddings. 
Consequently:
\begin{lemma}
\label{lem:QMinimal}
If an $\varepsilon$-embedding of a GPT $\mathcal{A}$ into $\mathcal{Q}_n$ is minimal, then there exists some state $\omega\in\Omega_A$ such that the quantum state $\Psi\!\left(\omega\right)$ has full rank.
\end{lemma}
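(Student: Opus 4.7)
The plan is to argue by contrapositive: I will assume that no state $\omega\in\Omega_A$ yields $\Psi(\omega)$ of full rank and show that in this case the embedding is not minimal. Let $r:=\max_{\omega\in\Omega_A}\mathrm{rank}(\Psi(\omega))$, pick some $\omega_0\in\Omega_A$ attaining this maximum, and set $\rho_0:=\Psi(\omega_0)$ and $V_0:=\mathrm{supp}(\rho_0)\subseteq\mathbb{C}^n$, an $r$-dimensional subspace. Suppose for contradiction that $r<n$, and let $P_0$ denote the orthogonal projector onto $V_0$.

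My first step is to show that \emph{every} $\Psi(\omega)$ is in fact supported in $V_0$. For any $\omega\in\Omega_A$, convexity of $\Omega_A$ gives $\omega'':=\tfrac12(\omega_0+\omega)\in\Omega_A$, and linearity of $\Psi$ yields $\Psi(\omega'')=\tfrac12(\rho_0+\Psi(\omega))$, which is again a valid density matrix since $(\Phi,\Psi)$ is an embedding. Using the standard fact that for positive operators and $\lambda\in(0,1)$ one has $\mathrm{supp}(\lambda A+(1-\lambda)B)=\mathrm{supp}(A)+\mathrm{supp}(B)$, the maximality of $r$ forces $\mathrm{supp}(\Psi(\omega))\subseteq V_0$, i.e.\ $P_0\Psi(\omega)P_0=\Psi(\omega)$ for all $\omega\in\Omega_A$.

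Next, I would construct a genuinely smaller embedding into $\mathcal{Q}_r$. Identifying $V_0\cong\mathbb{C}^r$, so that $\mathbf{H}_r(\mathbb{C})$ is the algebra of self-adjoint operators on $V_0$, define the linear maps $\tilde\Psi(\omega):=\Psi(\omega)$ (viewed as an element of $\mathbf{H}_r(\mathbb{C})$, which is legitimate by the previous paragraph) and $\tilde\Phi(e):=P_0\Phi(e)P_0$. Each $\tilde\Psi(\omega)$ is a density matrix on $V_0$ since $\Psi(\omega)$ already is and its support is contained in $V_0$, and $0\leq\Phi(e)\leq\id_n$ gives $0\leq\tilde\Phi(e)\leq P_0$, so $\tilde\Phi(e)$ is a valid quantum effect on $V_0$. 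Moreover,
\begin{align}
\tr\!\left(\tilde\Psi(\omega)\tilde\Phi(e)\right)=\tr\!\left(\Psi(\omega)\,P_0\Phi(e)P_0\right)=\tr\!\left(\Psi(\omega)\Phi(e)\right),
\end{align}
where the last step uses $P_0\Psi(\omega)P_0=\Psi(\omega)$ and cyclicity of the trace. Hence $|(\omega,e)-(\tilde\Psi(\omega),\tilde\Phi(e))|=|(\omega,e)-(\Psi(\omega),\Phi(e))|\leq\varepsilon$ for all $\omega\in\Omega_A$ and $e\in E_A$.

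Thus $(\tilde\Phi,\tilde\Psi)$ would be an $\varepsilon$-embedding of $\mathcal{A}$ into $\mathcal{Q}_r$ with $r<n$, contradicting minimality; so some $\omega\in\Omega_A$ must have $\Psi(\omega)$ of full rank. The conceptual heart of the proof is that convexity of $\Omega_A$ together with the support-additivity of convex combinations of positive operators squeezes all images of $\Psi$ into the support of any maximal-rank image. I do not anticipate a substantial obstacle; the only mildly delicate point is checking that the compressed effect $P_0\Phi(e)P_0$ remains a valid POVM element, but this is immediate from operator monotonicity of compression by a projector.
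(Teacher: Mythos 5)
Your proof is correct and follows essentially the same route as the paper's: both establish that all images $\Psi(\omega)$ are supported in the support $S$ of a maximal-rank image (the paper via a kernel/orthocomplement argument, you via the equivalent support-additivity of convex combinations of positive operators), and then compress states and effects by the projector onto $S$ to obtain an $\varepsilon$-embedding into $\mathcal{Q}_r$ with $r<n$, contradicting minimality.
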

\begin{proof}
Let $\omega\in \Omega_A$ such that $m:={\rm rank}(\Psi(\omega))$ is maximal, and suppose that $m<n$. 
Let $S:={\rm supp}(\Psi(\omega))$ (an $m$-dimensional subspace of $\comp^n$), and suppose there is some $\rho\in\Omega_A$ with ${\rm supp}(\Psi(\rho))\not\subseteq S$. 
Taking the orthocomplement of ${\rm ker}(\Psi(\frac 1 2 \omega+\frac 1 2\rho))\subseteq{\rm ker}(\Psi(\omega))$ yields ${\rm supp}(\Psi(\frac{1}{2} \omega+\frac{1}{2} \rho))\supseteq {\rm supp}(\Psi(\omega))=S$, but similarly, ${\rm supp}(\Psi(\frac{1}{2} \omega+\frac{1}{2} \rho))\supseteq {\rm supp}(\Psi(\rho))\ni x$ with $x\not\in S$. 
Hence ${\rm rank}(\Psi(\frac{1}{2} \omega+\frac{1}{2} \rho))> m$ which is a contradiction. 
Thus, ${\rm supp}(\Psi(\rho))\subseteq S$ for all $\rho\in \Omega_A$.

Now choose an arbitrary orthonormal basis $|s_1\rangle,\ldots,|s_m\rangle$ of $S$, and define the map $V:\mathbb{C}^n\to\mathbb{C}^m$ as $V:=\sum_{j=1}^m|j\rangle\langle s_j|$, where $\{|j\rangle\}_{j=1}^m$ is an orthonormal basis of $\mathbb{C}^m$. It follows that $V^\dagger V=\Pi_S$, the orthogonal projector onto $S$. Define $\tilde\Phi:A\to\mathbf{H}_m(\mathbb{C})$ via $\tilde\Phi(e):=V\Phi(e)V^\dagger$ and $\tilde\Psi:A^*\to\mathbf{H}_m(\mathbb{C})$ as $\tilde\Psi(\omega):=V\Psi(\omega)V^\dagger$. Then
\begin{align}
\tr(\tilde\Psi(\omega)\tilde\Phi(e))=\tr(\Pi_S \Psi(\omega)\Pi_S\Phi(e))=\tr(\Psi(\omega)\Phi(e)),\nonumber
\end{align}
and the maps $\tilde\Phi$ and $\tilde\Psi$ define an $\varepsilon$-embedding of $\mathcal{A}$ into $\mathcal{Q}_m$. 
This contradicts the assumption that the $\varepsilon$-embedding of $\mathcal{A}$ into $\mathcal{Q}_n$ is minimal, and hence $m=n$, and so $\Psi(\omega)$ has full rank.
\end{proof}
The following lemmas concern the interplay of the embedding projectors with the Jordan product $x\sjord{}y:=\frac{1}{2}(xy\!+\!yx)$ on the vector space of Hermitian matrices of $\mathcal{Q}_n$.
\begin{lemma}
\label{lem:EffSt}
For every minimal exact unital embedding of an unrestricted GPT $\mathcal{A}$ into finite-dimensional quantum theory $\mathcal{Q}_n=:\mathcal{B}$, the corresponding projector $P=\Phi\Psi\conj$ satisfies
\begin{align}
   P\!\left(x\sjord y\right)=x\sjord P\left(y\right) \mbox{ for all }x\in\Phi(A),\, y\in B.
\end{align}
Hence, $\Phi(A)\equiv P(B)$ is closed under the Jordan product $\sjord$, and $\left(P\!\left(B\right),\, \sjord\right)$ is a special Euclidean Jordan algebra.
\end{lemma}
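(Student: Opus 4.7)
The plan is to exploit the full-rank state guaranteed by Lemma~\ref{lem:QMinimal} together with two applications of Kadison's inequality for the positive unital map $P$. By Lemmas~\ref{lem:EmbeddingFacts} and~\ref{lem:UnrestrictedFacts}, $P=\Phi\Psi\conj$ is a positive, unital projection with range $\Phi(A)$, and its dual $P\conj=\Psi\Phi\conj$ is a positive projection onto $\Psi(A\conj)$. By Lemma~\ref{lem:QMinimal}, there is some $\omega\in\Omega_A$ such that $\rho:=\Psi(\omega)$ has full rank in $\Csap{n}$; since $\Phi\conj\Psi=\mathbf{1}_{A\conj}$, this $\rho$ is also fixed by $P\conj$.

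The first step is to show that $P$ fixes squares of elements in its range, i.e.\ $P(x^2)=x^2$ for every $x\in P(B)$. Kadison's inequality for positive unital maps on the JB-algebra $\Csa{n}$ yields $P(x^2)\geq P(x)^2=x^2$, so $c:=P(x^2)-x^2\in B_+$. Using $P\conj(\rho)=\rho$, one computes
\begin{align}
\tr(\rho c)=\tr(\rho P(x^2))-\tr(\rho x^2)=\tr(P\conj(\rho)x^2)-\tr(\rho x^2)=0,
\end{align}
and faithfulness of $\rho$ forces $c=0$. The second step upgrades this to the bimodule identity $P(x\sjord y)=x\sjord P(y)$ for $x\in P(B)$ and $y\in B$. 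Applying Kadison to $x+ty$ with $t\in\reals$ and expanding $(x+ty)^2=x^2+2t(x\sjord y)+t^2 y^2$, the inequality $P((x+ty)^2)\geq P(x+ty)^2$ simplifies (using $P(x^2)=x^2$ and $P(x+ty)=x+tP(y)$) to
\begin{align}
2t\bigl[P(x\sjord y)-x\sjord P(y)\bigr]+t^2\bigl[P(y^2)-P(y)^2\bigr]\geq 0.
\end{align}
The coefficient of $t^2$ is nonnegative by Kadison applied to $y$; requiring the inequality in $B_+$ for both positive and negative $t$, dividing by $|t|$, and letting $t\to 0^\pm$ then forces the linear coefficient to vanish. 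Specializing $y\in P(B)$ so that $P(y)=y$ gives $x\sjord y=P(x\sjord y)\in P(B)$, so $P(B)=\Phi(A)$ is a unital Jordan subalgebra of $(\Csa{n},\sjord)$, hence special. The Hilbert-Schmidt inner product satisfies $\langle x\sjord y,z\rangle=\langle y,x\sjord z\rangle$ on $\Csa{n}$ and restricts to $\Phi(A)$, making it a special Euclidean Jordan algebra.

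The main obstacle is the first step: securing a faithful $P\conj$-fixed state. Without it, Kadison's inequality delivers only the one-sided bound $P(x^2)\geq x^2$, which is too weak to extract a Jordan-module structure; the entire argument then collapses. This is precisely the role played by the hypothesis of minimality, through Lemma~\ref{lem:QMinimal}. One subtlety worth verifying is that Kadison's inequality applies to $P$: since $\Csa{n}$ is a JB-algebra and $P$ is positive and unital, the standard self-adjoint form $P(a^2)\geq P(a)^2$ applies directly. The remaining manipulations are routine polarization and linear-algebraic bookkeeping.
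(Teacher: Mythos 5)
Your proof is correct and follows essentially the same route as the paper's: first use the full-rank $P\conj$-fixed state from Lemma~\ref{lem:QMinimal} together with Kadison's inequality to get $P(x^2)=x^2$ on the range, then polarize via $x+ty$ and kill the term linear in $t$. The only cosmetic difference is how you extract the vanishing of that linear term (dividing by $|t|$ and letting $t\to 0^\pm$ versus the paper's eigenvector argument for $tv+w\geq 0$); both are valid.
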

\begin{proof}
First, we show that all $x\in\Phi(A)$ satisfy $P(x^2)=x^2$. Due to Lemma~\ref{lem:QMinimal}, there exists some full-rank fixed state$\rho=\Psi(\omega)$, and due to Lemma~\ref{lem:EmbeddingFacts}, we have $P^*(\rho)=\Psi\Phi^*\Psi(\omega)=\Psi(\omega)=\rho$; similar argumentation shows $P(x)=x$.
Hence $\tr[P(x^2)\rho]=\tr[x^2 P^*(\rho)]=\tr(x^2\rho)$,
 such that $\tr(\Delta\rho)=0$ for $\Delta:=P(x^2)-x^2$. 
According to Lemma~\ref{lem:UnrestrictedFacts}, $P$ is positive, therefore Kadison's inequality~\cite{Kadison52} implies $P(z^2)\geq P(z)^2$ for all $z$, and thus $\Delta\geq 0$.
Thus, $\tr(\Delta\rho)=0$ is only possible if $\Delta=0$ since $\rho$ is positive definite.

Now let $x\in\Phi(A)$, $y\in B$, and $t\in\mathbb{R}$ be arbitrary, and set $z:=tx+y$. We thus have $x=P(x)$ and $x^2=P(x^2)$.
Since $P$ is positive (Lemma~\ref{lem:UnrestrictedFacts}) and unital (Lemma~\ref{lem:EmbeddingFacts}), Kadison's inequality gives
$2t P(x\sjord y)+P(y^2)\geq 2 t x\sjord P(y)+P(y)^2$
for all $t\in\mathbb{R}$. 
But if $v=v^\dagger$ and $w=w^\dagger$ such that $tv+w\geq 0$ for all $t\in\mathbb{R}$, then $v=0$ (to see this, multiply from left and right by eigenvectors of $v$). 
Thus, the terms linear in $t$ must be equal, and so $P(x\sjord y)=x\sjord P(y)$.

If $x,y\in P(B)$ then $x\sjord y=x\sjord P(y)=P(x\sjord y)\in P(B)$, and hence $\left(P\!\left(B\right), \, \sjord\right)$ is a Jordan subalgebra of $\Csa{n}$, inheriting the properties of being special and Euclidean from $\Csa{n}$.
\end{proof}
Next we show that the image of the quantum effect cone under the positive projection $P$ is the cone of squares of the corresponding Jordan algebra:
\begin{lemma}
\label{lem:ConeOfSquares}
For every minimal exact unital embedding of an unrestricted GPT $\mathcal{A}$ into finite-dimensional quantum theory $\mathcal{Q}_n=:\mathcal{B}$, we have
\begin{align}
P\!\left(B_+\right) & = \{x^2\,\,|\,\,x\in P\!\left(B\right)\}.
\end{align}
\end{lemma}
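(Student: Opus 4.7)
The plan is to prove the two inclusions separately, using the results already established in Lemmas~\ref{lem:EffSt} and~\ref{lem:UnrestrictedFacts}, together with the spectral functional calculus for Hermitian matrices.

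For the inclusion ``$\supseteq$'', suppose $x \in P(B)$. Since $x$ is a Hermitian matrix, $x^2 \geq 0$ as an element of $\Csa{n}$, so $x^2 \in B_+$. Moreover, Lemma~\ref{lem:EffSt} asserts that $P(B)$ is a Jordan subalgebra, so $x \sjord x = x^2 \in P(B)$. Combining these, $x^2 \in P(B) \cap B_+$, which by Lemma~\ref{lem:UnrestrictedFacts}(ii) (noting $\Phi(A) = P(B)$ from Lemma~\ref{lem:EmbeddingFacts}(ii)) equals $P(B_+)$. Hence $x^2 \in P(B_+)$.

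For the reverse inclusion ``$\subseteq$'', let $y \in P(B_+)$. Again by Lemma~\ref{lem:UnrestrictedFacts}(ii), this means $y \in P(B)$ and $y \geq 0$ as a positive semidefinite matrix in $\Csa{n}$. The strategy is to exhibit a matrix square root $y^{1/2}$ that lies in $P(B)$; then $y = y^{1/2}\sjord y^{1/2}$ expresses $y$ as a Jordan square of an element of $P(B)$. To do so, observe that the unit $u_B$ lies in $P(B)$ because the embedding is unital (Lemma~\ref{lem:EmbeddingFacts}(ii) and the hypothesis $P(u_B) = u_B$). Since $P(B)$ is closed under the Jordan product and $y \sjord y = y^2$ (powers in the matrix sense), induction gives $y^k \in P(B)$ for every integer $k \geq 0$, hence $p(y) \in P(B)$ for every real polynomial $p$.

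Now $y$ is a positive semidefinite matrix with spectrum $\sigma(y) \subseteq [0,\|y\|]$. By the Weierstrass approximation theorem, there is a sequence of real polynomials $p_k$ converging uniformly on $\sigma(y)$ to the function $t \mapsto \sqrt{t}$; by the spectral theorem for Hermitian matrices this implies $p_k(y) \to y^{1/2}$ in operator norm, where $y^{1/2}$ is the unique positive semidefinite square root of $y$. Because $P(B)$ is a finite-dimensional (hence norm-closed) subspace of $\Csa{n}$ and each $p_k(y) \in P(B)$, we conclude $y^{1/2} \in P(B)$. Therefore $y = (y^{1/2})^2 = y^{1/2} \sjord y^{1/2}$ with $y^{1/2} \in P(B)$, which gives $y \in \{x^2 \,|\, x \in P(B)\}$ and completes the proof.

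The main subtlety is the ``$\subseteq$'' direction: one must be sure that the matrix square root of $y$ really lies inside the Jordan subalgebra $P(B)$, rather than only in $\Csa{n}$. The functional-calculus argument above handles this, using crucially that $P(B)$ is closed under (associative) powers of any single element — a consequence of $P(B)$ being a Jordan subalgebra containing $u_B$ — together with finite-dimensional closedness.
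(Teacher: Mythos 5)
Your proof is correct, and the harder inclusion is handled by a genuinely different argument from the paper's. For ``$\supseteq$'' both you and the paper rely on \cref{lem:EffSt}; the paper shows directly that $y:=x^2$ satisfies $y=x\sjord P(x)=P(x\sjord x)=P(y)$ and hence $y\in P(B_+)$, while you route the same fact through the identity $P(B)\cap B_+=P(B_+)$ from \cref{lem:UnrestrictedFacts}(ii) --- equivalent in substance. For ``$\subseteq$'' the paper identifies $B$ with $B^*$ via the trace form, notes that $\langle a\sjord b,c\rangle=\langle a,b\sjord c\rangle$ makes $(P(B),\sjord)$ Euclidean, and then invokes the self-duality of the cone of squares of a Euclidean Jordan algebra (Faraut--Koranyi III.2): since $\tr(x^2y)\geq 0$ for all $x\in P(B)$, the element $y$ lies in $\mathcal{J}_+^*=\mathcal{J}_+$. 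You instead construct the square root of $y$ inside $P(B)$ explicitly: unitality puts $u_B\in P(B)$, closure under $\sjord$ gives $y^k\in P(B)$ for all $k$ (using that associative powers of a single Hermitian matrix commute, so $y^j\sjord y^k=y^{j+k}$), Weierstrass approximation of $t\mapsto\sqrt t$ on $\sigma(y)$ plus the spectral theorem gives $p_k(y)\to y^{1/2}$ in norm, and norm-closedness of the finite-dimensional subspace $P(B)$ places $y^{1/2}$ in $P(B)$. Your version is more elementary and self-contained (no appeal to the structure theory of symmetric cones), at the cost of being longer; the paper's is shorter but leans on a cited structural theorem. Both are valid, and your use of unitality and of positivity of $y$ (via $P(B_+)\subseteq B_+$) is correctly accounted for.
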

\begin{proof}
The right-hand side equals the cone of squares $\mathcal{J}_+$ of $\left(P\!\left(B\right), \sjord\right)$ due to Lemma~\ref{lem:EffSt}.
To show $\mathcal{J}_+\subseteq P(B_+)$, let $y:=x^2$ with $x\in P(B)$. 
Then $0\leq y=x\sjord P(x)=P(x\sjord x)=P(y)$ (using Lemma~\ref{lem:EffSt}), and thus $y\in P(B_+)$. 

Meanwhile, using $\inn{x}{y} = \tr(xy)$ to identify $B$ with $B\conj$, we have $\langle a\sjord b,c\rangle=\langle a,b\sjord c\rangle$ for all $a,b,c$, and in particular for all $a,b,c\in P(B)$. 
Consequently, the cone $\mathcal{J}_+$ is self-dual under this inner product~\cite[III.2]{FarautK94} (i.e.,\ $\mathcal{J}_+=\mathcal{J}_+\conj$).
Let $y\in P(B_+)$.
Then, for all $x\in P(B)$, $\inn{x^2}{y} = \tr(x^2 y)\geq 0$ since $x^2\geq 0$ and $y\geq 0$, and thus $y\in \mathcal{J}_+^* \equiv \mathcal{J}_+$, and thus $P(B_+)\subseteq \mathcal{J}_+$.
Hence, $P\!\left(B_+\right) = \mathcal{J_+} = \{x^2\,\,|\,\,x\in P\!\left(B\right)\}$.
\end{proof}
This allows us to classify all unrestricted GPTs that have an exact univalent simulation by finite-dimensional quantum theory:
\begin{theorem}
\label{thm:Simulate}
An unrestricted GPT can be exactly embedded into finite-dimensional quantum theory if and only if it corresponds to a {\em special Euclidean Jordan algebra}.
\end{theorem}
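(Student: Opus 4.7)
The plan is to combine the structural lemmas just established with the defining property of ``special'' to obtain both implications of the equivalence.

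For the forward direction, I would suppose $\mathcal{A}$ is an unrestricted GPT admitting an exact embedding $(\Phi,\Psi)$ into some $\mathcal{Q}_n$. Using \cref{lem:Unital} with $\varepsilon=0$ I may assume the embedding is unital, and choosing the smallest ambient dimension $n$, \cref{lem:QMinimal} ensures that some $\Psi(\omega)$ has full rank. Then \cref{lem:EffSt} identifies $\mathcal{J} := \Phi(A) = P(B)$ as a special Euclidean Jordan subalgebra of $\Csa{n}$ with inherited Jordan product $\sjord$, and \cref{lem:ConeOfSquares} shows the positive cone $P(B_+)$ coincides with the Jordan cone of squares $\mathcal{J}_+$. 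To complete this direction I would verify that the embedding realises a GPT isomorphism from $\mathcal{A}$ onto the Jordan-algebra GPT $(\mathcal{J},\Omega_{\mathcal{J}},E_{\mathcal{J}})$: by \cref{lem:EmbeddingFacts}, $\Phi$ is injective, and by \cref{lem:UnrestrictedFacts}, $\Phi(A_+) = \Phi(A)\cap B_+ = \mathcal{J}_+$, identifying the effect cones. Because both $\mathcal{A}$ and the Jordan GPT are unrestricted, their state cones are the duals of the respective effect cones; since these are identified by $\Phi$ and normalization is matched by unitality ($u_A\mapsto u_\mathcal{J}$), the state cones are automatically identified by the corresponding adjoint map, which coincides with $\Psi$ via \cref{lem:EmbeddingFacts} and the self-duality of $\mathcal{J}_+$ under Hilbert--Schmidt.

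For the converse, I would take $\mathcal{A}$ to be the GPT of a special Euclidean Jordan algebra $\mathcal{J}$. By the definition of ``special'', there exists a unit-preserving Jordan-algebra monomorphism $\iota:\mathcal{J}\hookrightarrow\Csa{n}$ for some $n$. Setting $\Phi:=\iota$ sends squares to squares, hence carries the effect cone $\mathcal{J}_+=\{x^2\,|\,x\in\mathcal{J}\}$ into $\Csap{n}=B_+$. For $\Psi$, I would take $\iota$ rescaled by a positive constant on each simple summand of $\mathcal{J}$: the uniqueness, up to a positive scalar, of the Euclidean inner product on a simple Euclidean Jordan algebra ensures that the Hilbert--Schmidt inner product, restricted to $\iota(\mathcal{J})$, is proportional on each simple block to the canonical Jordan trace form. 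Choosing the scaling constants to cancel these proportionality factors yields $\inn{\omega}{e}_\mathcal{J} = \tr(\Psi(\omega)\Phi(e))$, so probabilities match exactly and one obtains an exact unital embedding $\mathcal{A}\hookrightarrow\mathcal{Q}_n$.

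The main obstacle, as I see it, is the state-side bookkeeping in the forward direction: one must carefully thread the natural identifications $A \leftrightarrow A\conj$, $\mathcal{J}\leftrightarrow\mathcal{J}\conj$ (via Hilbert--Schmidt), and $B\leftrightarrow B\conj$, checking that they are compatible with $\Phi$, $\Psi$, and $P=\Phi\Psi\conj$, and that \cref{lem:UnrestrictedFacts} together with self-duality of $\mathcal{J}_+$ genuinely produces a cone isomorphism on \emph{states} and not merely on effects. The converse is technically lighter but rests on the classification-theoretic input that each simple Euclidean Jordan algebra admits an essentially unique Euclidean inner product, so that block-wise rescaling of $\Psi$ suffices to match probabilities.
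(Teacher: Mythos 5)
Your forward direction is essentially the paper's own argument: pass to a minimal unital exact embedding (\cref{lem:Unital}, \cref{def:Min}), invoke \cref{lem:QMinimal}, \cref{lem:EffSt} and \cref{lem:ConeOfSquares} to identify $\Phi(A)=P(B)$ as a special Euclidean Jordan subalgebra whose cone of squares is $P(B_+)=\Phi(A)\cap B_+=\Phi(A_+)$, and conclude that $\mathcal{A}$ is order-isomorphic to the associated Jordan GPT; the paper states exactly this and, like you, leaves the state-side identification to unrestrictedness and self-duality, so your extra bookkeeping is a welcome but not divergent elaboration. Where you genuinely depart from the paper is the converse. The paper simply cites the Jordan--von Neumann--Wigner classification and the explicit embeddings constructed in \cref{sec:embed_examples} (Tsirelson's spin-factor embedding, the symplectic embedding of quaternionic theory, inclusion of real and classical theory) together with direct sums. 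You instead build the embedding abstractly: take the Jordan monomorphism $\iota:\mathcal{J}\hookrightarrow\Csa{n}$ guaranteed by speciality, let $\Phi=\iota$ (positivity because squares go to squares), and fix $\Psi$ by rescaling $\iota$ blockwise so that the pulled-back Hilbert--Schmidt form --- which is associative and positive definite, hence a positive multiple of the given Euclidean inner product on each simple ideal, and vanishes on cross terms since distinct ideals Jordan-annihilate each other --- reproduces $\inn{\omega}{e}$ exactly. This is a clean argument that avoids the case-by-case constructions and uses only the decomposition into simple ideals plus essential uniqueness of associative inner products; the paper's route, by contrast, yields concrete matrix realizations and explicit Hilbert-space dimensions. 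One small point you should patch: the monomorphism provided by the definition of ``special'' need not be unital, so you must first compress to the corner $p\Csa{n}p\cong\Csa{m}$ determined by the projection $p=\iota(u_{\mathcal{J}})$ (using $2\iota(x)=\iota(x)p+p\iota(x)\Rightarrow\iota(x)=p\iota(x)p$) before asserting unit preservation; with that, your construction goes through.
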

\begin{proof}
For the {\em only if} direction, we can choose a minimal embedding $\Phi: A\to \Csa{n}$, and Lemma~\ref{lem:Unital} shows that we can choose it to be unital.
From Lemma~\ref{lem:ConeOfSquares}, it follows that $\Phi(A_+)=\{x^2\,\,|\,\,x\in\Phi(A)\}$, hence $\mathcal{A}$ is order-isomorphic to the GPT of the special Euclidean Jordan algebra $(P(B),\sjord)$.
For the {\em if} direction, such algebras can be exhaustively listed~\cite{JordanvNW34},
 and appropriate embeddings exist for these~\cite{Tsirelson87,BarnumGW20,KleinmannOSW13} and their direct sums.
\end{proof}

In other words, the examples in \cref{sec:embed_examples} and their direct sums are in fact the {\em only} unrestricted GPTs that can be exactly embedded into quantum theory.

\subsection{Decoherence, noise, and coarse-grainings}
\label{sec:Decoherence}
Suppose we can prepare any state and measure any effect of $n$-level quantum theory $\mathcal{Q}_n=(\mathbf{H}_n(\mathbb{C}),\Omega_n,E_n)$, but there is some unavoidable noise, described by a trace-preserving quantum channel $\mathcal{N}$, happening in between the preparation and the measurement. 
Let us assume that $\mathcal{N}$ is ``nonsingular'', in the sense that its image has full dimension, i.e.\ $\mathcal{N}(\mathbf{H}_n(\mathbb{C}))=\mathbf{H}_n(\mathbb{C})$. 
The states and effects in this situation will be described by an effective GPT
\begin{align}
   \mathcal{Q}_n^{\mathcal{N}}:=(\mathbf{H}_n(\mathbb{C}),\mathcal{N}(\Omega_n),E_n).
\end{align}
That is, the effective set of states is not $\Omega_n$, but the ``noisy'' set of states $\mathcal{N}(\Omega_n)$. 
Since we assume that this set of states still spans all of $\mathbf{H}_n(\mathbb{C})$, all effects in $E_n$ can still be statistically distinguished from each other by the values they take on the states, which is necessary for $\mathcal{Q}_n^{\mathcal{N}}$ to be a valid GPT. 
\begin{lemma}
\label{lem:Restricted}
Quantum theory under nonsingular non\-unitary noise $\mathcal{N}$, i.e.\ $\mathcal{Q}_n^{\mathcal{N}}$, is a \emph{restricted} GPT which can be embedded exactly into $\mathcal{Q}_n$.
\begin{proof}
Choosing $\Phi$ and $\Psi$ as the identity maps defines the corresponding embedding. If $\mathcal{D}$ is not unitary, then $\mathcal{D}(\Omega_n)\subsetneq \Omega_n$, and thus the resulting set of states is not maximal given the set of effects, i.e.\ $\mathcal{Q}_n^{\mathcal{N}}$ is restricted.
\end{proof}
\end{lemma}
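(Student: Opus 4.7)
The plan is to handle the two assertions separately: first construct the embedding, then verify restrictedness. Since $\mathcal{Q}_n^{\mathcal{N}}$ and $\mathcal{Q}_n$ share the same underlying vector space $\Csa{n}$, I would simply take $\Phi = \Psi = \mathrm{id}$ (identifying $B$ with $B^*$ via the Hilbert--Schmidt inner product). Checking \cref{def:EmbeddingMap}: condition (i) holds because $\Phi(E_A) = E_n = E_B$ by definition of the effect sets, while $\Psi(\Omega_A) = \mathcal{N}(\Omega_n) \subseteq \Omega_n$ since $\mathcal{N}$ is CPTP; condition (ii) reduces to $\tr(\rho e) = \tr(\rho e)$, which trivially holds with $\varepsilon = 0$. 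The embedding is even unital, since $\mathrm{id}$ sends $\id_n$ to $\id_n$.

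For the restrictedness claim, I would appeal to the definition from \cref{SubsecGPT}: an unrestricted GPT has state set equal to the full dual of its effect cone. Since the effect cone of $\mathcal{Q}_n^{\mathcal{N}}$ coincides with that of $\mathcal{Q}_n$, and $\mathcal{Q}_n$ is strongly self-dual, the state space consistent with the no-restriction hypothesis would be exactly $\Omega_n$. Hence restrictedness reduces to the strict inclusion $\mathcal{N}(\Omega_n) \subsetneq \Omega_n$, and the whole content of the lemma becomes the implication: if $\mathcal{N}$ is nonsingular, CPTP and nonunitary, then $\mathcal{N}(\Omega_n) \subsetneq \Omega_n$.

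This last implication is the step I expect to require the most care. I would prove it by contraposition: suppose $\mathcal{N}(\Omega_n) = \Omega_n$. Nonsingularity makes $\mathcal{N}$ a linear bijection of $\Csa{n}$, and trace-preservation restricts it to an affine bijection of the unit-trace hyperplane carrying $\Omega_n$ onto itself. An affine bijection of a convex body permutes its extreme points, so $\mathcal{N}$ must send pure states bijectively to pure states. By Wigner's theorem applied to the induced map on projective space, such an action has the form $\rho \mapsto U \rho U^\dagger$ or $\rho \mapsto U \rho^T U^\dagger$ for some unitary $U$; complete positivity excludes the latter, since transposition is positive but not completely positive for $n \geq 2$. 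Hence $\mathcal{N}$ would have to be unitary, contradicting the hypothesis. Everything else is routine bookkeeping.
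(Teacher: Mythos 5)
Your proof follows the same route as the paper's: the embedding is given by the identity maps (with positivity of $\Psi$ on states coming from $\mathcal{N}(\Omega_n)\subseteq\Omega_n$), and restrictedness is reduced, via strong self-duality of $\mathcal{Q}_n$, to the strict inclusion $\mathcal{N}(\Omega_n)\subsetneq\Omega_n$. The only difference is that the paper simply asserts this strict inclusion for nonsingular nonunitary channels, whereas you supply the missing justification (affine bijections of $\Omega_n$ permute pure states, then Wigner/Kadison plus complete positivity force unitarity) --- a worthwhile addition, since that implication is the only nontrivial content of the lemma.
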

For nonsingular nonunitary qubit channels $\mathcal{N}$, the Bloch ball of states is effectively mapped to a smaller ellipsoid inside the ball~\cite{NielsenC00}, which represents the set of states of the resulting GPT $\mathcal{Q}_2^{\mathcal{D}}$. 
Lemma~\ref{lem:Restricted} tells us that these naturally occurring GPTs admit of univalent quantum simulations --- in this sense, noise does not introduce the necessity of multivalence for explaining the statistics.

We do not currently know whether {\em all} singular quantum channels (i.e.\ channels whose image is a proper subspace of $\mathbf{H}_n(\mathbb{C})$) lead to effective GPTs that are embeddable. 
However, one special class of channels of particular interest does: \emph{complete decoherence processes} and \emph{coarse-graining processes} $\mathcal{D}$.
Intuitively, complete decoherence is a relaxation process that affects a physical system in the long time limit (in practice, often after a very short time) such that ``decohering twice is the same as decohering once'', i.e.\ $\mathcal{D}^2=\mathcal{D}$. 
For example, the process that removes the off-diagonal elements of a density matrix is of this form.

Similarly, coarse-graining processes are described by maps of this kind. 
Recall the example of \cref{eq:Coarsegraining} for the case of classical probability theory: we can think of the bit $\mathcal{A}$ as arising from two bits $\mathcal{B}$ by the map 
\begin{align}
P:=\Phi\Psi^*=\left(\begin{array}{cccc} 1/2 & 1/2 & 0 & 0 \\ 1/2 & 1/2 & 0 & 0 \\ 0 & 0 & 1/2 & 1/2 \\ 0 & 0 & 1/2 & 1/2\end{array}\right),
\end{align}
which randomizes the four configurations in groups of two, and $P^2=P$.

Complete decoherence and coarse-graining processes cannot introduce multivalence: they lead to effective GPTs which are exactly embeddable, i.e.\ univalently simulable. 
We can prove this in a more general case where the underlying GPT $\mathcal{A}=(A,\Omega_A,E_A)$ is not necessarily quantum.  
In this case, a complete decoherence or coarse-graining process is a linear map that is positive and normalization-preserving ($\mathcal{D}(\Omega_A)\subset\Omega_A$), has positive adjoint ($\mathcal{D}^*(E_A)\subset E_A$ such that valid effects are mapped to valid effects in the Heisenberg picture), and is idempotent ($\mathcal{D}^2=\mathcal{D}$)~\cite{RichensSS17}.
Then we get:
\begin{lemma}
\label{LemDecoherence}
Let $\mathcal{A}=(A,\Omega_A,E_A)$ be any GPT (for example quantum theory), and $\mathcal{D}:A^*\to A^*$ a complete decoherence or coarse-graining process. 
The resulting effective GPT
\begin{align}
   \mathcal{A}^{\mathcal{D}}=(\mathcal{D}^*(A),\mathcal{D}(\Omega_A),\mathcal{D}^*(E_A))
   \label{DefDecoherence}
\end{align}
can be embedded exactly into $\mathcal{A}$, i.e.\ admits of a univalent simulation by $\mathcal{A}$. Moreover, if $\mathcal{A}$ is unrestricted then $\mathcal{A}^{\mathcal{D}}$ is also unrestricted.
\begin{proof}
First we have to check that the GPT $\mathcal{A}^{\mathcal{D}}$ is well-defined: we need to be able to understand $\mathcal{D}(\Omega_A)$ as a subset of the dual space of $\mathcal{D}^*(A)$. For this, it is sufficient to show that $\mathcal{D}^*(A)$ can naturally be understood as the dual space of $\mathcal{D}(A^*)$. Now, to every linear functional on $\mathcal{D}(A^*)$, there is some (in general non-unique) vector $e_A\in A$ such that the functional equals $\omega'_A\mapsto (\omega'_A,e_A)$. Using idempotence of $\mathcal{D}$, we obtain
\begin{align}
   (\omega'_A,e_A)=(\mathcal{D}(\omega'_A),e_A)=(\omega'_A,\mathcal{D}^*(e_A)).
\end{align}
Hence the vectors in $\mathcal{D}^*(A)$ reproduce all linear functionals on $\mathcal{D}(A^*)$ under the pairing $(\cdot,\cdot)$. Thus, $\mathcal{A}^{\mathcal{D}}$ is well-defined.
We can now simply define $\Phi:\mathcal{D}(A^*)\to A$ and $\Psi:\mathcal{D}(A^*)\to A^*$ as the inclusion maps. Since $\mathcal{D}$ and $\mathcal{D}^*$ are positive and normalization-preserving, so are $\Psi$ and $\Phi$, and $(\Psi(\omega'_A),\Phi(e'_A))=(\omega'_A,e'_A)$ for all $\omega'_A\in\mathcal{D}(\Omega_A),e'_A\in\mathcal{D}^*(E_A)$ by construction.

Finally, suppose that $\mathcal{A}$ is unrestricted, and that $e\in\mathcal{D}^*(A)$ is any vector such that $(\omega_A^{\mathcal{D}},e)\geq 0$ for all $\omega_A^{\mathcal{D}}\in\mathcal{D}(\Omega_A)$. 
Then, we have for all $\omega_A\in\Omega_A$:
\begin{align}
   0\leq (\mathcal{D}(\omega_A),e)=(\omega_A,\mathcal{D}^*(e))=(\omega_A,e),
\end{align}
hence $e\in A_+$, and thus $e=\mathcal{D}^*(e)\in\mathcal{D}^*(A_+)$, which is the effect cone of $\mathcal{A}^{\mathcal{D}}$.
\end{proof}
\end{lemma}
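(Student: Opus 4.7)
The plan is to exhibit a very concrete candidate embedding, namely the inclusion maps, and then check the three pieces of structure that need verification: (a) that $\mathcal{A}^{\mathcal{D}}$ is a well-defined GPT in the first place, (b) that the inclusions constitute an exact embedding in the sense of Definition~\ref{def:EmbeddingMap}, and (c) the unrestrictedness addendum.

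First I would address well-definedness, which is the subtlest point. The effect vector space of $\mathcal{A}^{\mathcal{D}}$ is declared to be $\mathcal{D}^*(A)\subseteq A$, whereas the states are drawn from $\mathcal{D}(\Omega_A)\subseteq \mathcal{D}(A^*)\subseteq A^*$. For this to yield a GPT, I need $\mathcal{D}^*(A)$ to act as (a quotient of) the dual of $\mathcal{D}(A^*)$ under the inherited pairing. The key lever is idempotence: for any $\omega\in A^*$ and $e\in A$,
\begin{equation}
(\mathcal{D}(\omega),e)=(\omega,\mathcal{D}^*(e))=(\mathcal{D}(\omega),\mathcal{D}^*(e)),
\end{equation}
so two elements of $A$ induce the same functional on $\mathcal{D}(A^*)$ iff their images under $\mathcal{D}^*$ agree. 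This identifies $\mathcal{D}^*(A)$ naturally with the dual of $\mathcal{D}(A^*)$, so the triple $(\mathcal{D}^*(A),\mathcal{D}(\Omega_A),\mathcal{D}^*(E_A))$ is a legitimate GPT with the inherited pairing.

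Second, I take $\Phi:\mathcal{D}^*(A)\hookrightarrow A$ and $\Psi:\mathcal{D}(A^*)\hookrightarrow A^*$ to be the inclusion maps. Both are linear by construction; positivity and normalization-preservation follow directly from the hypotheses $\mathcal{D}^*(E_A)\subseteq E_A$ and $\mathcal{D}(\Omega_A)\subseteq\Omega_A$ built into the definition of a coarse-graining. The probability-matching condition $(\Psi(\omega'),\Phi(e'))=(\omega',e')$ is trivial for inclusions. So $(\Phi,\Psi)$ is an exact embedding, and by Lemma~\ref{lem:Linear} this gives the desired univalent simulation.

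Third, for the unrestrictedness claim, suppose $\mathcal{A}$ is unrestricted and let $e\in\mathcal{D}^*(A)$ satisfy $(\omega^{\mathcal{D}},e)\geq 0$ for every $\omega^{\mathcal{D}}\in\mathcal{D}(\Omega_A)$. Then for every $\omega_A\in\Omega_A$, using $e=\mathcal{D}^*(e)$,
\begin{equation}
(\omega_A,e)=(\omega_A,\mathcal{D}^*(e))=(\mathcal{D}(\omega_A),e)\geq 0,
\end{equation}
so unrestrictedness of $\mathcal{A}$ places $e$ in $A_+$, and applying $\mathcal{D}^*$ (which fixes $e$) gives $e\in\mathcal{D}^*(A_+)$, the effect cone of $\mathcal{A}^{\mathcal{D}}$. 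The only genuine subtlety is the first step: once one convinces oneself that $\mathcal{D}^*(A)$ really is the right dual to $\mathcal{D}(A^*)$ (so that $\mathcal{A}^{\mathcal{D}}$ is a bona fide GPT and the inclusion maps are well-defined between the stated spaces), everything else is a one-line consequence of the fact that $\Phi$ and $\Psi$ are inclusions. I expect this bookkeeping around the two "copies" of the pairing to be the main place a reader could get confused, so I would make the identification via idempotence very explicit before invoking it.
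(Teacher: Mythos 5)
Your proposal is correct and follows essentially the same route as the paper's proof: well-definedness of $\mathcal{A}^{\mathcal{D}}$ via the idempotence identity $(\mathcal{D}(\omega),e)=(\omega,\mathcal{D}^*(e))$, the inclusion maps as the exact embedding, and the identical chain of equalities for the unrestrictedness addendum. Your extra observation that two elements of $A$ induce the same functional on $\mathcal{D}(A^*)$ iff their $\mathcal{D}^*$-images agree is a slightly more explicit version of the paper's duality identification, but it is not a different argument.
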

Lemmas~\ref{lem:Restricted} and~\ref{LemDecoherence}, together with the earlier Lemma~\ref{LemIterate} on the transitivity of embeddings, show that multivalence, our generalization of contextuality, cannot result from a natural class of physical processes. 
This hints at a \emph{resource-theoretic} perspective on this generalization of contextuality, an idea that has already been successfully explored for standard notions of contextuality~\cite{HowardWVE14,BermejoVegaDBOR17,Amaral19}.

In the case of quantum theory, the projections $P$ that have appeared in \cref{SubsecAllExact} (for example, in Lemma~\ref{lem:EffSt}) are complete decoherence processes, yielding the special Euclidean Jordan algebras $\mathcal{J}$ as the resulting effective GPTs $\mathcal{Q}_n^{\mathcal{D}}$ for $\mathcal{D}=P^*$. 
However, only some of them are actually physically realizable, i.e.\ completely positive~\cite{Stinespring55,Choi75,NielsenC00}: those that appear in the case that $\mathcal{J}$ corresponds to standard complex quantum theory with superselection rules (which includes the case of classical probability theory $\mathcal{C}_n$). 
Namely, these are the maps that remove the off-diagonal elements of a density matrix (in the case of $\mathcal{C}_n$) or project onto block matrices (pinching maps~\cite{Bhatia97}). 
In the more general context of operator algebras, these maps are known as \emph{conditional expectations}~\cite{Kadison04}.

For all other special Euclidean Jordan algebras $\mathcal{J}$, the corresponding projector $P$ is positive but not completely positive~\cite{Idel13}.
 And there can be no other completely positive processes that yield the GPTs of such Jordan algebras since the fixed-point sets of quantum channels are known to be ${}\conj$-subalgebras of $M_n(\mathbb{C})$, and thus isomorphic to standard complex quantum theory with superselection rules~\cite[Thm.\ 6.12]{Wolf12} (see also~\cite{Asias02}). 
 This proves the following:
\begin{corollary}
The only GPTs which can result from a physically realizable complete decoherence or coarse-graining process from quantum theory are classical probability theory and standard complex quantum theory with superselection rules.
\end{corollary}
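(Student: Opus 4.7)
The plan is to pin down precisely which special Euclidean Jordan algebras $\mathcal{J}$ can arise as the effective GPT $\mathcal{Q}_n^{\mathcal{D}}$ when $\mathcal{D}$ is additionally required to be completely positive (not merely positive). By Lemma~\ref{LemDecoherence}, $\mathcal{Q}_n^{\mathcal{D}}$ is an unrestricted GPT that embeds exactly into $\mathcal{Q}_n$, so Theorem~\ref{thm:Simulate} already forces it to correspond to some special Euclidean Jordan algebra. The only work is to show that among the Jordan algebras allowed by Theorem~\ref{thm:Simulate}, the ones realized by a CP idempotent map are exactly the direct sums $\bigoplus_i \mathbf{H}_{n_i}(\mathbb{C})$ of complex Hermitian blocks.

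First I would translate to the Heisenberg picture: $\mathcal{D}$ being CP, trace-preserving and idempotent is equivalent to $\mathcal{D}^* : \mathbf{H}_n(\mathbb{C}) \to \mathbf{H}_n(\mathbb{C})$ being CP, unital and idempotent, so its complex-linear extension is a unital CP idempotent on $M_n(\mathbb{C})$. The effective Jordan algebra is the image $\mathcal{D}^*(\mathbf{H}_n(\mathbb{C}))$, which by idempotence coincides with the fixed-point set of $\mathcal{D}^*$. The main structural input is the classical theorem cited in the paper (Wolf~\cite{Wolf12}, Thm.~6.12, essentially due to Choi via the multiplicative-domain construction) that the fixed-point set of a unital CP idempotent on $M_n(\mathbb{C})$ is a unital $*$-subalgebra. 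Applying the Artin--Wedderburn theorem, any unital $*$-subalgebra of $M_n(\mathbb{C})$ is unitarily equivalent to $\bigoplus_i M_{n_i}(\mathbb{C})$ sitting block-diagonally inside $M_n(\mathbb{C})$, and restricting to the self-adjoint part gives $\bigoplus_i \mathbf{H}_{n_i}(\mathbb{C})$. This is, by definition, complex quantum theory with superselection rules; the classical case $\mathcal{C}_n$ is recovered when every $n_i = 1$.

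For the converse, I would exhibit the standard pinching maps: given any orthogonal decomposition $\mathbb{C}^n = \bigoplus_i V_i$ with projectors $\Pi_i$ onto $V_i$, the channel $\mathcal{D}^*(X) := \sum_i \Pi_i X \Pi_i$ is manifestly CP (a sum of conjugations by a single Kraus operator each), unital since $\sum_i \Pi_i = \id_n$, and idempotent since $\Pi_i \Pi_j = \delta_{ij}\Pi_i$. Its image is exactly the block-diagonal algebra $\bigoplus_i M_{n_i}(\mathbb{C})$ with $n_i = \dim V_i$, so the effective GPT $\mathcal{Q}_n^{\mathcal{D}}$ is indeed quantum theory with the claimed superselection structure. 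Combined with the ``only if'' direction above, this proves that the achievable effective GPTs are precisely complex quantum theory with superselection rules, and in the fully-diagonal case classical probability theory.

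The main obstacle is purely the invocation of the structure theorem for fixed points of unital CP idempotents: without complete positivity one only gets the Jordan-algebra statement of Lemma~\ref{lem:EffSt} (which already leaves room for the real, quaternionic, spin-factor and exceptional cases), and it is complete positivity that collapses the Jordan algebra to a genuine associative $*$-algebra. Once that theorem is granted, both directions are essentially immediate from Theorem~\ref{thm:Simulate} and an explicit pinching construction.
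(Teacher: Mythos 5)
Your proposal is correct and follows essentially the same route as the paper: both arguments reduce the question to the structure of the image (equivalently, by idempotence, the fixed-point set) of a unital completely positive idempotent on $M_n(\mathbb{C})$, invoke the theorem that such fixed-point sets are unital $*$-subalgebras (the paper's citation of Wolf, Thm.~6.12), and realize the achievable cases via pinching maps. Your write-up is somewhat more explicit about the Heisenberg-picture translation and the Artin--Wedderburn step, but the key lemma and overall strategy coincide with the paper's.
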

This would hint to us why -- even though the exotic quaternionic, real and spin factor GPTs are in principle supported by standard complex quantum theory -- we rarely encounter them in nature.

\subsection{Embeddings into infinite-dimensional \mbox{quantum theory}}
\label{sec:Infinite}
To discuss embeddability into an infinite-dimensional quantum system $\mathcal{Q}_\infty$, corresponding to a separable Hilbert space $\mathcal{H}$, we must first address some topological subtleties of infinite--dimensional quantum theory. 
Unlike $\mathcal{Q}_n$ for finite $n$, the sets of unnormalized states and effects of $\mathcal{Q}_\infty$ cannot be treated as identical.
Rather, the effects of $\mathcal{Q}_\infty$ are elements of the bounded operators $\mathcal{B}(\mathcal{H})$, but the states are trace-class operators, i.e.\ elements of $\mathcal{T}(\mathcal{H})=\{T\in\mathcal{B}(\mathcal{H})\,\,|\,\,{\rm tr}(|T|)<\infty\}$. 
In particular, they are elements of the subsets of self-adjoint elements of both spaces which we denote $\mathcal{T}_{\rm sa}(\mathcal{H})$ and $\mathcal{B}_{\rm sa}(\mathcal{H})$. 
Thus, the linear maps which define an $\varepsilon$-embedding are of the form $\Psi:A^*\to\mathcal{T}_{\rm sa}(\mathcal{H})$ and $\Phi:A\to\mathcal{B}_{\rm sa}(\mathcal{H})$. As we always assume that $A$ is finite-dimensional, both maps are automatically continuous (with respect to the norms $\|T\|_1={\rm tr}\sqrt{T^\dagger T}$ on $\mathcal{T}(\mathcal{H})$ and $\|T\|_\infty:=\sup_{\|\psi\|=1}\|T\psi\|$ on $\mathcal{B}(\mathcal{H})$).
\begin{lemma}
\label{lem:Infinity}
Let $\mathcal{A}$ be a GPT that can be $\varepsilon$-embedded into $\mathcal{Q}_\infty$ for some $\varepsilon\geq 0$.
Then, for every $\delta>0$, there exists some $n\in\mathbb{N}$ such that $\mathcal{A}$ can be $(\varepsilon+\delta)$-embedded into $\mathcal{Q}_n$.
\end{lemma}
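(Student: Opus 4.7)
My plan is to approximate the given $\varepsilon$-embedding $(\Phi,\Psi)$ of $\mathcal{A}$ into $\mathcal{Q}_\infty$ by truncating it to a sufficiently large finite-dimensional subspace. Fix an orthonormal basis of $\mathcal{H}$ and let $P_n$ be the orthogonal projection onto its first $n$ vectors, so that $P_n\uparrow\id$ strongly. Also fix some reference density operator $\sigma_n$ supported in $P_n\mathcal{H}$ (e.g.\ $\sigma_n:=P_n/n$). I would then define truncated maps $\tilde\Phi:A\to\Csa{n}$ and $\tilde\Psi:A\conj\to\Csa{n}$ by
\begin{align*}
\tilde\Phi(e) &:= P_n\Phi(e)P_n, \\
\tilde\Psi(\omega) &:= P_n\Psi(\omega)P_n + \tr\bigl((\id-P_n)\Psi(\omega)\bigr)\,\sigma_n.
\end{align*}
The ``correction'' term involving $\sigma_n$ is what keeps $\tilde\Psi$ simultaneously linear and normalization-preserving; it is linear in $\omega$ because $\omega\mapsto\tr((\id-P_n)\Psi(\omega))$ is itself a linear functional of $\omega$.

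Next, I would verify the embedding axioms. For $e\in E_A$ we have $0\leq\Phi(e)\leq\id$, hence $0\leq\tilde\Phi(e)\leq P_n$, so $\tilde\Phi(e)$ is a valid $n$-dimensional effect. For $\omega\in\Omega_A$ both summands of $\tilde\Psi(\omega)$ are positive semidefinite, and $\tr(\tilde\Psi(\omega))=\tr(P_n\Psi(\omega))+\tr((\id-P_n)\Psi(\omega))=\tr(\Psi(\omega))=1$, so $\tilde\Psi(\omega)\in\Omega_{\mathcal{Q}_n}$. A direct computation gives
\begin{align*}
(\tilde\Psi(\omega),\tilde\Phi(e))-(\Psi(\omega),\Phi(e)) &= \tr\bigl((P_n\Psi(\omega)P_n-\Psi(\omega))\Phi(e)\bigr) \\
&\quad + \tr\bigl((\id-P_n)\Psi(\omega)\bigr)\,\tr(\sigma_n\Phi(e)).
\end{align*}
Using $\|\Phi(e)\|_\infty\leq 1$, $|\tr(\sigma_n\Phi(e))|\leq 1$, the trace-duality inequality $|\tr(XY)|\leq\|X\|_1\|Y\|_\infty$, and $\tr((\id-P_n)\Psi(\omega))=\tr(\Psi(\omega)-P_n\Psi(\omega)P_n)\leq\|\Psi(\omega)-P_n\Psi(\omega)P_n\|_1$, the absolute value of the above error is at most $2\|\Psi(\omega)-P_n\Psi(\omega)P_n\|_1$. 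Combined with the hypothesis via the triangle inequality, this yields
\begin{equation*}
|(\omega,e)-(\tilde\Psi(\omega),\tilde\Phi(e))|\leq\varepsilon+2\sup_{\omega\in\Omega_A}\|\Psi(\omega)-P_n\Psi(\omega)P_n\|_1.
\end{equation*}

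The main technical step, and the chief obstacle, is showing that this supremum tends to zero as $n\to\infty$. Since $A\conj$ is finite-dimensional and $\Omega_A$ compact, $\Psi$ is automatically continuous in trace norm and $K:=\Psi(\Omega_A)$ is a trace-norm compact subset of $\mathcal{T}_{\rm sa}(\mathcal{H})$. For each \emph{fixed} trace-class $T$ one has $\|T-P_nTP_n\|_1\to 0$; this is a standard consequence of $P_n\uparrow\id$ strongly, via $\|T-P_nTP_n\|_1\leq\|(\id-P_n)T\|_1+\|T(\id-P_n)\|_1$ and dominated convergence applied to the singular-value decomposition of $T$. A finite-net argument on the compact set $K$ then upgrades this pointwise convergence to uniform convergence on $K$, so for any $\delta>0$ one can choose $n$ large enough that the supremum is below $\delta/2$. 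This yields the claimed $(\varepsilon+\delta)$-embedding of $\mathcal{A}$ into $\mathcal{Q}_n$. Pointwise convergence alone would not suffice, which is why the compactness of $\Omega_A$ enters essentially at this final step.
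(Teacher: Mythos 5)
Your proposal is correct and follows the same overall strategy as the paper: truncate $(\Phi,\Psi)$ to an $n$-dimensional subspace, repair the trace deficit of $\tilde\Psi$ with a linear correction term, and use compactness of $\Omega_A$ to make the truncation error uniformly small. The differences are in the details of two steps. First, where the paper dumps the leftover weight $\tr[(\id-\Pi_n)\Psi(\omega)]$ onto an extra orthogonal basis vector $|n+1\rangle\langle n+1|$ (so that the correction is invisible to all truncated effects and contributes no error), you place it on a reference state $\sigma_n$ inside the truncated subspace; this costs you a second error term and hence a factor of $2$ in the bound, but both are harmless. Second, and more substantively, the paper obtains uniform smallness of $\|\Psi(\omega)-\Pi_n\Psi(\omega)\Pi_n\|_1$ by applying Dini's theorem to the monotone sequence $f_n(\omega)=\tr(\Pi_n\Psi(\omega)\Pi_n)$ and then converting the trace deficit into a trace-norm bound via the gentle measurement lemma, incurring a square-root loss ($2\sqrt{\varepsilon_n}$); you instead prove $\|T-P_nTP_n\|_1\to 0$ directly for each fixed trace-class $T$ (dominated convergence on the singular-value decomposition) and upgrade to uniform convergence on the compact set $\Psi(\Omega_A)$ by an equi-Lipschitz net argument. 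Your route avoids the gentle measurement lemma entirely and works directly in trace norm, which is arguably cleaner; the paper's route is shorter given that the gentle measurement lemma is already cited elsewhere in the text. Both arguments are complete and correct.
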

\begin{proof}
Choose an arbitrary orthonormal basis $\{|i'\rangle\}_{i\in\mathbb{N}}$ of $\mathcal{H}$. 
For every $n\in\mathbb{N}$, define $P_n:\mathcal{H}\to\mathbb{C}^{n+1}$ as $P_n:=\sum_{i=1}^n |i\rangle\langle i'|$, where $\{|i\rangle\}_{1\leq i \leq n+1}$ is an arbitrary orthonormal basis of $\mathbb{C}^{n+1}$. This is a linear map, and an elementary calculation shows $\|P_n\psi\|^2\leq \|\psi\|^2$, hence $P_n$ is bounded. Set $\Pi_n:=P_n^\dagger P_n=\sum_{j=1}^n|j'\rangle\langle j'|$, and define $\tilde\Psi:\mathcal{T}_{\rm sa}(\mathcal{H})\to \mathbf{H}_{n+1}(\mathbb{C})$ via
\begin{align}
   \tilde\Psi(\rho) & :=P_n \rho P_n^\dagger + \tr[(\mathbf{1}-\Pi_n)\rho]|n+1\rangle\langle n+1|,
\end{align}
and $\Tilde\Phi:\mathcal{B}_{\rm sa}(\mathcal{H})\to\mathbf{H}_{n+1}(\mathbb{C})$ as $\tilde\Phi(E):=P_n E P_n^\dagger$. 
Finally, consider $f_n:\Omega_A\to\mathbb{R}$, defined as $f_n(\omega):=\tr(\Pi_n \Psi(\omega)\Pi_n)$. 
It is not difficult to check that $f_{n+1}(\omega)\geq f_n(\omega)$ and $\lim_{n\to\infty}f_n(\omega)=\tr\Psi(\omega)=1$.
 Since $\Omega_A$ is compact, Dini's theorem implies that the convergence is uniform, i.e.\ there is a sequence $(\varepsilon_n)_{n\in\mathbb{N}}$ of nonnegative real numbers with $\varepsilon_n\stackrel{n\to\infty}\longrightarrow 0$ and $f_n(\omega)\geq 1-\varepsilon_n$ for all $\omega\in\Omega_A$. 
 But then, the gentle measurement lemma~\cite{Winter99,Wilde13} implies that
\begin{align}
   \|\Psi(\omega)-\Pi_n\Psi(\omega)\Pi_n\|_1\leq 2\sqrt{\varepsilon_n}\mbox{ for all }\omega\in\Omega_A.
\end{align}
A straightforward calculation shows that $\tr[\tilde\Psi(\rho)\tilde\Phi(E)]=\tr(\Pi_n \rho \Pi_n E)$ for all $\rho\in\mathcal{T}_{\rm sa}(\mathcal{H})$ and $E\in\mathcal{B}_{\rm sa}(\mathcal{H})$. Thus
\begin{align}
\Delta&:= |\tr(\tilde\Psi\circ\Psi(\omega)\cdot\tilde\Phi\circ\Phi(e))-(\omega,e)| \nonumber\\
&\leq|\tr(\tilde\Psi\circ\Psi(\omega)\cdot\tilde\Phi\circ\Phi(e))-\tr(\Psi(\omega)\Phi(e))|\nonumber\\
& \quad + |\tr(\Psi(\omega)\Phi(e))-(\omega,e)|\nonumber\\
&\leq |\tr(\Pi_n \Psi(\omega)\Pi_n\Phi(e))-\tr(\Psi(\omega)\Phi(e))|+\varepsilon\nonumber\\
&\leq \|\Pi_n \Psi(\omega)\Pi_n-\Psi(\omega)\|_1\cdot \|\Phi(e)\|_\infty + \varepsilon\nonumber\\
&\leq 2\sqrt{\varepsilon_n}+\varepsilon.
\end{align}
Thus, the maps $\tilde\Psi\circ\Psi:A^*\to\mathbf{H}_{n+1}(\mathbb{C})$ and $\tilde\Phi\circ\Phi:A\to\mathbf{H}_{n+1}(\mathbb{C})$ define a $(2\sqrt{\varepsilon_n}+\varepsilon)$-embedding of $\mathcal{A}$ into $\mathcal{Q}_{n+1}$ (it is easy to see that they also map the elements of $E_A$ to valid quantum effects and the elements of $\Omega_A$ to normalized quantum states). 
For every fixed $\delta>0$, we can choose $n$ large enough such that $2\sqrt{\varepsilon_n}\leq\delta$, and the claim follows.
\end{proof}
In other words: if a GPT can be embedded into a countably infinite-dimensional quantum system (either exactly or approximately), with a bit of additional error, it can be embedded into a finite-dimensional quantum system.

\section{Approximate embeddings into quantum theory}
\label{sec:QuantumApprox}
Theorem~\ref{thm:Simulate} gives strong constraints on what can be {\em exactly} embedded into quantum theory
 -- but how robust is this to noise and error?
In the following, we shall consider how to rule out that a GPT can be embedded -- even approximately -- into quantum theory.
In this section, we will discuss the mathematical criteria on a theory's embeddability, before describing more explicity in \cref{sec:Experiment} how this is related to experimental tests of quantum mechanics.

\subsection{Example: The gbit}
\label{sec:ExGbit}
We have seen (in \cref{sec:embed_examples}) that there are no {\em exact} embeddings of the gbit into quantum theory -- but what if we are willing to tolerate a degree of error in our embedding? 
What is the minimum such error we must allow -- even if we admit embeddings into quantum systems of arbitrarily high dimension?

Recall the gbit effects (\cref{eq:gbit_effects}) and states (\cref{eq:gbit_states}),
 and suppose that for some $\varepsilon>0$ there is an $\varepsilon$-embedding into finite-dimensional quantum theory, such that $\rho_{ij} := \Psi\left(\alpha_{ij}\right)$ for $i,j\in\{+,-\}$,
 and $E_x := \Phi\left(e_{x+}\right)$ and $E_z := \Phi\left(e_{z+}\right)$.
Although for the gbit, $\alpha_{ij}(e_k)$ (for $k={x+,x-,z+,z-}$) evaluates to either $0$ or $1$, for an $\varepsilon$--embedding, we tolerate up to $\varepsilon$ of error.
That is, for the quantum--embedded gbit states and effects:
\begin{align}
\tr\left(E_x \rho_{++}\right) \geq 1-\varepsilon, &\qquad \tr\left(E_x \rho_{+-}\right) \geq 1-\varepsilon, \nonumber \\
\tr\left(E_x \rho_{-+}\right) \leq \varepsilon, &\qquad \tr\left(E_x \rho_{--}\right) \leq \varepsilon,
\end{align}
with similar inequalities for $E_z$.

%%
%% FIGURE: Preparation noncontextual simulation
%%
\begin{figure}[tbh]
\begin{centering}
\includegraphics[width=0.35\textwidth]{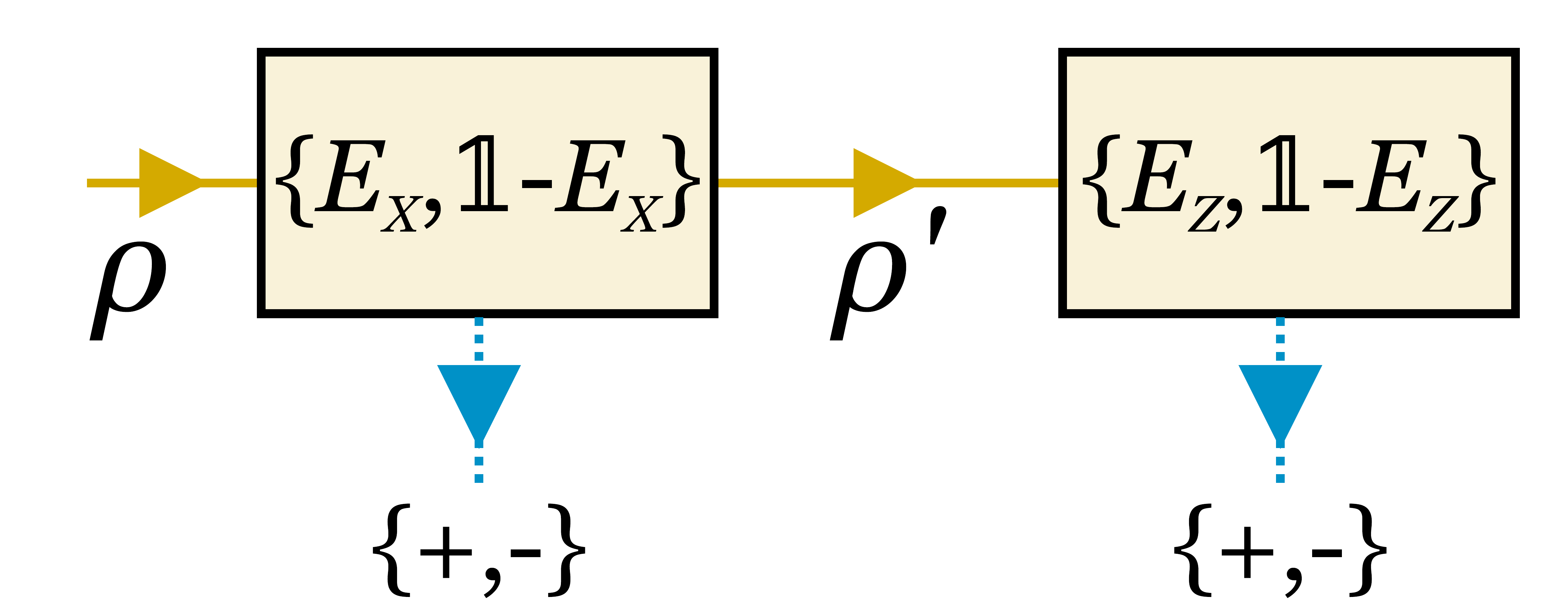}
\caption{%
\label{fig:device}
\caphead{\mbox{Quantum device $\mathcal{D}$.}} 
State $\rho$ is first measured by $\{E_x, \id-E_x\}$ with respective outcomes $+$ and $-$ and post-measurement state $\rho'$. 
Then $\rho'$ is measured by $\{E_z, \id-E_z\}$ also with respective outcomes $+$ and $-$. The required output statistics of this device on embeddings of a gbit allow us to bound the accuracy with which a gbit can be embedded into quantum theory.
}
\end{centering}
\end{figure}

Consider now the quantum device $\mathcal{D}$ (\cref{fig:device}), where quantum input state $\rho$ is first measured with the POVM $\{E_x, \id - E_x\}$ (with outcomes $+$ and $-$ respectively), yielding post-measurement state $\rho'=\sqrt{E}\rho\sqrt{E}/\tr(E\rho)$, where $E=E_x$ if the outcome is $+$ and $\id-E_x$ otherwise. 
Subsequently, $\rho'$ is measured with the POVM $\{E_z, \id-E_z\}$, also with respective outcomes $+$ and $-$.
First, consider when $\rho_{++}$ is input to $\mathcal{D}$.
With probability $P_1(+|\rho_{++}) \geq 1-\varepsilon$, the first outcome is $+$.
From the gentle measurement lemma~\cite{Winter99,Wilde13}, one can bound the change in post-measurement state for the case that outcome $+$ is obtained, namely $\|\rho_{++} - \rho'_{++} \|_1 \leq 2\sqrt{\varepsilon}$, where $\|\cdot\|_1$ is twice the trace distance, and hence:
 \begin{align}
| \tr\left(\rho'_{++} E_z\right) -  \tr\left(\rho_{++} E_z\right) | \leq \sqrt{\varepsilon}.
\end{align}
Thus, the joint probability of outcome $++$ from $\mathcal{D}$ is:
\begin{align}
P(++|\rho_{++}) & = P_1(+|\rho_{++}) P_2(+|\rho'_{++}) \nonumber \\
& \geq \left(1-\varepsilon\right) \tr\left(\rho'_{++} E_z\right) \nonumber\\
& \geq \left(1-\varepsilon\right)\left(1-\varepsilon-\sqrt{\varepsilon}\right)
\end{align}
By equivalent logic, $P(ij|\rho_{ij}) \geq \left(1-\varepsilon\right)\left(1-\varepsilon-\sqrt{\varepsilon}\right)$ for the other $i,j\in\{+,-\}$.

Suppose we input the state $\sigma := \Psi\left(\alpha'\right)$ into $\mathcal{D}$, 
 where $\alpha' := \frac{1}{2}\left(\alpha_{++} + \alpha_{--}\right) = \frac{1}{2}\left(\alpha_{+-} + \alpha_{-+}\right)$ is the state in the center of the gbit's square state space.
Then we can calculate the expected behaviour in two ways: 
 Either we use the decomposition $\sigma= \frac{1}{2}\left(\rho_{++} + \rho_{--}\right)$, such that
\begin{align}
P(++|\sigma) \geq \frac{1}{2}P(++|\rho_{++})\geq\frac{1-\varepsilon}{2}\left(1\!-\!\varepsilon\!-\!\sqrt{\varepsilon}\right);
\label{eq:ppp_lower}
\end{align}
 or we use $\sigma= \frac{1}{2}\left(\rho_{-+} + \rho_{+-}\right)$, such that
\begin{align}
P(++|\sigma) & = \frac{1}{2}P(++|\rho_{-+}) + \frac{1}{2}P(++|\rho_{+-}) \nonumber \\
&\leq \frac{1}{2} P_1(+|\rho_{-+})+\frac{1}{2} P_1(+|\rho_{+-})P_2(+|\rho'_{+-})\nonumber\\
& = \frac{1}{2}\tr\left(E_x\rho_{-+}\right) + \frac{1}{2}\tr\left(E_x\rho_{+-}\right) \tr\left(E_z\rho'_{+-}\right) \nonumber \\
& \leq \frac{1}{2} \varepsilon+\frac 1 2 \left(\tr(E_z\rho_{+-})+\sqrt{\varepsilon}\right)\nonumber \\
& \leq  \frac{1}{2}\varepsilon + \frac{1}{2} \left(\varepsilon + \sqrt{\varepsilon}\right).
\label{eq:ppp_upper}
\end{align} 

For the gbit embedding to satisfy both lower (\cref{eq:ppp_lower}) and upper (\cref{eq:ppp_upper}) bounds on the behaviour of $P(++|\sigma)$,
 we thus require $4\varepsilon + 2\sqrt{\varepsilon} - \varepsilon \sqrt{\varepsilon} - \varepsilon^2 \geq 1$,
 which solves to $\varepsilon \geq 0.101416$.
That is, no matter the dimension of the quantum system we use, our embedding of a gbit must have at least around $10\%$ error.

Taking also Lemma~\ref{lem:Infinity} into account, we have thus proven the following:
\begin{example}
\label{LemApproxGbit}
Let $\varepsilon\leq 0.1014$. Then the gbit cannot be $\varepsilon$-embedded into any $\mathcal{Q}_n$ or $\mathcal{Q}_\infty$.
\end{example}

This example provides some additional intuition on why the gbit embedding has to be somewhat noisy. 
The constraint that the equal mixture $\sigma$ of $\rho_{++}$ and $\rho_{--}$ is statistically identical to the equal mixture of $\rho_{+-}$ and $\rho_{-+}$ arises from the demand that the quantum simulation is univalent.
Meanwhile, the requirement to replicate gbit behaviour also requires that these four states have as distinguishable behaviour as possible when input to $\mathcal{D}$.
A degree of noise is thus required to satisfy both these constraints simultaneously.
Contrast this noisy embedding with the multivalent behaviour of the exact Holevo simulation. 
There, the two alternatives of how to prepare the gbit state $a'$ as mixtures, i.e.\ the two contexts, are encoded onto entirely different states, hence enabling the possibility of entirely different behaviour for each context when the preparation is acted on by $\mathcal{D}$.

\subsection{Using nonlocality to certify nonembeddability}
\label{SubsecNLNE}
The above example gives us a lower bound on the required error to embed a gbit, but its derivation is very specific to the gbit's geometry.
In the following subsection, we will provide a general prescription for obtaining such bounds for a larger class of GPTs via concepts from the study of Bell nonlocality.

It may seem surprising at first that the study of \emph{bipartite} correlations says anything about the $\varepsilon$-embeddability of \emph{single} GPT systems into quantum theory. 
But both embeddability and Bell nonlocality study \emph{dimension-independent} problems: 
is there \emph{any} dimension $n$ such that we can embed $\mathcal{A}$ into $\mathcal{Q}_n$; or, what is the maximum over all dimensions $n$ of the local quantum systems for a certain Bell correlation? 
This hints why insights into the latter can be useful for the study of the former.

We begin by defining a notion of bipartite states on pairs of GPTs. 
(Here, we ignore a large part of theory about composition in GPTs, and focus only on those aspects that are relevant for the study of embeddings.)

\begin{definition}[Bipartite states]
\label{def:BipartiteStates}
Let $\mathcal{A}$ and $\mathcal{B}$ be GPTs. 
A \textbf{bipartite state} on $\mathcal{A}\mathcal{B}$ is a bilinear map $\omega_{AB}:A\times B\to\mathbb{R}$ which is normalized and positive, i.e.
\begin{itemize}
	\item $\omega_{AB}(u_A,u_B)=1$,
	\item $\omega_{AB}(e_A,f_B)\geq 0$ for all $e_A\in \bar E_A, f_B\in \bar E_B$,
\end{itemize}
where $\bar E_A$ is the set of all $e\in A$ with $0\leq (\omega,e)\leq 1$ for all $\omega\in\Omega_A$. (Clearly $E_A\subset\bar E_A$, and these sets agree if $\mathcal{A}$ is unrestricted.) A special case are the \emph{product states} $\omega_{AB}=\omega_A\otimes\varphi_B$ for $\omega_A\in\Omega_A,\varphi_B\in\Omega_B$, acting as $\omega_A\otimes\varphi_B(e_A,f_B)=\omega_A(e_A)\varphi_B(f_B)$. A state $\omega_{AB}$ is \textbf{separable} if it can be written as a convex combination of product states, and otherwise it is \textbf{entangled}.
\end{definition}
Since the set of product states is compact, so is their convex hull (the set of separable states). 
The set of all bipartite states, being closed and bounded, is also compact. 
We will use bipartite states $\omega_{AB}$ only as calculation tools, without any claim of direct physical relevance.

We need a few more concepts before we can return to the question of $\varepsilon$-embeddability of GPTs. 
The next definition gives a possible way to quantify entanglement in bipartite states:
\begin{definition}[Robustness of entanglement]
Let $\mathcal{A}$ and $\mathcal{B}$ be GPTs and $\omega_{AB}$ a bipartite state on $\mathcal{A}\mathcal{B}$. 
Then we define its {\bf robustness of entanglement} as
\begin{align}
   \mathcal{R}(\omega_{AB}):=&\frac 1 2\inf\left\{\sum_{i=1}^n|\lambda_i|\,\,\left|\,\, n\in\mathbb{N}, \exists \varphi_A^{(i)}\in\Omega_A,\psi_B^{(i)}\in\Omega_B:\right.\right. \nonumber\\
& \qquad \left.\omega_{AB}=\sum_{i=1}^n \lambda_i \varphi_A^{(i)}\otimes\psi_B^{(i)}, \sum_{i=1}^n \lambda_i=1\right\}-\frac 1 2.
\label{eq:DefE}
\end{align}
\end{definition}
If $\mathcal{A}$ and $\mathcal{B}$ are quantum systems and if $\omega_{AB}$ is a valid state on their standard composite, then $\mathcal{R}(\omega_{AB})$ equals the standard quantum definition of robustness of entanglement by~\citet{VidalTarrach}. This follows from item (iii) of Lemma~\ref{lem:Eprops} below, which shows that our definition is also a special case of Takagi and Regula's robustness definition for resource theories in GPTs~\cite{TakagiRegula}.
\begin{lemma}
\label{lem:Eprops}
Robustness of entanglement $\mathcal{R}$ has the following properties:
\begin{itemize}
\item[(i)] The infimum of Eq.~(\ref{eq:DefE}) is attained as a minimum. 
Furthermore, it is sufficient to optimize over pure $\varphi_A^{(i)}$ and $\psi_B^{(i)}$.
\item[(ii)] $\mathcal{R}(\omega_{AB})\geq 0$, and $\mathcal{R}(\omega_{AB})=0$ if and only if $\omega_{AB}$ is a separable state.
\item[(iii)] $\mathcal{R}(\omega_{AB})=\min\{\varepsilon\geq 0\,\,|\,\, (1+\varepsilon)\varphi_{AB}^S-\varepsilon \psi_{AB}^S=\omega_{AB},\quad\varphi_{AB}^S,\psi_{AB}^S\mbox{ are separable states}\}$.
\item[(iv)] $\mathcal{R}$ is convex, i.e.\ if $0\leq\mu\leq 1$, then
\begin{align}
\mathcal{R}(\mu \omega_{AB}+(1\!-\!\mu)\omega'_{AB})\leq \mu \mathcal{R}(\omega_{AB})+(1\!-\!\mu)\mathcal{R}(\omega'_{AB}).
\end{align}
\item[(v)] $\mathcal{R}$ is continuous.
\end{itemize}
\end{lemma}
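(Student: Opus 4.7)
The plan is to handle (i) and (iii) jointly, dispose of (ii) and (iv) in one line each, and treat continuity (v) as the main obstacle.

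For the equivalence in (iii) (and the bulk of (i)), starting from any feasible decomposition $\omega_{AB}=\sum_i\lambda_i\varphi_A^{(i)}\otimes\psi_B^{(i)}$ with $\sum_i\lambda_i=1$ I split the indices by sign and set $P:=\sum_{\lambda_i>0}\lambda_i$ and $N:=\sum_{\lambda_i<0}|\lambda_i|$, so that $P-N=1$ and $P+N=\sum_i|\lambda_i|$. The normalized separable states $\varphi_{AB}^S:=P^{-1}\sum_{\lambda_i>0}\lambda_i\varphi_A^{(i)}\otimes\psi_B^{(i)}$ and $\psi_{AB}^S:=N^{-1}\sum_{\lambda_i<0}|\lambda_i|\varphi_A^{(i)}\otimes\psi_B^{(i)}$ then satisfy $(1+\varepsilon)\varphi_{AB}^S-\varepsilon\psi_{AB}^S=\omega_{AB}$ for $\varepsilon=N=\tfrac12(\sum_i|\lambda_i|-1)$, matching the cost in \cref{eq:DefE} with the $\varepsilon$ of (iii); conversely, expanding each separable state as a convex combination of product states recovers a feasible decomposition of the original form at equal cost. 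Existence of the minimum then reduces to minimizing $\varepsilon$ over the closed constraint $(1+\varepsilon)\varphi^S-\varepsilon\psi^S=\omega_{AB}$ on the compact set of pairs of separable states (the convex hull of the compact product set, in finite dimension), with $\varepsilon$ bounded by the uniform estimate derived in (v) below. The restriction to pure $\varphi_A^{(i)},\psi_B^{(i)}$ in (i) is then immediate from Krein--Milman applied to $\Omega_A$ and $\Omega_B$, since each separable state is a convex mixture of pure product states without altering the total mass.

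Parts (ii) and (iv) are one-liners. For (ii), the triangle inequality gives $\sum_i|\lambda_i|\geq|\sum_i\lambda_i|=1$ with equality iff every $\lambda_i\geq 0$, so $\mathcal{R}\geq 0$ vanishes precisely on the separable states. For (iv), the weighted concatenation of optimal decompositions of $\omega_{AB}$ and $\omega'_{AB}$ with weights $\mu$ and $1-\mu$ is a feasible decomposition of $\mu\omega_{AB}+(1-\mu)\omega'_{AB}$ of cost $\mu\mathcal{R}(\omega_{AB})+(1-\mu)\mathcal{R}(\omega'_{AB})$.

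The main obstacle is (v), which I would prove by establishing lower and upper semicontinuity separately. For \emph{lower} semicontinuity, take $\omega_{AB}^{(n)}\to\omega_{AB}$ realizing $\liminf_n\mathcal{R}(\omega_{AB}^{(n)})$, attain each $\mathcal{R}(\omega_{AB}^{(n)})$ by a triple $(\varepsilon_n,\varphi_n^S,\psi_n^S)$ via (iii), and use compactness of the separable set together with boundedness of $\varepsilon_n$ (from the uniform bound below) to pass to a convergent subsequence with limit $(\varepsilon^*,\varphi^S,\psi^S)$; continuity of the constraint yields $(1+\varepsilon^*)\varphi^S-\varepsilon^*\psi^S=\omega_{AB}$, hence $\mathcal{R}(\omega_{AB})\leq\varepsilon^*=\liminf_n\mathcal{R}(\omega_{AB}^{(n)})$. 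For \emph{upper} semicontinuity, fix a separable state $\tau_0$ in the relative interior of the separable bipartite states (for instance, a convex combination of pure products with full support on $\Omega_A$ and $\Omega_B$), so that some $r>0$ admits $\tau_0\pm r\hat\delta$ separable for every unit vector $\hat\delta$ in the affine span of bipartite states. Starting from an optimizer $(1+\varepsilon_0)\varphi_0^S-\varepsilon_0\psi_0^S=\omega_{AB}$ and writing $\omega_{AB}^{(n)}=\omega_{AB}+\delta_n$ (w.l.o.g.\ $\delta_n\neq 0$), I absorb the correction as $\delta_n=\tfrac{\|\delta_n\|}{2r}(\mu_n^+-\mu_n^-)$ with $\mu_n^\pm:=\tau_0\pm r\delta_n/\|\delta_n\|$ separable, yielding a feasible triple for $\omega_{AB}^{(n)}$ of cost $\varepsilon_0+\|\delta_n\|/(2r)\to\varepsilon_0$. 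Specializing the same construction with $\psi^S=\tau_0$ and $\varphi^S=(\omega_{AB}+\varepsilon\tau_0)/(1+\varepsilon)$ applied to arbitrary $\omega_{AB}$ gives the uniform bound $\mathcal{R}(\omega_{AB})\leq\max\bigl(0,\|\omega_{AB}-\tau_0\|/r-1\bigr)$ that underwrites the preceding compactness arguments, closing the proof.
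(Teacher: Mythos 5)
Your proposal is correct and follows essentially the same route as the paper's proof: the sign-splitting of the $\lambda_i$ into positive and negative mass to pass between the decomposition of \cref{eq:DefE} and the two-separable-state form of (iii), compactness of the (finite-dimensional) separable set to attain the minimum, the concatenation argument for convexity, and an interior ball of radius $r$ in the separable set to control perturbations. The one place you genuinely diverge is (v): the paper runs the same ball construction you use for upper semicontinuity, but then simply exchanges the roles of $\omega_{AB}$ and $\omega'_{AB}$ to obtain the two-sided Lipschitz bound $|\mathcal{R}(\omega_{AB})-\mathcal{R}(\omega'_{AB})|\leq\|\omega_{AB}-\omega'_{AB}\|/r$ in one stroke, which makes your separate lower-semicontinuity argument (subsequence extraction from attained optimizers) unnecessary — though it is valid, and it reuses machinery you already need for (i). Your variant of the perturbation, using $\tau_0\pm r\hat\delta$ rather than the center and a single boundary point, even improves the constant to $1/(2r)$. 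Everything checks out, including the implicit reliance of the ``only if'' direction of (ii) on the attainment of the minimum in (i).
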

The proof is given in Appendix~\ref{SecProoflem:Eprops}.

In the following, we will not be so interested in the entanglement of single states, but in the maximal value that this quantity can take for a given GPT:
\begin{definition}
\label{def:SelfEntangle}
For any GPT $\mathcal{A}$, we define its \textbf{self-entangleability} as
\begin{align}
   \mathcal{R}(\mathcal{A}):=\max_{\omega_{AA}}\mathcal{R}(\omega_{AA}),
\end{align}
where the maximum is over the bipartite states on two copies of $\mathcal{A}$. 
\end{definition}
We make a few immediate remarks:
First, it is sufficient to maximize over the \emph{pure} bipartite states, since $\mathcal{R}$ is convex (Lemma \ref{lem:Eprops}.iv).
Second, if $\mathcal{A}$ is a classical GPT then all states in $\mathcal{AA}$ are separable, and so $\mathcal{R}(\mathcal{C}_n)=0$.

\begin{lemma}
\label{lem:gbitSE}
The self-entangleability of the gbit $\mathcal{A}$ is $\mathcal{R}(\mathcal{A})=\frac 1 2$.
\begin{proof}
To see this, we note that for gbits, which are unrestricted, the bipartite states (as per Definition~\ref{def:BipartiteStates}) are those of the maximal tensor product of two gbits. This contains just two classes of pure state: product states and PR-boxes~\cite{PopescuR94}.
Moreover, there are only finitely many such states,
 and by exhaustive computer--assisted algebra, we can thus minimize \cref{eq:DefE}.
This yields a minimum decomposition into four separable states with $\lambda_1\!=\!\lambda_2\!=\!\lambda_3\!=\!\frac{1}{2}$ and $\lambda_4\!=-\frac{1}{2}$. 
\end{proof}
\end{lemma}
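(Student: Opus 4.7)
My strategy is to combine convexity of $\mathcal{R}$ (Lemma~\ref{lem:Eprops}(iv)) with the finite extremal structure of the bipartite state space of two gbits, and then to bound $\mathcal{R}$ on the extremal cases from above and below.

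Since $\mathcal{A}$ is unrestricted, the bipartite states on $\mathcal{A}\mathcal{A}$ coincide with those of the maximal tensor product of two gbits, which is a bounded polytope. Its pure (extremal) points are known to split into two families: the $16$ product vertices $\alpha_{ij}\otimes\alpha_{kl}$ formed from the four vertex states of each gbit, and the $8$ PR-boxes. By convexity of $\mathcal{R}$, the maximum in Definition~\ref{def:SelfEntangle} is attained at such an extremal state, so it suffices to compute $\mathcal{R}$ on a finite list of pure bipartite states.

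For any pure product state, Lemma~\ref{lem:Eprops}(ii) gives $\mathcal{R}=0$. Hence I only need to handle a single PR-box $\omega_{\mathrm{PR}}$, all eight being equivalent under local reversible relabellings (which preserve separability and therefore $\mathcal{R}$). For the lower bound $\mathcal{R}(\omega_{\mathrm{PR}})\geq\tfrac{1}{2}$, I would use the CHSH functional $S$ on bipartite gbit states: it is linear, satisfies $|S(\sigma)|\leq 2$ on separable $\sigma$ (and hence on every convex combination of product vertices), and achieves $S(\omega_{\mathrm{PR}})=4$. Writing $\omega_{\mathrm{PR}}=(1+\varepsilon)\varphi^S-\varepsilon\psi^S$ via Lemma~\ref{lem:Eprops}(iii) and applying $S$ yields
\begin{equation}
4=(1+\varepsilon)S(\varphi^S)-\varepsilon S(\psi^S)\leq 2(1+\varepsilon)+2\varepsilon=2+4\varepsilon,
\end{equation}
so $\varepsilon\geq\tfrac{1}{2}$.

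For the matching upper bound, I would exhibit an explicit quasi-convex decomposition $\omega_{\mathrm{PR}}=\sum_{i=1}^{4}\lambda_i\,\alpha^{(i)}\otimes\beta^{(i)}$ into four pure product vertices, with $\sum_i\lambda_i=1$ and $\sum_i|\lambda_i|=2$: concretely, three weights $+\tfrac{1}{2}$ and one weight $-\tfrac{1}{2}$, a signature analogous to the standard local quasi-probability decompositions of a PR-box. Existence of such a decomposition reduces to a small linear-algebra check in the nine-dimensional space $\reals^3\otimes\reals^3$ using the $16$ gbit product vertices, and can be verified directly by matching the PR-box on a basis of product effects. Combined with the lower bound this gives $\mathcal{R}(\omega_{\mathrm{PR}})=\tfrac{1}{2}$ and hence $\mathcal{R}(\mathcal{A})=\tfrac{1}{2}$. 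The main obstacle is producing and checking this explicit quasi-decomposition; the CHSH witness is conceptually clean for the lower bound, but a preliminary step -- either citing the Bell-nonlocality literature or a direct polytope enumeration -- is also needed to ensure that the only pure bipartite states of the maximal tensor product of two gbits are indeed the $16$ products and the $8$ PR-boxes.
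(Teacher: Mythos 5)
Your proof is correct, and it takes a genuinely different route from the paper's for the key step. The paper reduces to the same finite list of extremal bipartite states (the $16$ product vertices and the $8$ PR-boxes, citing the known classification of the no-signalling polytope) but then simply reports the result of an exhaustive computer-assisted minimization of Eq.~\eqref{eq:DefE}, arriving at the decomposition with weights $\tfrac12,\tfrac12,\tfrac12,-\tfrac12$. You split the computation into a matching pair of bounds: an analytic lower bound $\mathcal{R}(\omega_{\rm PR})\geq\tfrac12$ obtained by applying the CHSH functional to the decomposition guaranteed by Lemma~\ref{lem:Eprops}(iii) and using $|S|\leq 2$ on separable states versus $S(\omega_{\rm PR})=4$, and an upper bound from an explicit four-term quasi-decomposition (which is exactly the decomposition the paper's computer search finds). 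Your CHSH argument buys an interpretable certificate of optimality that the paper's black-box computation does not provide, and it foreshadows the general mechanism of Theorems~\ref{thm:QuantumFiniteEps} and~\ref{thm:ClassicalFiniteEps} relating Bell-functional gaps to robustness; the cost is that you still owe the reader the explicit decomposition for the upper bound and (as you correctly flag) a citation or enumeration establishing that the extremal points of the maximal tensor product of two gbits are precisely the products and the PR-boxes --- both of which the paper subsumes into its single computational step.
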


As in quantum theory, we can study Bell correlations corresponding to bipartite states.
\begin{definition}[Bell correlations]
Let $\mathcal{A}$ and $\mathcal{B}$ be GPTs. Suppose that there exist finite sets $X,X',Y,Y'$ and, for all $a\in X',b\in Y',x\in X, y\in Y$, effects $e_{a|x}\in E_A$, $f_{b|y}\in E_B$ with $\sum_a e_{a|x}=u_A$ and $\sum_b f_{b|y}=u_B$. Then, for any bipartite state $\omega_{AB}$, the probability table
\begin{align}
   P(a,b|x,y):=\omega_{AB}(e_{a|x},f_{b|y})
\end{align}
is called a \textbf{behaviour on $\mathcal{AB}$}~\cite{KhalfinT85}. 
For given finite sets $X,X',Y,Y'$, a \textbf{Bell functional} is a linear functional on the corresponding set of behaviours, i.e.
\begin{align}
   B[P]=\sum_{a,b,x,y}b_{a,b,x,y}P(a,b|x,y),
\end{align}
with $b_{a,b,x,y}\in\mathbb{R}$. 
We write $|B|:=\sum_{a,b,x,y}|b_{a,b,x,y}|$, and the maximum of $B[P]$ over all behaviours $P$ of $\mathcal{AB}$ will be denoted $B_{\mathcal{AB}}$.
Furthermore, we define the \textbf{maximal quantum value} of $B$ as $B_{\mathcal{Q}}:=\sup_{P_Q} B[P_Q]$, where we optimize over all behaviours $P_Q$ that can arise on the tensor product of two finite-dimensional Hilbert spaces $A$ and $B$ as $P_Q(a,b|x,y)=\tr(\rho_{AB}\, e_{a|x}\otimes f_{b|y})$. 
The \textbf{maximal classical value} $B_{\mathcal{C}}$ is analogously defined as $B_{\mathcal{C}}:=\sup_{P_C}B[P_C]$, with $P_C$ constrained to arise from a joint classical probability distribution.
\end{definition}

The following lemma gives a first hint at the relation between embeddability and Bell nonlocality: if a GPT is $\varepsilon$-embeddable into quantum theory, then its Bell correlations are $\mathcal{O}(\varepsilon)$-close to those of quantum theory. This reformulates and generalizes the results of~\cite{KleinmannOSW13,BarnumBBEW10}.
\begin{lemma}
\label{lem:BellTables}
Let $\mathcal{A}$ be a GPT for which there exists a unital $\varepsilon$-embedding into some quantum system $\mathcal{Q}_n$, where $n\in\mathbb{N}$. 
Suppose that we have a behaviour $P(a,a'|x,x')$ on two copies of $\mathcal{A}$. 
Then there exists a quantum behaviour $P_Q(a,a'|x,x')$ on two $n$-dimensional Hilbert spaces with
\begin{align}
   |P(a,a'|x,x')-P_Q(a,a'|x,x')|\leq 2\varepsilon[1+2\,\mathcal{R}(\mathcal{A})]
\end{align}
for all $a$, $a'$, $x$, and $x'$, where $\mathcal{R}(\mathcal{A})$ is the self-entangleability of $\mathcal{A}$.
\end{lemma}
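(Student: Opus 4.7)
The plan is to leverage Lemma~\ref{lem:Eprops} to decompose $\omega_{AA}$ into separable bipartite states and then push each product factor through the embedding. Specifically, by Lemma~\ref{lem:Eprops}(iii) I would write $\omega_{AA}=(1+r)\omega^+_{AA}-r\omega^-_{AA}$ with $r:=\mathcal{R}(\omega_{AA})\leq\mathcal{R}(\mathcal{A})$ and $\omega^\pm_{AA}$ separable, and by Lemma~\ref{lem:Eprops}(i) each admits a pure-product decomposition $\omega^\pm_{AA}=\sum_i p^\pm_i\,\varphi^{(i,\pm)}_A\otimes\psi^{(i,\pm)}_A$. Applying $\Psi$ componentwise would give the separable quantum states
\begin{equation*}
\rho^\pm_Q:=\sum_i p^\pm_i\,\Psi(\varphi^{(i,\pm)}_A)\otimes\Psi(\psi^{(i,\pm)}_A)
\end{equation*}
on $\mathcal{Q}_n\otimes\mathcal{Q}_n$, which are genuine density operators since $\Psi$ maps $\Omega_A$ into the quantum state space.

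Next, I would bound the per-product error. For a single pure product $\varphi_A\otimes\psi_A$ and a single pair $e,e'\in E_A$, both $\varphi_A(e)\psi_A(e')$ and $\tr(\Psi(\varphi_A)\Phi(e))\tr(\Psi(\psi_A)\Phi(e'))$ lie in $[0,1]$, and each factor is $\varepsilon$-close to its partner by the $\varepsilon$-embedding inequality (Definition~\ref{def:EmbeddingMap}(ii)). The identity $ab-a'b'=a(b-b')+(a-a')b'$ with a triangle inequality then delivers a $2\varepsilon$ bound per product, and convex averaging within each decomposition gives
\begin{equation*}
\bigl|\omega^\pm_{AA}(e_{a|x},e_{a'|x'})-\tr\bigl(\rho^\pm_Q\,\Phi(e_{a|x})\otimes\Phi(e_{a'|x'})\bigr)\bigr|\leq 2\varepsilon.
\end{equation*}
Linearly combining these estimates according to the robustness decomposition, the trace-one Hermitian operator $\sigma_Q:=(1+r)\rho^+_Q-r\rho^-_Q$ satisfies
\begin{equation*}
\bigl|P(a,a'|x,x')-\tr\bigl(\sigma_Q\,\Phi(e_{a|x})\otimes\Phi(e_{a'|x'})\bigr)\bigr|\leq (1+r)\cdot 2\varepsilon+r\cdot 2\varepsilon=2\varepsilon(1+2r)\leq 2\varepsilon[1+2\mathcal{R}(\mathcal{A})].
\end{equation*}

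The main obstacle, I expect, is converting this into a bound for an honest quantum behaviour $P_Q$ arising from a \emph{positive} state, since $\sigma_Q$ can fail to be positive semidefinite. A naive projection onto the PSD cone via the Jordan decomposition $\sigma_Q=\sigma_+-\sigma_-$ costs an additional $O(r)$ in trace norm (as $\|\sigma_Q\|_1\leq 1+2r$ and $\tr\sigma_-\leq r$), which would double-count the $\mathcal{R}(\mathcal{A})$ factor. My plan is to avoid this by exploiting that $\rho^+_Q$ and $\rho^-_Q$ are already genuine separable density operators and that unitality of $\Phi$ makes each $\{\Phi(e_{a|x})\}_a$ a legitimate POVM on $\mathbb{C}^n$; a careful affine or dilation argument tailored to the structure of the embedded effects should yield a state $\rho_Q$ reproducing the behaviour of $\sigma_Q$ on those POVMs while staying within the same $2\varepsilon[1+2\mathcal{R}(\mathcal{A})]$ slack. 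Carrying out this final reduction without inflating the error factor is the technical crux.
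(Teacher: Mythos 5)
Your argument tracks the paper's proof closely up to the final step, and your error estimate $2\varepsilon(1+2r)$ for the pseudo-state $\sigma_Q=(1+r)\rho^+_Q-r\rho^-_Q$ is exactly the bound the paper derives from the same robustness decomposition (Lemma~\ref{lem:Eprops}). But the step you defer --- turning the correlations of the possibly non-PSD operator $\sigma_Q$ into an honest quantum behaviour with no extra error --- is precisely the load-bearing ingredient, and your proposed routes do not supply it. A Jordan decomposition / projection onto the PSD cone inflates the error by $O(r)$, as you note; and a generic ``affine or dilation argument'' will not work, because the needed fact is a genuinely nontrivial structural theorem about quantum theory rather than a soft approximation: every normalized bilinear functional on two quantum systems that is positive on all pairs of quantum effects (a state of the \emph{maximal} tensor product) produces correlations of the form $\tr(\sigma_{BB'}\,E\otimes\tau(F))$ for a genuine density operator $\sigma_{BB'}$ and a positive unital map $\tau$ (essentially a partial transpose). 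This is the result of Ref.~\cite{BarnumBBEW10} that the paper invokes. Your $\sigma_Q$ qualifies, since $\tr(\sigma_Q\,E\otimes F)=\omega_{AA}(\Psi^*(E),\Psi^*(F))\geq 0$ for all quantum effects $E,F$ (using $\Psi^*(E)\in\bar E_A$ and positivity of the bipartite state on $\bar E_A\times\bar E_A$), so the theorem closes the gap with zero additional slack. That this step cannot be handwaved is underscored by the fact that the analogous statement fails for three quantum systems (see Appendix~\ref{sec:NoGen}), so any argument that does not use something specific to the bipartite quantum maximal tensor product is doomed.

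For comparison, the paper organizes the proof slightly differently: it first defines the pseudo-state $\rho_{BB'}(e,f):=\omega_{AA'}(\Psi^*(e),\Psi^*(f))$ directly, applies the Barnum et al.\ theorem to obtain the genuine quantum behaviour $P_Q(a,a'|x,x')=\tr(\sigma_{BB'}\,\Phi(e_{a|x})\otimes\tau(\Phi(f_{a'|x'})))$, and only then brings in the robustness decomposition to estimate $|P-P_Q|$ term by term. Your order of operations (decompose first, embed second) is equivalent, but the proof is incomplete until you cite or prove the maximal-tensor-product result.
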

The proof is given in \cref{app:BellTableProof}.
The assumption of unitality can be dropped at the expense of replacing $\varepsilon$ by $2\varepsilon$ (Lemma~\ref{lem:Unital}), and the case $n=\infty$ can be treated by replacing $\varepsilon$ by $\varepsilon+\delta$ for some $\delta>0$ (Lemma~\ref{lem:Infinity}).

This lemma allows us to obtain bounds on the embeddability of a broad class of GPTs. 
Before turning to the general result, let us again consider the special case of the gbit. 
While the next example gives us a statement that is strictly weaker than \cref{LemApproxGbit}, it demonstrates the main idea and the strategy for the general proof.
\begin{example}
\label{eg:TsirelsonGbit}
The Tsirelson bound implies that the gbit cannot be $\varepsilon$-embedded into any $\mathcal{Q}_n$ or $\mathcal{Q}_\infty$ if $\varepsilon\leq 0.00915$.
\begin{proof}
Let $\mathcal{A}$ be the gbit. 
To every behaviour $P(a,a'|x,x')$ on two gbits $\mathcal{AA}$, with $x,x',a,a'\in\{0,1\}$, we can associate a winning probability for the CHSH game~\cite{ClauserHSH69,VerSteegW09}:
\begin{align}
   P^{\rm win}=\frac{1}{4}\sum_{x,x'} \; \sum_{a,a':a\oplus a'=x\cdot x'}P(a,a'|x,x'),
\end{align}
where $\oplus$ denotes addition modulo two. 
The composite state space of two gbits admits Popescu-Rohrlich (PR) correlations~\cite{PopescuR94} $P_{\rm PR}(a,a'|x,x')$ which give a winning probability of unity, i.e.\ $P^{\rm win}_{\rm PR}=1$.
On the other hand, \emph{every} quantum Bell behaviour $P_Q$ gives a value of $P^{\rm win}_Q\leq \frac{1}{2}+\frac 1 {2\sqrt{2}}$, the Tsirelson bound~\cite{Cirelson80}.

Suppose that there exists an $\varepsilon$-embedding of the gbit into some $\mathcal{Q}_n$ or $\mathcal{Q}_\infty$. 
Then, according to Lemmas~\ref{lem:Unital} and~\ref{lem:Infinity}, for every $\delta>0$, there exists a unital $(2\varepsilon+\delta)$-embedding of $\mathcal{A}$ into some $\mathcal{Q}_n$. But then, Lemma~\ref{lem:BellTables} implies that there exists a quantum behaviour $P_Q$ with
\begin{align}
   |P_{\rm PR}(a,a'|x,x')-P_Q(a,a'|x,x')|\leq 4 (2\varepsilon+\delta)
\end{align}
for all $a,a',x,x'$ (recall Lemma~\ref{lem:gbitSE} that $\mathcal{R}(\mathcal{A})=\frac 1 2$).
But $P^{\rm win}$ is a linear combination of eight such probabilities, each with a prefactor of $1/4$. Thus
\begin{align}
   \frac{1}{2} -\frac 1 {2\sqrt{2}}\leq P^{\rm win}_{\rm PR}-P^{\rm win}_Q\leq \frac 1 4\cdot 8\cdot 4(2\varepsilon+\delta).
\end{align}
Since this is true for all $\delta>0$, it must also be true for $\delta=0$, hence $\varepsilon\geq (2-\sqrt{2})/64\approx 0.00915$. 
\end{proof}
\end{example}

This proof strategy can be generalized in obvious ways beyond the gbit.  This leads us to the main result of this section:
\begin{theorem}
\label{thm:QuantumFiniteEps}
Suppose that $\mathcal{A}$ is a GPT that ``admits of post-quantum self-correlations'' in the following sense: there is some Bell functional $B$ such that $B_{\mathcal{A A}}>B_{\mathcal{Q}}$, i.e.\ some state on two copies of $\mathcal{A}$ violates the corresponding Bell inequality by more than any bipartite quantum state. Then, for every
\begin{align}
\label{eq:QuantumEpsBound}
   \varepsilon<\frac{B_{\mathcal{A A}}-B_{\mathcal{Q}}}{4|B|(1+2\,\mathcal{R}(\mathcal{A}))},
\end{align}
the GPT $\mathcal{A}$ cannot be $\varepsilon$-embedded into any $\mathcal{Q}_n$ or $\mathcal{Q}_\infty$. 
\end{theorem}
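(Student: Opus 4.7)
The plan is to mimic the strategy of Example~\ref{eg:TsirelsonGbit} but replace the CHSH/Tsirelson gap with the generic Bell gap $B_{\mathcal{AA}} - B_{\mathcal{Q}}$, and replace the specific bound on $|P_{\mathrm{PR}} - P_Q|$ with the one coming from Lemma~\ref{lem:BellTables}. I would argue by contradiction: suppose $\mathcal{A}$ admits an $\varepsilon$-embedding into some $\mathcal{Q}_n$ (with $n\in\mathbb{N}\cup\{\infty\}$) with $\varepsilon$ in the stated range. First, pass to a convenient reduction of the embedding:

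\begin{enumerate}[(1)]
\item Using Lemma~\ref{lem:Infinity}, reduce the $\mathcal{Q}_\infty$ case to a finite-dimensional one at the cost of an extra slack $\delta>0$, yielding an $(\varepsilon+\delta)$-embedding into some $\mathcal{Q}_m$.
\item Using Lemma~\ref{lem:Unital}, upgrade this to a \emph{unital} $2(\varepsilon+\delta)$-embedding of $\mathcal{A}$ into the same (or a slightly larger) $\mathcal{Q}_m$.
\end{enumerate}
Next, fix effects $\{e_{a|x}\}$, $\{f_{a'|x'}\}$ on $\mathcal{A}$ and a bipartite state $\omega_{AA}$ realizing (or approaching up to $\eta$) the maximum $B_{\mathcal{AA}}$; the resulting behaviour $P(a,a'|x,x')=\omega_{AA}(e_{a|x},f_{a'|x'})$ then satisfies $B[P]\geq B_{\mathcal{AA}}-\eta$. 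Apply Lemma~\ref{lem:BellTables} (with $\varepsilon$ replaced by $2(\varepsilon+\delta)$) to obtain a quantum behaviour $P_Q$ on two $m$-dimensional Hilbert spaces with
\begin{equation}
|P(a,a'|x,x')-P_Q(a,a'|x,x')| \leq 4(\varepsilon+\delta)\bigl[1+2\,\mathcal{R}(\mathcal{A})\bigr]
\end{equation}
for every tuple $(a,a',x,x')$. Since $B[P_Q]\leq B_{\mathcal{Q}}$ by definition of the maximal quantum value, the triangle inequality and the coefficient bound $\sum|b_{a,a',x,x'}|=|B|$ give
\begin{equation}
B_{\mathcal{AA}}-\eta - B_{\mathcal{Q}} \;\leq\; B[P]-B[P_Q] \;\leq\; |B|\cdot 4(\varepsilon+\delta)\bigl[1+2\,\mathcal{R}(\mathcal{A})\bigr].
\end{equation}
Letting $\eta,\delta\downarrow 0$ yields $\varepsilon\geq (B_{\mathcal{AA}}-B_{\mathcal{Q}})/\bigl(4|B|(1+2\,\mathcal{R}(\mathcal{A}))\bigr)$, contradicting the hypothesis and proving the theorem.

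The only nontrivial ingredients are already black-boxed in earlier lemmas, so I do not expect a serious obstacle; the main care required is bookkeeping the accumulated approximation errors from the three successive reductions (finite-dimensionalization via Lemma~\ref{lem:Infinity}, unitalization via Lemma~\ref{lem:Unital}, and the Bell-table closeness via Lemma~\ref{lem:BellTables}), and justifying that we can legitimately take both $\eta\to 0$ (approaching the Bell maximum) and $\delta\to 0$ independently. A minor subtlety worth checking is that the supremum $B_{\mathcal{AA}}$ is in fact attained, which follows because the set of bipartite states on $\mathcal{AA}$ is compact (as noted after Definition~\ref{def:BipartiteStates}) and $\omega_{AA}\mapsto B[P_{\omega_{AA}}]$ is continuous; if one prefers not to invoke this, the $\eta$-approximation argument above suffices and avoids the question altogether.
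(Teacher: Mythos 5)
Your proposal is correct and is essentially the proof the paper intends: the paper gives no separate argument for Theorem~\ref{thm:QuantumFiniteEps} beyond stating that the strategy of Example~\ref{eg:TsirelsonGbit} ``can be generalized in obvious ways,'' and your chain (Lemma~\ref{lem:Infinity} $\to$ Lemma~\ref{lem:Unital} $\to$ Lemma~\ref{lem:BellTables} $\to$ triangle inequality with the coefficient bound $|B|$, then $\eta,\delta\downarrow 0$) is exactly that generalization with the error bookkeeping done correctly.
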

This theorem allows us to obtain bounds on the quantum embeddability for all GPTs that admit of post-quantum self-correlations. 
For example, it is known that all unrestricted GPTs where the set of normalized states $\Omega$ is a regular polygon with an even number of vertices admit of post-quantum CHSH Bell-correlations~\cite{JanottaGBB11}. 
Theorem~\ref{thm:QuantumFiniteEps} shows that these GPTs $\mathcal{A}$ cannot be perfectly embedded into quantum theory, and that this result is robust up to some $\varepsilon>0$ that we can compute.

We do not currently know whether Theorem~\ref{thm:QuantumFiniteEps} produces nontrivial bounds for all GPTs that are not embeddable into quantum theory. 
The theorem applies to restricted and unrestricted GPTs alike, but we expect that it yields stronger results for GPTs that are close to unrestricted. 
However, this theorem cannot be straightforwardly generalized to more than two parties or to embeddings into theories which are not classical or quantum. We comment more on this, and on the relation to a result by \citet{Pusey18}, in Appendix~\ref{sec:NoGen}.

Finally, we remark that this strategy can also be adapted to test embeddings into \emph{classical} probability theory, since bipartite states on two classical systems (as defined in Definition~\ref{def:BipartiteStates}) cannot violate Bell inequalities.
\begin{theorem}
\label{thm:ClassicalFiniteEps}
Suppose that $\mathcal{A}$ is a GPT that admits of nonlocal self-correlations, i.e.\ there is some Bell functional $B$ with $B_{\mathcal{AA}}>B_{\mathcal{C}}$. Then, for every
\begin{align}
\label{eq:ClassicalEpsBound}
   \varepsilon<\frac{B_{\mathcal{A A}}-B_{\mathcal{C}}}{4|B|(1+2\,\mathcal{R}(\mathcal{A}))},
\end{align}
the GPT $\mathcal{A}$ cannot be $\varepsilon$-embedded into any $\mathcal{C}_n$ -- that is, $\mathcal{A}$ does not admit of an $\varepsilon$-accurate discrete noncontextual ontological model.
\end{theorem}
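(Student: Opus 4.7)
The plan is to replay the proof of Theorem \ref{thm:QuantumFiniteEps} with ``quantum'' replaced by ``classical'' throughout, exploiting the fact that every bipartite state on $\mathcal{C}_n \mathcal{C}_n$ (a joint probability distribution on a finite product index set) is automatically separable, hence produces a local behavior, and hence satisfies $B[P_C] \leq B_{\mathcal{C}}$ for any Bell functional $B$.

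First, I would establish the classical counterpart of Lemma \ref{lem:BellTables}: if $\mathcal{A}$ admits a unital $\varepsilon$-embedding $(\Phi, \Psi)$ into some $\mathcal{C}_n$, then for every behavior $P(a,a'|x,x') = \omega_{AA}(e_{a|x}, e_{a'|x'})$ on $\mathcal{A}\mathcal{A}$ there is a local (classical) behavior $P_C$ with
\begin{align}
|P(a,a'|x,x') - P_C(a,a'|x,x')| \leq 2\varepsilon[1 + 2\mathcal{R}(\mathcal{A})]
\end{align}
entrywise. The construction parallels the quantum one: Lemma \ref{lem:Eprops}(iii) decomposes $\omega_{AA} = (1+\mathcal{R})\varphi^S - \mathcal{R}\psi^S$ with $\mathcal{R} := \mathcal{R}(\omega_{AA}) \leq \mathcal{R}(\mathcal{A})$ and $\varphi^S, \psi^S$ separable; each separable state expands as a convex combination of product states $\omega^{(i)}_A \otimes \varphi^{(i)}_A$; each product pushes forward under the embedding to a bona fide product distribution $\Psi(\omega^{(i)}_A) \otimes \Psi(\varphi^{(i)}_A)$ on $\mathcal{C}_n \mathcal{C}_n$. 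The per-product entrywise error is at most $2\varepsilon$ via $|ab - \tilde a \tilde b| \leq |a - \tilde a|\,b + \tilde a\,|b - \tilde b| \leq 2\varepsilon$ for $a, \tilde a, b, \tilde b \in [0,1]$, and the $L^1$-weight $1 + 2\mathcal{R}$ of the decomposition amplifies this to the claimed global bound. Locality of $P_C$ is automatic because its source is a bipartite classical state.

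Second, I would combine the above with Lemma \ref{lem:Unital} (upgrading any $\varepsilon$-embedding to a unital $(2\varepsilon)$-embedding, absorbing arbitrarily small slack at the end) and pick a behavior $P_{AA}$ on $\mathcal{AA}$ with $B[P_{AA}]$ arbitrarily close to $B_{\mathcal{AA}}$. The classical Bell lemma then produces a local $P_C$ with $\max_{a,a',x,x'}|P_{AA}(a,a'|x,x') - P_C(a,a'|x,x')| \leq 4\varepsilon[1 + 2\mathcal{R}(\mathcal{A})]$ and $B[P_C] \leq B_{\mathcal{C}}$, so
\begin{align}
B_{\mathcal{AA}} - B_{\mathcal{C}} &\leq B[P_{AA}] - B[P_C] \nonumber \\
&\leq |B|\cdot 4\varepsilon[1 + 2\mathcal{R}(\mathcal{A})].
\end{align}
Rearranging yields $\varepsilon \geq (B_{\mathcal{A A}} - B_{\mathcal{C}})/(4|B|(1 + 2\mathcal{R}(\mathcal{A})))$, contradicting the hypothesis on $\varepsilon$.

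The main obstacle is making the entrywise comparison in the classical counterpart of Lemma \ref{lem:BellTables} rigorous. A naive signed combination $(1+\mathcal{R})\,\Psi\otimes\Psi(\varphi^S) - \mathcal{R}\,\Psi\otimes\Psi(\psi^S)$ need not be a positive bipartite distribution, so one has to construct a \emph{genuine} bipartite classical state whose behavior on the \emph{embedded} effects (the only ones that enter the Bell functional) approximates that of $\omega_{AA}$ to the required accuracy; the strategy that proves Lemma \ref{lem:BellTables} in the quantum case should transfer, in fact more easily, because $\mathcal{C}_n \mathcal{C}_n$ already contains every product distribution and the positivity constraints on joint distributions are elementary. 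I note also that one cannot shortcut the argument by composing $\mathcal{A} \hookrightarrow \mathcal{C}_n \hookrightarrow \mathcal{Q}_n$ and quoting Theorem \ref{thm:QuantumFiniteEps} via Lemma \ref{LemIterate}, since that would weaken the right-hand side by replacing $B_{\mathcal{C}}$ with $B_{\mathcal{Q}}$; the direct classical argument is therefore essential whenever $B_{\mathcal{Q}} > B_{\mathcal{C}}$.
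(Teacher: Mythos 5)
Your proposal is correct and follows essentially the same route the paper intends: the paper gives no separate proof of this theorem, stating only that the strategy of Theorem~\ref{thm:QuantumFiniteEps} ``can be adapted'' because bipartite states on two classical systems cannot violate Bell inequalities, and your classical analogue of Lemma~\ref{lem:BellTables} plus the unitality upgrade and the $|B|$-weighted comparison is exactly that adaptation. The one point you flag as an obstacle resolves just as you suspect: as in the proof of Lemma~\ref{lem:BellTables}, the genuine bipartite classical state is obtained by pulling back effects, $\rho_{CC'}(e,f):=\omega_{AA'}(\Psi^*(e),\Psi^*(f))$, which on the standard basis effects of $\mathcal{C}_n$ yields a nonnegative matrix summing to one, i.e.\ a joint probability distribution that is automatically local; the signed robustness decomposition enters only in the error estimate.
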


\section{Testing quantum theory experimentally}
\label{sec:Experiment}
One way to probe the fundamental probabilistic structure of nature is to perform \emph{theory--agnostic tomography}. 
The essential idea is to isolate a physical system in the laboratory, implement as many preparation and measurement procedures as possible, 
 and fit a GPT to the resulting data. 
This notion has been introduced, thoroughly analyzed, and experimentally implemented in Refs.~\cite{MazurekPKRS16,MazurekPRS21,GraboweckyPCSR21}.

Theory--agnostic tomography can either be used to certify the nonclassicality of a physical system or, more specifically, to verify that the corresponding GPT is one predicted by quantum theory. 
In Refs.~\cite{MazurekPRS21, GraboweckyPCSR21} such tomography of the polarization degree of freedom of a photon yielded a GPT close to the qubit $\mathcal{Q}_2$, and a photonic three-level system turned out to be well-described by the qutrit $\mathcal{Q}_3$.

However, \citet{MazurekPRS21} (also~\cite{GraboweckyPCSR21}) suggest that their experiments suffer from an unavoidable drawback: \emph{``The most significant loophole in this sort of analysis is that the set of preparations and measurements one implements might fail to be tomographically complete for the system of interest.''} 
As the argument goes, failing to implement a tomographically complete set may lead the experimenter to underestimate the true dimension of the GPT, to incorrectly infer operational equivalences, and thus to incorrectly verify or falsify underlying classical or quantum models. \Citet{Pusey18} argues that, in principle, this problem cannot be completely avoided, although there are partial mitigations~\cite{PuseyRM19}.

Here, we suggest a different strategy: instead of testing whether the experimentally determined GPT is quantum, test whether it admits of a plausible quantum simulation. 
This drops the assumption of tomographic completeness (the experiment might give us an ``effective GPT'' that is only a shadow of a higher-dimensional quantum system), and replaces it by an assumption of \emph{plausibility} that we will identify with \emph{univalence}.
To see why this strategy suggests itself, it is helpful to dissect the subtleties that underlay the usual notion of tomographic completeness.

\subsection{Tomographic completeness and the notion of a physical system}
\label{sec:Completeness}
One major strength of generalized contextuality is its operational flavour: in principle, it is defined entirely in terms of notions of direct relevance for laboratory physics. 
There is one notable exception, however: the notion of a \emph{physical system} is taken as a primitive in the framework of operational theories. 
This is particularly evident in category-theoretic formulations of convex-operational theories~\cite{ChiribellaAP10,CoeckeK17} where the theories are defined as a collection of objects (the physical systems) and morphisms (transformations between the systems).
One may then think of quantum theory as a process theory of Hilbert spaces (or operator algebras), and conclude that every given physical system must be described by a GPT that corresponds exactly to the states and effects of a mathematical object of this kind. 
This picture of ``wires'' and ``gates'' hints at an intuition that physical systems should be thought of as well-delineated \emph{objects}, \emph{defined} by their spatial or spatiotemporal location, that come with an attached GPT which we may or may not be able to probe or understand with our current experimental capabilities.

However, many physical systems like the polarization degree of freedom of a photon are arguably not of this kind: they are not well-circumscribed spatially localized objects, but collections of phenomena \emph{defined} by their experimental context. 
Electron spin, for example, corresponds to a two-dimensional subspace of an infinite-dimensional Hilbert space $\mathcal{H}$; under Lorentz boosts, this subspace moves continuously through $\mathcal{H}$, mixing with the momentum degree of freedom. 
Hence, the laboratory rest frame, together with the coupling to the local electromagnetic field, defines this physical system in the first place.

Even in the ``object'' view of physical systems, tomographic completeness is contingent on the way that we define the boundaries of the physical system of interest. 
For example, doing tomography on the polarization of a photon necessarily ignores all other aspects of its wavefunction (such as orbital angular momentum); all physical objects are embedded in larger environments that are disregarded when an experimenter decides to probe a specific localized degree of freedom only.

We therefore make a suggestion that has recently been proposed independently in a similar way in Ref.~\cite{GittonWoods22}: take operational reasoning one step further, and think of physical systems as being \emph{defined} by an experimental scenario. 
Concretely, an experimenter chooses a setup that allows for the reliable implementation of a finite number of preparation and measurement procedures, such that all measurements can be applied to all preparations. For interpretive reasons, we may imagine an ``effective physical system'' as the object that is sent from the preparation to the measurement devices --- or rather as the bundle of statistical properties of this object that are probed in the experiment. 
However, this interpretation is irrelevant for the analysis of the experiment.

Instead of pondering whether the set of implemented procedures is tomographically complete, under this view, it represents a complete set \emph{by definition of the effective physical system}. 
Furthermore, the operational theory that describes the specific set of preparation and measurement procedures gives rise to an \emph{effective GPT $\mathcal{A}'$} that describes this system (for the choice of nomenclature see Figure~\ref{fig:ApproxStatistics} below), and the physicist is guaranteed to be able to experimentally determine some approximation $\mathcal{A}''$ of $\mathcal{A}'$ (see \cref{fig:ApproxStatistics}).

%%
%% FIGURE: Approximating GPTs
%%
\begin{figure}[tbh]
\begin{centering}
\includegraphics[width=0.235\textwidth]{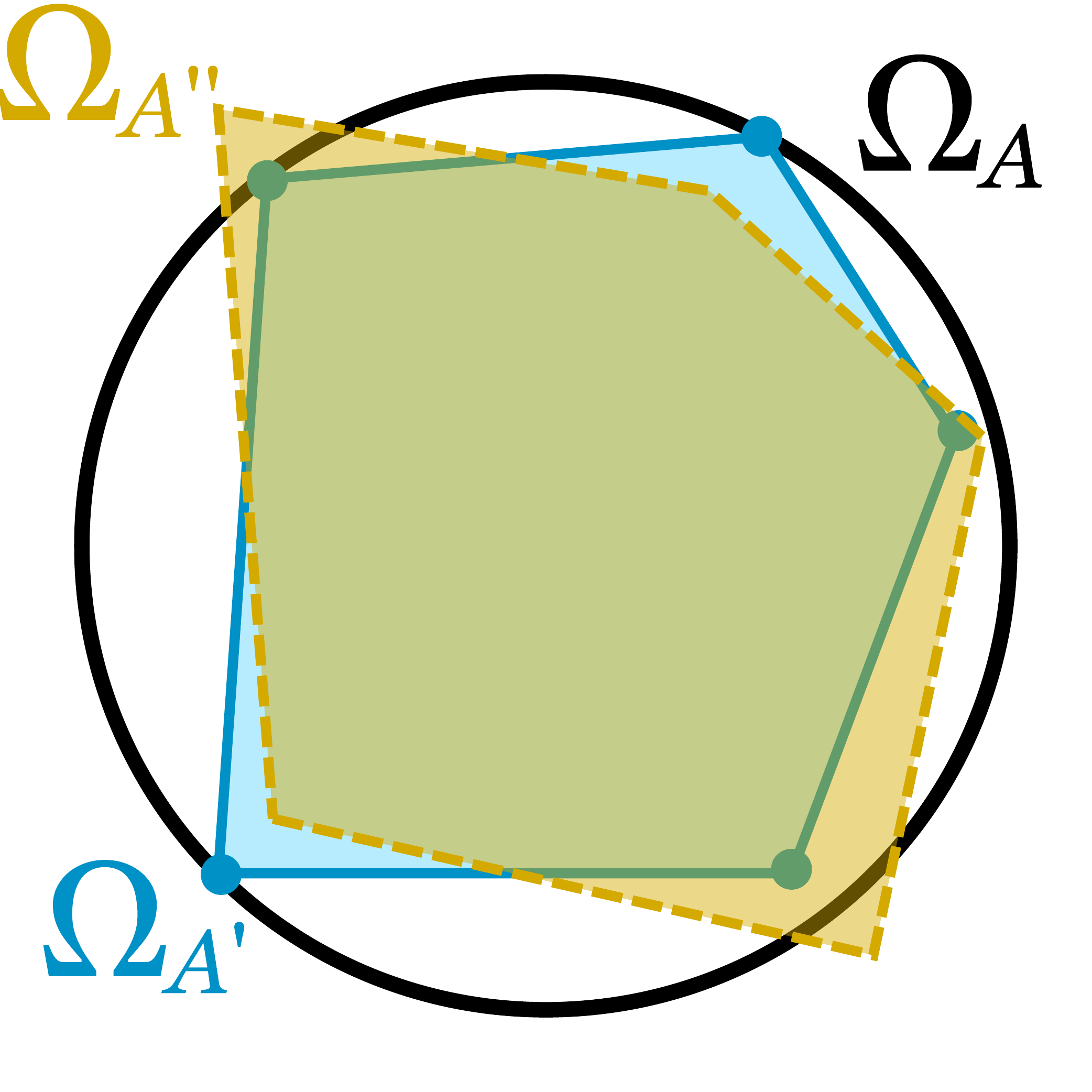}
\caption{%
\label{fig:ApproxStatistics}
\caphead{Approximately embedded state spaces.} 
Suppose all in-principle available preparations and measurements of an experimental scenario are described by an effective GPT $\mathcal{A}$ (here: the black circle of states $\Omega_A$).
Due to resource restrictions, experiments can only implement a finite subset of those procedures, inducing another GPT $\mathcal{A}'$ (here: blue polygon) that can be perfectly embedded into $\mathcal{A}$, assuming univalence.
Moreover, due to only collecting finite statistics, the actual GPT that is experimentally determined, $\mathcal{A}''$, is itself an approximation of $\mathcal{A}'$ (here: yellow dashed polygon of measured states $\Omega_{A''}$). 
There is an $\varepsilon$--embedding of $\mathcal{A}''$ into $\mathcal{A}'$, and hence also into $\mathcal{A}$, where $\varepsilon$ is a function of the experimental error due to finite statistics.
}
\end{centering}
\end{figure}

\subsection{Plausibility of assuming univalence}
\label{sec:Plausibility}
Suppose, then, that we perform theory--agnostic tomography as described above, and do this on an effective physical system. 
Which resulting GPTs $\mathcal{A}''$  can we plausibly expect to find if nature is fundamentally described by quantum theory?

As we have shown in Lemma~\ref{lem:AllContextual}, a priori, \emph{all} GPTs can be simulated by quantum (and even classical) theory, but such simulations will in general have to be \emph{multivalent}, i.e.\ involve a specific sort of finetuning~\cite{WoodS15}. This finetuning could be implemented, for example, by a malicious adversary with supreme technological capabilities. For instance, this adversary could have deliberately manipulated the laboratory such that the experimenter has only access to the subspace of measurements of a classical four-level system $\mathcal{C}_4$ that gives rise to the gbit $\mathcal{A}'$ via the Holevo projection, as explained in Figure~\ref{fig:Holevo}.

However, in any non-adversarial scenario where an experimenter probes his system in a predefined number of ways that are determined by resource requirements and not by deliberate choice, in an experiment that they have built essentially from scratch, we would not expect such finetuning to take place. Indeed, the Holevo projection is equivalent to a preparation-contextual ontological model, and such models are considered implausible in the scientific community concerned with generalized noncontextuality~\cite{Spekkens05,MazurekPKRS16,Spekkens19}. 
The plausibility of univalent simulations relies on very similar arguments; and thus we stipulate:

\begin{assumption}[Multivalent simulations are implausible.]
Multivalent simulations are implausible for essentially the same reasons that contextual ontological models are implausible. 
In a nutshell, such simulations involve an unlikely amount of fine-tuning.
\end{assumption}

A wide range of arguments has been put forward in favor of the assumption of noncontextuality for classical ontological models of operational theories (and thus for contextuality being a certificate of nonclassicality)~\cite{Spekkens05,MazurekPKRS16,Spekkens19}, and we argue that they also apply to our more general situation of simulation by GPTs (with univalence generalizing noncontextuality).
Essentially, a \emph{multivalent} simulation of an effective GPT $\mathcal{A}$ by a fundamental GPT $\mathcal{B}$ is implausible because it would involve \emph{distinct} fundamental states (or effects) of $\mathcal{B}$ that are necessary to explain the statistics of $\mathcal{A}$, but whose difference is exactly ``washed out'' in the effective theory. 

This is arguably similarly conspiratorial~\cite{Spekkens05} as explanations for the violation of Bell inequalities that involve nonlocal hidden variable models: 
 such models postulate instantaneous signalling at the ontological level, but the signalling is miraculously invisible on the phenomenal level such that we are not able to construct a Bell telephone. 
We think that this is implausible (and so are multivalent simulations) because they violate the following methodological principle, generalizing Leibniz' doctrine of the ``identity of the indiscernibles''~\cite{Spekkens19,MazurekPKRS16} as a motivation of noncontextuality: \emph{processes that are statistically indistinguishable in an effective theory should not require explanation by multiple distinguishable processes in a more fundamental theory.} 
This postulate is in spirit similar to the motivations for model selection suggested in Ref.~\cite{Daley}.

More concretely, several technical results of this article that support this view: multivalence cannot arise from noise, complete decoherence or coarse-graining processes (Lemmas~\ref{lem:Restricted} and~\ref{LemDecoherence}), and univalence extends transitively across different levels of physical description (Lemma~\ref{LemIterate}).
Multivalent simulation is a phenomenon of finetuning~\cite{WoodS15} that we do not expect to arise naturally. 
As such, we assert that, outside of deliberately engineered (e.g.\ adversarial) scenarios, only univalent simulations are plausible.

\subsection{Approximate embeddings account for experimental imperfections}
\label{SubsecApproxExp}
Let us now discuss how to implement a test of quantum theory via theory-agnostic tomography under these assumptions, and how to deal with unavoidable experimental imperfections.

The physicist begins by designing an experimental scenario, which involves the description of a set of preparation and measurement procedures that can in principle be implemented in the given scenario. This defines an operational theory, giving rise to an effective GPT $\mathcal{A}$ (in the terminology of \citet{MazurekPRS21}, the ``true'' GPT) that describes the statistical relations of all in principle available preparation and measurement procedures of the given experimental scenario. 
If the experimenter could implement all preparation and measurement procedures an infinite number of times to collect perfect statistics, assuming perfect stability of the experimental setup, this is the unique GPT that they would infer from the resulting data. 

In practice, experiments~\cite{MazurekPRS21,GraboweckyPCSR21} will have at least two drawbacks as compared to this idealized description. 
First, it will in general not be possible to implement \emph{all} of the (potentially infinitely many) possible preparations or measurements. 
For example, think of a measurement being determined by the direction $\vec{n}$ of inhomogeneity of a magnetic field, as in a Stern-Gerlach experiment. 
We know that our experimental scenario allows in principle an uncountably-infinite set of measurements indexed by $\vec{n}$, but we will actually only ever be able to measure in a finite number of such directions. 
This means that the experimenter will ultimately only be able to implement a finite subset of states and measurements.

To this end, the physicist will have to decide \emph{before} the experiment on a \emph{finite} number of preparation and measurement procedures that they will implement. 
The experimenter is supposed to choose this subset of procedures in terms of \emph{resource requirements}: bounding the amount of (for example) energy, time, or resolution should determine the ``cut-off'', i.e.\ the decision of which procedures to implement. 
Crucially, the physicist should not cherry-pick a subset of procedures by deliberately leaving out some other procedures that are as easy to implement as the chosen ones: this could potentially open the door to undesired finetuning. 
Needless to say, the choice of procedures should in particular not be done \emph{after} the experiment, e.g.\ by discarding some of the data selectively; the general standards of experimental physics apply here too!

In the Stern-Gerlach example, the experimenter could begin by deciding on a discretization that samples $N$ approximately equally spaced points on the unit sphere, and implements the corresponding projective measurements in these directions. 
The distance between the points (and hence the number $N$) could be determined by the accuracy of magnet rotations available to the experimenter, under the resource constraints of the experiment.

Consider the effective GPT $\mathcal{A}'$ obtained from the operational theory that describes this \emph{finite} set of preparation and measurement procedures; due to finiteness, it will generally be a \emph{restricted} GPT with sets of states and effects that are polytopes. 
This effective GPT $\mathcal{A}'$ of implemented procedures must then be simulated by the more exhaustive GPT $\mathcal{A}$ of all available procedures. 
If the experimenter adheres to the fair practices we just described, we expect that this simulation should not be finetuned: in other words, it should be \emph{univalent}, and $\mathcal{A}'$ should have an exact embedding into $\mathcal{A}$.

The second drawback is that the experimenter can only ever collect a finite amount of statistics.
The frequencies that they measure in the experiment will only be approximations to the actual probabilities. 
Thus, they will actually obtain an approximation $\mathcal{A}''$ of $\mathcal{A}'$, and the states and effects of this approximation will not in general be subsets of the states and effects of $\mathcal{A}'$. 
Indeed, in the experiment on theory-agnostic tomography on photon polarization, \citet{MazurekPRS21} found that some of the experimentally determined state vectors ``stick out'' of the Bloch ball of $\mathcal{A}'=\mathcal{A}$, corresponding to nonstates -- an obvious artefact of finite statistics.

We can say more about how $\mathcal{A}''$ approximates $\mathcal{A}'$ by building on an observation of \Citet{MazurekPRS21}: in their specific experiment, shrinking the state and effect vectors of $\mathcal{A}''$ by a small amount (0.14\%) embeds them in the qubit state space $\mathcal{A}$. 
In general, we expect that there are maps $\Phi:A''\to A'$ and $\Psi:(A'')^*\to (A')^*$ that are close to the identity map, ``shrinking'' the state and effect spaces $\Omega_{A''}$ and $E_{A''}$ by a small amount, and embedding them into $\Omega_{A'}$ and $E_{A'}$. 
Hence, these maps define an $\varepsilon$-embedding of $\mathcal{A}''$ into $\mathcal{A}'$, and it is clear that the embedding becomes more exact (i.e.\ $\varepsilon\searrow 0$) when the number of experimental runs is increased and more and more data is collected.

While the exact error analysis depends on the specific details of the protocol that is used to produce the estimate $\mathcal{A}''$, we can imagine that the experimenter chooses a number of repetitions that is large enough to conclude with high confidence that the experimentally determined GPT $\mathcal{A}''$ can be $\varepsilon$-embedded into $\mathcal{A}'$, and thus into $\mathcal{A}$, for some small value of $\varepsilon$.
With the null-hypothesis that $\mathcal{A}$ can be exactly embedded into some $\mathcal{Q}_n$ or $\mathcal{Q}_\infty$, i.e.\ admits of an univalent quantum simulation,
 this implies that $\mathcal{A}''$ must also be $\varepsilon$-embeddable into quantum theory by virtue of Lemma~\ref{LemIterate}. 
This can be tested: if it turned out (say, numerically) that the experimentally determined GPT $\mathcal{A}''$ does not admit of an $\varepsilon$-embedding into any $\mathcal{Q}_n$ or $\mathcal{Q}_\infty$, then the experiment would challenge QT -- and falsify it under the assumption of univalence.

\subsection{The experimental test in a nutshell}
\label{sec:Nutshell}
We are now in a position to provide a scheme of an experimental test of quantum theory.
In particular, suppose we start with the null hypothesis that the effective behaviour of some physical system we are testing {\em can} be explained fundamentally by quantum theory on a finite-dimensional or separable Hilbert space, i.e.\ that its effective GPT $\mathcal{A}$ can be simulated by some fundamental $\mathcal{Q}_n$ or $\mathcal{Q}_\infty$.
The following outlines a way to falsify this hypothesis, under the assumption (motivated above) that the simulation must be univalent, i.e.\ correspond to an embedding.

\begin{enumerate}
\item Identify the effective physical system to probe. 
That is, decide on a finite set of available preparation procedures $\{p\}$ and measurement procedures (with potential measurement outcomes $\{(m,k)\}$) to be implemented, while adhering to the fair practices described in Subsection~\ref{SubsecApproxExp}.
\item By performing all pairs of selected preparation and measurement procedures a large number of times, collect as much data as feasible to build an operational description (i.e.\ the set of frequencies approximating $\{P(k|p,m)\}$).
\item From these operational statistics, reconstruct an effective GPT $\mathcal{A}''$ (using, e.g.,\ the methodology in Refs.~\cite{GraboweckyPCSR21,MazurekPRS21}).
\item From the statistics, infer some value $\varepsilon>0$ (as small as possible) such that, with high confidence, $\mathcal{A}''$ can be $\varepsilon$-embedded into the unknown $\mathcal{A}'$.
As we have argued in the previous subsection, this $\varepsilon$ results from the inaccuracy of finite statistics, and can thus be estimated with suitable statistical methods.
\item \label{step:certify}Certify with numerical methods (based on analytic results) that $\mathcal{A}''$ cannot be $\varepsilon$-embedded into quantum theory.

If this succeeds, then the assumption of univalence implies that  our effective system represents an object whose statistical properties cannot be explained by quantum theory.
\end{enumerate}
The certification in step 5 can be done by various methods. 
One method follows from our results in Subsection~\ref{SubsecNLNE}:
\begin{enumerate}[\ref{step:certify}.1]
\item \label{step:BellStuff} Over the bipartite states (\cref{def:BipartiteStates}) of two copies of $\mathcal{A}''$, search (numerically -- the actual Bell test does not need to be performed) for a Bell functional whose maximum $B_{\mathcal{A}''\mathcal{A}''}$ exceeds the quantum value $B_\mathcal{Q}$. 
This need not be the maximum, but the greater the violation, the more tolerance to experimental error is admissible to falsify the null hypothesis.
\item \label{step:SelfEntangle} Calculate the self-entanglability (\cref{def:SelfEntangle}) $\mathcal{R}(\mathcal{A''})$ or an upper bound thereof.
\item \label{step:ApplyToUs} Check that inequality~(\ref{eq:QuantumEpsBound}) is satisfied, such that Theorem~\ref{thm:QuantumFiniteEps} implies that $\mathcal{A}''$ cannot be $\varepsilon$-embedded into any $\mathcal{Q}_n$ or $\mathcal{Q}_\infty$.
\end{enumerate}

A few clarifications are in order. 
First, the failure to find any Bell functional on $\mathcal{A}''\mathcal{A}''$ that violates the quantum bound does not imply that the effective theory $\mathcal{A}$ {\em can} be embedded in quantum theory (i.e.\ and hence can arise from univalent simulation) -- at least not for given finite $\varepsilon>0$. 
Whether such failure in the limit $\varepsilon\searrow 0$ guarantees exact embeddability is an open problem.

In general, our nonlocality test for certifying nonembeddability (steps 5.1--5.3) is not optimal, as the comparison of Examples~\ref{LemApproxGbit} and~\ref{eg:TsirelsonGbit} for the gbit demonstrates.
One may hence replace it by a more practical and efficient algorithm which, however, has yet to be developed. 
In some cases, it may be possible to use certain specific features of $\mathcal{A}''$ directly to certify nonembeddability; again, the gbit Example~\ref{LemApproxGbit} in \cref{sec:ExGbit} demonstrates this. 
 
Our nonlocality test has a useful spin-off: comparing $B_{\mathcal{A}''\mathcal{A}''}$ with $B_{\mathcal{C}}$ instead of $B_{\mathcal{Q}}$ in step 5.1 (and using Eq.~(\ref{eq:ClassicalEpsBound}) of Theorem~\ref{thm:ClassicalFiniteEps} in step 5.3) allows the experimenter to exclude approximate embeddings into \emph{classical theory}. 
That is, finding nonlocal self-correlations demonstrates the nonexistence of a discrete noncontextual ontological model, i.e.\ the presence of generalized contextuality.
One may thus test classical and quantum embeddability together in the same experiment.

\subsection{Towards a concrete implementation}
A concrete starting point for implementing, testing and refining the scheme proposed above could lie in the analysis of existing data~\cite{Fadel} from experiments with Bose-Einstein condensates (BECs). \citet{Schmied} have demonstrated the existence of Bell correlations between the spins of about 480 atoms in a BEC via collective measurements. Since it is impossible to actually perform a Bell test with local measurements in the BEC, this result relied on strong additional assumptions such as the correct calibration of the measurement devices and quantum theory's algebra of Pauli matrices.

The BEC is, on the one hand, a very high-dimensional quantum system, such that no tomographically complete data can be obtained; and, on the other hand, it behaves physically like a \emph{single} system well-described by a prepare-and-measure scenario. This makes it an ideal testbed for our proposal, which can potentially certify nonclassicality and quantumness under fewer assumptions as follows. First, perform GPT tomography not of the full BEC (which is too high-dimensional), but of an \emph{effective physical subsystem} that is defined by a finite number of preparations of spin-squeezed states and collective spin measurements used in the detection of Bell correlations in Ref.~\cite{Schmied}. Then, test whether the resulting GPT $\mathcal{A}''$ can be $\varepsilon$-embedded into classical theory $\mathcal{C}_n$ or quantum theory $\mathcal{Q}_n$. Anticipating a positive answer to the latter question, use this to determine the smallest $n$ such that $\mathcal{A}''$ is $\varepsilon$-embeddable into $\mathcal{Q}_n$.

The previous detection of Bell correlations suggests that $\mathcal{A}''$ should not be embeddable into any $\mathcal{C}_n$. This would in itself be a very interesting experimental result, certifying a notion of nonclassicality that is strictly stronger than Wigner negativity~\cite{SpekkensNeg08}. Moreover, our test of embeddability into $\mathcal{Q}_n$ could give us a lower bound on the Hilbert space dimension $n$ that is necessary to explain the data, similar to a dimension witness~\cite{Gallego2010}. We expect that the data analysis methods of~\cite{MazurekPRS21,GraboweckyPCSR21} have to be slightly extended to account for peculiarities of this more complex quantum system, for example due to the fact that $\mathcal{A}''$ will not be unrestricted, and since the embedding error parameter $\varepsilon$ has to be suitably related to the experimental uncertainties. We therefore leave a more detailed elaboration or implementation of this proposal to future work.

\section{Discussion and outlook}
In this article, we have introduced the notion of simulation of one general probabilistic theory by another, motivated  by the question of which effective GPTs we can plausibly expect to find in prepare-and-measure experiments if nature is fundamentally quantum. We have identified \emph{univalence} as an important property of simulations that we should naturally expect to hold, generalizing Spekkens' notion of noncontextuality~\cite{Spekkens05}, and leading to the notion of linear \emph{embeddings} of one GPT into another.
We have proven a multitude of results on exact and approximate embeddings, in particular into quantum theory. 
For example, we have shown that the special Euclidean Jordan algebras~\cite{BarnumGW20} (including real and quaternionic quantum theory) are the unique unrestricted generalized probabilistic theories (GPTs) that can be embedded exactly into quantum theory, we have elucidated the relation between noncontextuality inequalities~\cite{MazurekPKRS16} and approximate embeddings into classical theory, and we have constructed a method for proving the nonexistence of an approximate embedding into quantum or classical theory by using known results on Bell nonlocality.

Based on recent proposals for theory--agnostic tomography~\cite{MazurekPRS21,GraboweckyPCSR21}, we have then used these insights to suggest a novel scheme for an experimental test of quantum theory. 
Our proposal arguably avoids the tomographic completeness loophole of previous approaches because our notion of embedding is specifically targeted to the study of \emph{effective GPTs} that we may obtain as ``shadows'' of higher-dimensional quantum or other theories. 
To this end, we have argued that generalized noncontextuality should be further generalized to a methodological principle that constrains how effective theories can have plausible explanations in terms of more fundamental theories. 
Altogether, this reasoning suggests that effective GPTs found in the laboratory in any nonadversarial scenario should have approximate embeddings into quantum theory. 
Experimental evidence for the opposite would therefore challenge quantum theory.

Our work raises several interesting questions. 
From a practical point of view, it would be useful to find computationally more efficient and optimal algorithms for the certification of nonembeddability into quantum theory. 
Our nonlocality test shows that such certification is in principle possible for a large class of (possibly restricted) GPTs, but it is in general far from optimal (see Examples~\ref{LemApproxGbit} and \ref{eg:TsirelsonGbit}). 
Furthermore, it is an open question whether it yields nonzero bounds for all GPTs that cannot be perfectly embedded, and it involves steps that we expect to be computationally hard, including the search for post-quantum Bell inequality violations on two copies of the system. 
It would also be interesting to generalize our approach to the recently proposed scenario of ``GPT fragments''~\cite{SelbySWSKS21b} where preparations are allowed to succeed with probability strictly less than one.

Our proposed experimental scheme to test quantum theory invites further study, too. In particular, the statistical analysis of fitting a GPT to the data, and the estimate of $\varepsilon$ as a ``distance'' between $\mathcal{A}''$ and $\mathcal{A}'$ in step 4 of our protocol in Subsection~\ref{sec:Nutshell}, deserves further exploration. We hope that our proposal leads to actual implementations, which may in turn inform theoretical efforts to find more practical methods for its analysis.

The interpretation of such experiments --- in particular, the idea that their results can potentially challenge quantum theory --- relies on the claim that simulations of effective theories by fundamental theories should be expected to be univalent. In the special case that the fundamental theory is classical, this reduces to the assumption of noncontextuality, which has been thoroughly explored and motivated in a large body of literature, e.g.~\cite{Spekkens05,MazurekPKRS16,Spekkens19}. While our notion of univalence has arguably similarly compelling motivations as noncontextuality (after all, the latter is essentially a special case of the former), it is still a novel notion that should be analyzed further. This does not only involve conceptual analysis, but also the proof of theorems which show that multivalence is very hard to obtain. In this paper, we have begun to establish results along this line, by showing that multivalence cannot arise from noise, complete decoherence or coarse-graining processes (Lemmas~\ref{lem:Restricted} and~\ref{LemDecoherence}), and univalence extends transitively across different levels of physical description (Lemma~\ref{LemIterate}). A resource-theoretic perspective might provide the suitable language for further exploration of this line of argumentation.

Our results also raise some interesting questions in mathematical physics. 
For example, every algorithm that yields a nontrivial bound on the embeddability for, say, all nonembeddable unrestricted GPTs must necessarily recognize the special Euclidean Jordan algebras as singular cases in which this bound becomes trivial. 
This hints at some deeper mathematical structures at the intersection of embeddability and operator algebras. 
Does our nonlocality test already satisfy this desideratum? 
This would imply an unexpected relation between Jordan algebras and Bell inequalities. 
We hope that our work initiates further study of these surprising interrelations.

\hidetoc
\vbox{
\section*{Note}
This article supersedes the earlier preprint \href{https://arxiv.org/abs/2004.06136}{arXiv:2004.06136}, which contains a very condensed proof of Theorem~\ref{thm:Simulate}.
}

\section*{Acknowledgments}
\vbox{
We are grateful for helpful discussions with Howard Barnum, David Schmid, Farid Shahandeh, John H.\ Selby, Robert W.\ Spekkens, Reinhard Werner, and Alex Wilce. We acknowledge support from the Austrian Science Fund (FWF) via project P 33730-N. This research was supported in part by Perimeter Institute for Theoretical Physics. Research at Perimeter Institute is supported by the Government of Canada through the Department of Innovation, Science, and Economic Development, and by the Province of Ontario through the Ministry of Colleges and Universities.
}

\restoretoc

\bibliographystyle{unsrtnat}
\bibliography{embedding}

\begin{thebibliography}{113}
\providecommand{\natexlab}[1]{#1}
\providecommand{\url}[1]{\texttt{#1}}
\expandafter\ifx\csname urlstyle\endcsname\relax
  \providecommand{\doi}[1]{doi: #1}\else
  \providecommand{\doi}{doi: \begingroup \urlstyle{rm}\Url}\fi

\bibitem[Zurek(2003)]{Zurek03}
W.~H. Zurek.
\newblock {Decoherence, einselection, and the quantum origins of the
  classical}.
\newblock \emph{Reviews of Modern Physics}, 75\penalty0 (3):\penalty0 715--775,
  2003.
\newblock \doi{10.1103/RevModPhys.75.715}.

\bibitem[Baez(2012)]{Baez12}
J.~C. Baez.
\newblock {Division Algebras and Quantum Theory}.
\newblock \emph{Foundations of Physics}, 42\penalty0 (7):\penalty0 819--855,
  2012.
\newblock \doi{10.1007/s10701-011-9566-z}.

\bibitem[Barnum et~al.(2020)Barnum, Graydon, and Wilce]{BarnumGW20}
H.~Barnum, M.~A. Graydon, and A.~Wilce.
\newblock Composites and {C}ategories of {E}uclidean {J}ordan {A}lgebras.
\newblock \emph{{Quantum}}, 4:\penalty0 359, 2020.
\newblock \doi{10.22331/q-2020-11-08-359}.

\bibitem[Hardy(2001)]{Hardy01}
L.~Hardy.
\newblock {Quantum Theory From Five Reasonable Axioms}.
\newblock 2001.
\newblock URL \url{http://arxiv.org/abs/quant-ph/0101012}.

\bibitem[Barrett(2007)]{Barrett07}
J.~Barrett.
\newblock {Information processing in generalized probabilistic theories}.
\newblock \emph{Phys. Rev. A}, 75\penalty0 (3):\penalty0 32304, 2007.
\newblock \doi{10.1103/PhysRevA.75.032304}.

\bibitem[M{\"{u}}ller(2021)]{Mueller21}
M.~P. M{\"{u}}ller.
\newblock {Probabilistic theories and reconstructions of quantum theory}.
\newblock \emph{SciPost Physics Lecture Notes}, 2021.
\newblock \doi{10.21468/SciPostPhysLectNotes.28}.

\bibitem[Pl{\'{a}}vala(2021)]{Plavala21}
M.~Pl{\'{a}}vala.
\newblock {General probabilistic theories: An introduction}.
\newblock 2021.
\newblock URL \url{http://arxiv.org/abs/2103.07469}.

\bibitem[Will(2014)]{Will14}
C.~M. Will.
\newblock {The Confrontation between General Relativity and Experiment}.
\newblock \emph{Living Reviews in Relativity}, 17\penalty0 (1):\penalty0 4,
  2014.
\newblock \doi{10.12942/lrr-2014-4}.

\bibitem[Sorkin(1994)]{Sorkin94}
R.~D. Sorkin.
\newblock {Quantum Mechanics as Quantum Measure Theory}.
\newblock \emph{Modern Physics Letters A}, 09\penalty0 (33):\penalty0
  3119--3127, 1994.
\newblock \doi{10.1142/S021773239400294X}.

\bibitem[Sinha et~al.(2010)Sinha, Couteau, Jennewein, Laflamme, and
  Weihs]{SinhaCJLW10}
U.~Sinha, C.~Couteau, T.~Jennewein, R.~Laflamme, and G.~Weihs.
\newblock {Ruling out multi-order interference in quantum mechanics.}
\newblock \emph{Science}, 329\penalty0 (5990):\penalty0 418--21, 2010.
\newblock \doi{10.1126/science.1190545}.

\bibitem[Ududec et~al.(2010)Ududec, Barnum, and Emerson]{UdudecBE10}
C.~Ududec, H.~Barnum, and J.~Emerson.
\newblock {Three Slit Experiments and the Structure of Quantum Theory}.
\newblock \emph{Foundations of Physics}, 41\penalty0 (3):\penalty0 396--405,
  2010.
\newblock \doi{10.1007/s10701-010-9429-z}.

\bibitem[Barnum et~al.(2014)Barnum, M{\"{u}}ller, and Ududec]{BarnumMU14}
H.~Barnum, M.~P. M{\"{u}}ller, and C.~Ududec.
\newblock {Higher-order interference and single-system postulates
  characterizing quantum theory}.
\newblock \emph{New Journal of Physics}, 16\penalty0 (12):\penalty0 123029,
  2014.
\newblock \doi{10.1088/1367-2630/16/12/123029}.

\bibitem[Lee and Selby(2017)]{LeeS17}
C.~M. Lee and J.~H. Selby.
\newblock {Higher-Order Interference in Extensions of Quantum Theory}.
\newblock \emph{Foundations of Physics}, 47\penalty0 (1):\penalty0 89--112,
  2017.
\newblock \doi{10.1007/s10701-016-0045-4}.

\bibitem[Peres(1979)]{Peres79}
A.~Peres.
\newblock {Proposed Test for Complex versus Quaternion Quantum Theory}.
\newblock \emph{Physical Review Letters}, 42\penalty0 (11):\penalty0 683--686,
  1979.
\newblock \doi{10.1103/PhysRevLett.42.683}.

\bibitem[Adler(1995)]{Adler95}
S.~L. Adler.
\newblock \emph{{Quaternionic Quantum Mechanics and Quantum Fields}}.
\newblock Oxford University Press, Oxford, 1995.
\newblock ISBN 0-19-506643-X.

\bibitem[Graydon(2011)]{Graydon11}
M.~A. Graydon.
\newblock {Quaternions and Quantum Theory}.
\newblock Master's thesis, University of Waterloo, 2011.
\newblock URL \url{http://hdl.handle.net/10012/6168}.

\bibitem[Procopio et~al.(2017)Procopio, Rozema, Wong, Hamel, O'Brien, Zhang,
  Dakic, and Walther]{Procopio17}
L.~M. Procopio, L.~A. Rozema, Z.~J. Wong, D.~R. Hamel, K.~O'Brien, X.~Zhang,
  B.~Dakic, and P.~Walther.
\newblock {Single-photon test of hyper-complex quantum theories using a
  metamaterial}.
\newblock \emph{Nature Communications}, 8:\penalty0 1--9, 2017.
\newblock \doi{10.1038/ncomms15044}.

\bibitem[Sawant et~al.(2014)Sawant, Samuel, Sinha, Sinha, and
  Sinha]{SawantSSSS14}
R.~Sawant, J.~Samuel, A.~Sinha, S.~Sinha, and U.~Sinha.
\newblock {Nonclassical paths in quantum interference experiments}.
\newblock \emph{Physical Review Letters}, 113\penalty0 (12):\penalty0 1--5,
  2014.
\newblock \doi{10.1103/PhysRevLett.113.120406}.

\bibitem[Namdar et~al.(2023)Namdar, Jenke, Calafell, Trenti, Radonji{\'{c}},
  Daki{\'{c}}, Walther, and Rozema]{Namdar21}
P.~Namdar, P.~K. Jenke, I.~A. Calafell, A.~Trenti, M.~Radonji{\'{c}},
  B.~Daki{\'{c}}, P.~Walther, and L.~A. Rozema.
\newblock Experimental higher-order interference in a nonlinear triple slit.
\newblock \emph{Physical Review A}, 107:\penalty0 032211, 2023.

\bibitem[Bell(1964)]{Bell64}
J.~S. Bell.
\newblock {On the Einstein Podolsky Rosen paradox}.
\newblock \emph{Physics}, 1\penalty0 (195), 1964.
\newblock \doi{10.1103/PhysicsPhysiqueFizika.1.195}.

\bibitem[Clauser et~al.(1969)Clauser, Horne, Shimony, and Holt]{ClauserHSH69}
J.~Clauser, M.~Horne, A.~Shimony, and R.~Holt.
\newblock {Proposed Experiment to Test Local Hidden-Variable Theories}.
\newblock \emph{Physical Review Letters}, 23\penalty0 (15):\penalty0 880--884,
  1969.
\newblock \doi{10.1103/PhysRevLett.23.880}.

\bibitem[Brunner et~al.(2014)Brunner, Cavalcanti, Pironio, Scarani, and
  Wehner]{BrunnerCPSW14}
N.~Brunner, D.~Cavalcanti, S.~Pironio, V.~Scarani, and S.~Wehner.
\newblock {Bell nonlocality}.
\newblock \emph{Reviews of Modern Physics}, 86\penalty0 (2):\penalty0 419--478,
  2014.
\newblock \doi{10.1103/RevModPhys.86.419}.

\bibitem[Kochen and Specker(1967)]{KochenS1967}
S.~Kochen and E.~P. Specker.
\newblock The problem of hidden variables in quantum mechanics.
\newblock \emph{J. Math. Mech.}, 17\penalty0 (1):\penalty0 59--87, 1967.

\bibitem[Cabello et~al.(1996)Cabello, Estebaranz, and
  Garc{\'{i}}a-Alcaine]{CabelloEG96}
A.~Cabello, J.~M. Estebaranz, and G.~Garc{\'{i}}a-Alcaine.
\newblock {Bell-Kochen-Specker theorem: A proof with 18 vectors}.
\newblock \emph{Physics Letters A}, 212\penalty0 (4):\penalty0 183--187, 1996.
\newblock \doi{10.1016/0375-9601(96)00134-X}.

\bibitem[Cabello(2008)]{Cabello08}
A.~Cabello.
\newblock {Experimentally Testable State-Independent Quantum Contextuality}.
\newblock \emph{Physical Review Letters}, 101\penalty0 (21):\penalty0 210401,
  2008.
\newblock \doi{10.1103/PhysRevLett.101.210401}.

\bibitem[Ara{\'{u}}jo et~al.(2013)Ara{\'{u}}jo, Quintino, Budroni, Cunha, and
  Cabello]{AraujoQBCC13}
M.~Ara{\'{u}}jo, M.~T. Quintino, C.~Budroni, M.~T. Cunha, and A.~Cabello.
\newblock {All noncontextuality inequalities for the n-cycle scenario}.
\newblock \emph{Physical Review A}, 88\penalty0 (2):\penalty0 022118, 2013.
\newblock \doi{10.1103/PhysRevA.88.022118}.

\bibitem[Cabello et~al.(2014)Cabello, Severini, and Winter]{CabelloSW14}
A.~Cabello, S.~Severini, and A.~Winter.
\newblock {Graph-Theoretic Approach to Quantum Correlations}.
\newblock \emph{Physical Review Letters}, 112\penalty0 (4):\penalty0 040401,
  2014.
\newblock \doi{10.1103/PhysRevLett.112.040401}.

\bibitem[Ac{\'{i}}n et~al.(2015)Ac{\'{i}}n, Fritz, Leverrier, and
  Sainz]{AcinFLS15}
A.~Ac{\'{i}}n, T.~Fritz, A.~Leverrier, and A.~B. Sainz.
\newblock {A Combinatorial Approach to Nonlocality and Contextuality}.
\newblock \emph{Communications in Mathematical Physics}, 334\penalty0
  (2):\penalty0 533--628, 2015.
\newblock \doi{10.1007/s00220-014-2260-1}.

\bibitem[Catani et~al.(2023)Catani, Leifer, Schmid, and Spekkens]{CataniLSS21}
Lorenzo Catani, Matthew Leifer, David Schmid, and Robert~W. Spekkens.
\newblock {Why interference phenomena do not capture the essence of quantum
  theory}.
\newblock \emph{Quantum}, 7:\penalty0 1119, 2023.
\newblock URL \url{http://arxiv.org/abs/2111.13727}.

\bibitem[Spekkens(2005)]{Spekkens05}
R.~W. Spekkens.
\newblock {Contextuality for preparations, transformations, and unsharp
  measurements}.
\newblock \emph{Physical Review A - Atomic, Molecular, and Optical Physics},
  71\penalty0 (5):\penalty0 052108, 2005.
\newblock \doi{10.1103/PhysRevA.71.052108}.

\bibitem[Spekkens et~al.(2009)Spekkens, Buzacott, Keehn, Toner, and
  Pryde]{SpekkensBKTP08}
R.~W. Spekkens, D.~H. Buzacott, A.~J. Keehn, B.~Toner, and G.~J. Pryde.
\newblock {Preparation Contextuality Powers Parity-Oblivious Multiplexing}.
\newblock \emph{Physical Review Letters}, 102\penalty0 (1):\penalty0 010401,
  2009.
\newblock \doi{10.1103/PhysRevLett.102.010401}.

\bibitem[Pusey(2014)]{Pusey14}
M.~F. Pusey.
\newblock {Anomalous Weak Values Are Proofs of Contextuality}.
\newblock \emph{Physical Review Letters}, 113\penalty0 (20):\penalty0 200401,
  2014.
\newblock \doi{10.1103/PhysRevLett.113.200401}.

\bibitem[Pusey and Leifer(2015)]{PuseyL15}
M.~F. Pusey and M.~S. Leifer.
\newblock {Logical pre- and post-selection paradoxes are proofs of
  contextuality}.
\newblock \emph{Electronic Proceedings in Theoretical Computer Science},
  195:\penalty0 295--306, 2015.
\newblock \doi{10.4204/EPTCS.195.22}.

\bibitem[Schmid and Spekkens(2018)]{SchmidS18}
D.~Schmid and R.~W. Spekkens.
\newblock {Contextual Advantage for State Discrimination}.
\newblock \emph{Physical Review X}, 8\penalty0 (1):\penalty0 011015, 2018.
\newblock \doi{10.1103/PhysRevX.8.011015}.

\bibitem[Selby et~al.(2023{\natexlab{a}})Selby, Schmid, Wolfe, Sainz, Kunjwal,
  and Spekkens]{SelbySWSKS21}
J.~H. Selby, D.~Schmid, E.~Wolfe, A.~B. Sainz, R.~Kunjwal, and R.~W. Spekkens.
\newblock {Contextuality without Incompatibility}.
\newblock \emph{Physical Review Letters}, 130:\penalty0 230201,
  2023{\natexlab{a}}.
\newblock URL \url{https://arxiv.org/abs/2106.09045}.

\bibitem[Kunjwal(2016)]{Kunjwal16}
R.~Kunjwal.
\newblock \emph{{Contextuality beyond the Kochen-Specker theorem}}.
\newblock PhD thesis, Homi Bhabha National Institute, 2016.

\bibitem[Mazurek et~al.(2016)Mazurek, Pusey, Kunjwal, Resch, and
  Spekkens]{MazurekPKRS16}
M.~D. Mazurek, M.~F. Pusey, R.~Kunjwal, K.~J Resch, and R.~W. Spekkens.
\newblock {An experimental test of noncontextuality without unphysical
  idealizations}.
\newblock \emph{Nature Communications}, 2016.
\newblock \doi{10.1038/ncomms11780}.

\bibitem[Krishna et~al.(2017)Krishna, Spekkens, and Wolfe]{KrishnaSW17}
A.~Krishna, R.~W. Spekkens, and E.~Wolfe.
\newblock {Deriving robust noncontextuality inequalities from algebraic proofs
  of the Kochen--Specker theorem: the Peres--Mermin square}.
\newblock \emph{New Journal of Physics}, 19\penalty0 (12):\penalty0 123031,
  2017.
\newblock \doi{10.1088/1367-2630/AA9168}.

\bibitem[Spekkens(2019)]{Spekkens19}
Robert~W. Spekkens.
\newblock The ontological identity of empirical indiscernibles: Leibniz's
  methodological principle and its significance in the work of einstein.
\newblock 2019.
\newblock URL \url{{https://arxiv.org/abs/1909.04628}}.

\bibitem[Grabowecky et~al.(2022)Grabowecky, Pollack, Cameron, Spekkens, and
  Resch]{GraboweckyPCSR21}
M.~Grabowecky, C.~Pollack, A.~Cameron, R.~W. Spekkens, and K.~J. Resch.
\newblock Experimentally bounding deviations from quantum theory for a photonic
  three-level system using theory-agnostic tomography.
\newblock \emph{Physical Review A}, 105:\penalty0 032204, 2022.
\newblock URL \url{http://arxiv.org/abs/2108.05864}.

\bibitem[Mazurek et~al.(2021)Mazurek, Pusey, Resch, and Spekkens]{MazurekPRS21}
M.~D. Mazurek, M.~F. Pusey, K.~J. Resch, and R.~W. Spekkens.
\newblock Experimentally bounding deviations from quantum theory in the
  landscape of generalized probabilistic theories.
\newblock \emph{PRX Quantum}, 2:\penalty0 020302, 2021.
\newblock \doi{10.1103/PRXQuantum.2.020302}.

\bibitem[Aliprantis and Tourky(2007)]{AliprantisT07}
C.~D. Aliprantis and R~Tourky.
\newblock \emph{Cones and Duality}.
\newblock American Mathematical Society, 2007.

\bibitem[Chiribella et~al.(2010)Chiribella, D'Ariano, and
  Perinotti]{ChiribellaAP10}
G.~Chiribella, G.~M. D'Ariano, and P.~Perinotti.
\newblock {Probabilistic theories with purification}.
\newblock \emph{Phys. Rev. A}, 81\penalty0 (6):\penalty0 062348, 2010.
\newblock \doi{10.1103/PhysRevA.81.062348}.

\bibitem[Janotta and Lal(2013)]{JanottaLal13}
P.~Janotta and R.~Lal.
\newblock Generalized probabilistic theories without the no-restriction
  hypothesis.
\newblock \emph{Physical Review A}, 87:\penalty0 052131, 2013.
\newblock \doi{10.1103/PhysRevA.87.052131}.

\bibitem[Nielsen and Chuang(2000)]{NielsenC00}
M.~A. Nielsen and I.~L. Chuang.
\newblock \emph{{Quantum Computation and Quantum Information}}.
\newblock Cambridge University Press, Cambridge, 2000.
\newblock ISBN 0521635039.

\bibitem[Wilce(2011{\natexlab{a}})]{Wilce11a}
A.~Wilce.
\newblock {Symmetry, Self-Duality, and the Jordan Structure of Quantum
  Mechanics}.
\newblock 2011{\natexlab{a}}.
\newblock URL \url{{https://arxiv.org/abs/1110.6607}}.

\bibitem[M{\"{u}}ller and Ududec(2012)]{MuellerU12}
M.~P. M{\"{u}}ller and C.~Ududec.
\newblock {Structure of reversible computation determines the self-duality of
  quantum theory}.
\newblock \emph{Physical Review Letters}, 108\penalty0 (13), 2012.
\newblock \doi{10.1103/PhysRevLett.108.130401}.

\bibitem[Bartlett et~al.(2007)Bartlett, Rudolph, and Spekkens]{BartlettRS07}
S.~D. Bartlett, T.~Rudolph, and R.~W. Spekkens.
\newblock {Reference frames, superselection rules, and quantum information}.
\newblock \emph{Reviews of Modern Physics}, 79\penalty0 (2):\penalty0 555--609,
  2007.
\newblock \doi{10.1103/RevModPhys.79.555}.

\bibitem[Gottesman(1997)]{Gottesman97}
D.~Gottesman.
\newblock \emph{{Stabilizer Codes and Quantum Error Correction}}.
\newblock PhD thesis, California Institute of Technology, 1997.
\newblock URL \url{http://arxiv.org/abs/quant-ph/9705052}.

\bibitem[Popescu and Rohrlich(1994)]{PopescuR94}
S.~Popescu and D.~Rohrlich.
\newblock {Quantum nonlocality as an axiom}.
\newblock \emph{Found. Phys.}, 24\penalty0 (3):\penalty0 379--385, 1994.
\newblock \doi{10.1007/BF02058098}.

\bibitem[Jordan et~al.(1934)Jordan, von Neumann, and Wigner]{JordanvNW34}
P.~Jordan, J.~von Neumann, and E.~Wigner.
\newblock {On an Algebraic Generalization of the Quantum Mechanical Formalism}.
\newblock \emph{Annals of Mathematics}, 35\penalty0 (1):\penalty0 29--64, 1934.

\bibitem[Alfsen and Shultz(2003)]{AlfsenS03}
E.~M. Alfsen and F.~W. Shultz.
\newblock \emph{{Geometry of State Spaces of Operator Algebras}}.
\newblock Birkh{\"{a}}user, Boston, 2003.

\bibitem[Faraut and Kor{\'{a}}nyi(1994)]{FarautK94}
J.~Faraut and A.~Kor{\'{a}}nyi.
\newblock \emph{{Analysis on Symmetric Cones}}.
\newblock Oxford University Press, Oxford, 1994.
\newblock ISBN 0-19-853477-9.

\bibitem[McCrimmon(2004)]{McCrimmon04}
K.~McCrimmon.
\newblock \emph{{A Taste of Jordan Algebras}}.
\newblock Universitext. Springer-Verlag, New York, 2004.
\newblock ISBN 0-387-95447-3.
\newblock \doi{10.1007/b97489}.

\bibitem[Wilce(2011{\natexlab{b}})]{Wilce11}
A.~Wilce.
\newblock Four and a half axioms for finite-dimensional quantum probability.
\newblock In Y.~Ben-Menahem and M.~Hemmo, editors, \emph{Probability in
  Physics}, pages 281 -- 298. Springer, 2011{\natexlab{b}}.
\newblock \doi{10.1007/978-3-642-21329-8_17}.

\bibitem[Niestegge(2012)]{Niestegge12}
G.~Niestegge.
\newblock {Conditional Probability, Three-Slit Experiments, and the Jordan
  Algebra Structure of Quantum Mechanics}.
\newblock \emph{Advances in Mathematical Physics}, 2012:\penalty0 156573, 2012.
\newblock \doi{10.1155/2012/156573}.

\bibitem[Wilce(2019)]{Wilce19}
A.~Wilce.
\newblock {Conjugates, filters and quantum mechanics}.
\newblock \emph{Quantum}, 3:\penalty0 1--33, 2019.
\newblock \doi{10.22331/q-2019-07-08-158}.

\bibitem[Barnum and Hilgert(2019)]{BarnumH19}
H.~Barnum and J.~Hilgert.
\newblock {Strongly symmetric spectral convex bodies are Jordan algebra state
  spaces}.
\newblock 2019.
\newblock URL \url{http://arxiv.org/abs/1904.03753}.

\bibitem[Holevo(1982)]{Holevo82}
A.~Holevo.
\newblock \emph{{Probabilistic and Statistical Aspects of Quantum Theory}}.
\newblock North Holland, Amsterdam, 1982.
\newblock ISBN 9780444863331.

\bibitem[Webster(1994)]{Webster94}
R.~Webster.
\newblock \emph{{Convexity}}.
\newblock Oxford University Press, Oxford, 1994.
\newblock ISBN 0-19-853147-8.

\bibitem[Ziegler(1995)]{Ziegler95}
G.~M. Ziegler.
\newblock \emph{{Lectures on Polytopes}}.
\newblock Springer-Verlag, New York, 1995.
\newblock ISBN 0-387-94329-3.

\bibitem[Werner({1982})]{Werner82}
R.~F Werner.
\newblock \emph{{The Concept of Embeddings in Statistical Mechanics}}.
\newblock PhD thesis, {1982}.

\bibitem[Schmid et~al.(2021)Schmid, Selby, Wolfe, Kunjwal, and
  Spekkens]{SchmidSWKW21}
D.~Schmid, J.~H. Selby, E.~Wolfe, R.~Kunjwal, and R.~W. Spekkens.
\newblock Characterization of noncontextuality in the framework of generalized
  probabilistic theories.
\newblock \emph{PRX Quantum}, 2:\penalty0 010331, 2021.
\newblock \doi{10.1103/PRXQuantum.2.010331}.

\bibitem[Pl{\'{a}}vala(2022)]{Plavala21Incompatibility}
M.~Pl{\'{a}}vala.
\newblock Incompatibility in restricted operational theories: connecting
  contextuality and steering.
\newblock \emph{J. Phys. A: Math. Theor.}, 55:\penalty0 174001, 2022.
\newblock \doi{10.1088/1751-8121/ac5afe}.

\bibitem[Barnum and Lami()]{BarnumL20}
H.~Barnum and L.~Lami.
\newblock Classical embeddability of general probabilistic theories.
\newblock {forthcoming}.

\bibitem[Kunjwal and Spekkens(2015)]{KunjwalS15}
R.~Kunjwal and R.~W. Spekkens.
\newblock {From the Kochen-Specker Theorem to Noncontextuality Inequalities
  without Assuming Determinism}.
\newblock \emph{Physical Review Letters}, 115\penalty0 (11):\penalty0 110403,
  2015.
\newblock \doi{10.1103/PhysRevLett.115.110403}.

\bibitem[Kunjwal and Spekkens(2018)]{KunjwalS18}
R.~Kunjwal and R.~W. Spekkens.
\newblock {From statistical proofs of the Kochen-Specker theorem to
  noise-robust noncontextuality inequalities}.
\newblock \emph{Physical Review A}, 97\penalty0 (5):\penalty0 052110, 2018.
\newblock \doi{10.1103/PhysRevA.97.052110}.

\bibitem[Shahandeh(2021)]{Shahandeh21}
Farid Shahandeh.
\newblock Contextuality of general probabilistic theories.
\newblock \emph{PRX Quantum}, 2:\penalty0 010330, 2021.
\newblock \doi{10.1103/PRXQuantum.2.010330}.

\bibitem[Shor(1995)]{Shor95}
P.~W. Shor.
\newblock {Scheme for reducing decoherence in quantum computer memory}.
\newblock \emph{Physical Review A}, 52\penalty0 (4), 1995.
\newblock \doi{10.1103/PhysRevA.52.R2493}.

\bibitem[Paw{\l}owski and Winter(2012)]{PawlowskiW12}
M.~Paw{\l}owski and A.~Winter.
\newblock {``Hyperbits'': The information quasiparticles}.
\newblock \emph{Physical Review A}, 85\penalty0 (2):\penalty0 022331, 2012.
\newblock \doi{10.1103/PhysRevA.85.022331}.

\bibitem[Paterek et~al.(2010)Paterek, Dak\'ic, and Brukner]{PaterekDB10}
T.~Paterek, B.~Dak\'ic, and \v{C}. Brukner.
\newblock Theories of systems with limited information content.
\newblock \emph{New J. Phys.}, 12:\penalty0 053037, 2010.
\newblock \doi{10.1088/1367-2630/12/5/053037}.

\bibitem[Garner et~al.(2017)Garner, M{\"{u}}ller, and Dahlsten]{GarnerMD17}
A.~J.~P. Garner, M.~P. M{\"{u}}ller, and O.~C.~O. Dahlsten.
\newblock {The complex and quaternionic quantum bit from relativity of
  simultaneity on an interferometer}.
\newblock \emph{Proceedings of the Royal Society A: Mathematical, Physical and
  Engineering Science}, 473\penalty0 (2208):\penalty0 20170596, 2017.
\newblock \doi{10.1098/rspa.2017.0596}.

\bibitem[Krumm and M\"uller(2019)]{KrummM19}
M.~Krumm and M.~P. M\"uller.
\newblock Quantum computation is the unique reversible circuit model for which
  bits are balls.
\newblock \emph{npj Quantum Information}, 5:\penalty0 7, 2019.
\newblock \doi{10.1038/s41534-018-0123-x}.

\bibitem[Tsirel'son(1987)]{Tsirelson87}
B.~S. Tsirel'son.
\newblock {Quantum analogues of the Bell inequalities. The case of two
  spatially separated domains}.
\newblock \emph{Journal of Soviet Mathematics}, 36\penalty0 (4):\penalty0
  557--570, 1987.
\newblock \doi{10.1007/BF01663472}.

\bibitem[Kleinmann et~al.(2013)Kleinmann, Osborne, Scholz, and
  Werner]{KleinmannOSW13}
M.~Kleinmann, T.~J. Osborne, V.~B. Scholz, and A.~H. Werner.
\newblock {Typical Local Measurements in Generalized Probabilistic Theories:
  Emergence of Quantum Bipartite Correlations}.
\newblock \emph{Physical Review Letters}, 110\penalty0 (4):\penalty0 040403,
  2013.
\newblock \doi{10.1103/PhysRevLett.110.040403}.

\bibitem[Effros and St{\o}rmer(1979)]{EffrosS79}
E.~G. Effros and E.~St{\o}rmer.
\newblock {Positive projections and Jordan structure in operator algebras}.
\newblock \emph{Mathematica Scandinavica}, 45:\penalty0 127--138, 1979.
\newblock \doi{10.2307/24491361}.

\bibitem[Idel(2013)]{Idel13}
M.~Idel.
\newblock On the structure of positive maps.
\newblock Master's thesis, Technische Universit\"at M\"unchen;
  Ludwig-Maximilians-Universit\"at M\"unchen, 2013.

\bibitem[Kadison(1952)]{Kadison52}
R.~V. Kadison.
\newblock {A Generalized Schwarz Inequality and Algebraic Invariants for
  Operator Algebras}.
\newblock \emph{The Annals of Mathematics}, 56\penalty0 (3):\penalty0 494,
  1952.
\newblock \doi{10.2307/1969657}.

\bibitem[Richens et~al.(2017)Richens, Selby, and Al-Safi]{RichensSS17}
J.~G. Richens, J.~H. Selby, and S.~W. Al-Safi.
\newblock {Entanglement is Necessary for Emergent Classicality in All Physical
  Theories}.
\newblock \emph{Physical Review Letters}, 119\penalty0 (8):\penalty0 080503,
  2017.
\newblock \doi{10.1103/PhysRevLett.119.080503}.

\bibitem[Howard et~al.(2014)Howard, Wallman, Veitch, and Emerson]{HowardWVE14}
M.~Howard, J.~Wallman, V.~Veitch, and J.~Emerson.
\newblock {Contextuality supplies the 'magic' for quantum computation}.
\newblock \emph{Nature}, 510\penalty0 (7505):\penalty0 351--355, 2014.
\newblock \doi{10.1038/nature13460}.

\bibitem[Bermejo-Vega et~al.(2017)Bermejo-Vega, Delfosse, Browne, Okay, and
  Raussendorf]{BermejoVegaDBOR17}
J.~Bermejo-Vega, N.~Delfosse, D.~E. Browne, C.~Okay, and R.~Raussendorf.
\newblock {Contextuality as a Resource for Models of Quantum Computation with
  Qubits}.
\newblock \emph{Physical Review Letters}, 119\penalty0 (12):\penalty0 120505,
  2017.
\newblock \doi{10.1103/PhysRevLett.119.120505}.

\bibitem[Amaral(2019)]{Amaral19}
B.~Amaral.
\newblock {Resource Theory of Contextuality}.
\newblock 2019.
\newblock URL \url{{https://arxiv.org/abs/1904.04182}}.

\bibitem[Stinespring(1955)]{Stinespring55}
W.~F. Stinespring.
\newblock {Positive Functions on C*-Algebras}.
\newblock \emph{Proceedings of the American Mathematical Society}, 6\penalty0
  (2):\penalty0 211--216, 1955.
\newblock \doi{10.2307/2032342}.

\bibitem[Choi(1975)]{Choi75}
M.-D. Choi.
\newblock {Completely positive linear maps on complex matrices}.
\newblock \emph{Linear Algebra and its Applications}, 10\penalty0 (3):\penalty0
  285--290, 1975.
\newblock \doi{10.1016/0024-3795(75)90075-0}.

\bibitem[Bhatia(1997)]{Bhatia97}
R.~Bhatia.
\newblock \emph{{Matrix Analysis}}.
\newblock Springer-Verlag, New York, 1997.
\newblock ISBN 978-1-4612-0653-8.

\bibitem[Kadison(2004)]{Kadison04}
R.~V. Kadison.
\newblock {Non-commutative conditional expectations and their applications}.
\newblock \emph{Contemporary Mathematics}, 365:\penalty0 143--179, 2004.
\newblock \doi{10.1090/conm/365/06703}.

\bibitem[Wolf(2012)]{Wolf12}
M.~M. Wolf.
\newblock {Quantum Channels \& Operations: Guided Tour}, 2012.
\newblock URL
  \url{https://www-m5.ma.tum.de/foswiki/pub/M5/Allgemeines/MichaelWolf/QChannelLecture.pdf}.

\bibitem[Arias et~al.(2002)Arias, Gheondea, and Gudder]{Asias02}
A.~Arias, A.~Gheondea, and S.~Gudder.
\newblock Fixed points of quantum operations.
\newblock \emph{Journal of Mathematical Physics}, 43\penalty0 (5872), 2002.

\bibitem[Winter(1999)]{Winter99}
A.~Winter.
\newblock {Coding theorem and strong converse for quantum channels}.
\newblock \emph{IEEE Transactions on Information Theory}, 45\penalty0
  (7):\penalty0 2481--2485, 1999.
\newblock \doi{10.1109/18.796385}.

\bibitem[Wilde(2013)]{Wilde13}
M.~M. Wilde.
\newblock \emph{{Quantum information theory}}.
\newblock Cambridge University Press, Cambridge, 2013.
\newblock ISBN 978-1-107-17616-4.

\bibitem[Vidal and Tarrach(1999)]{VidalTarrach}
G~Vidal and R~Tarrach.
\newblock Robustness of entanglement.
\newblock \emph{Phys. Rev. A}, 59:\penalty0 141, 1999.
\newblock \doi{10.1103/PhysRevA.59.141}.

\bibitem[Takagi and Regula(2019)]{TakagiRegula}
R~Takagi and B~Regula.
\newblock General resource theories in quantum mechanics and beyond:
  Operational characterization via discrimination tasks.
\newblock \emph{Phys. Rev. X}, 9:\penalty0 031053, 2019.
\newblock \doi{10.1103/PhysRevX.9.031053}.

\bibitem[Khalfin and Tsirelson(1985)]{KhalfinT85}
L.~A. Khalfin and B.~S. Tsirelson.
\newblock {Quantum and Quasi-classical Analogs Of Bell Inequalities}.
\newblock In P.~Lahti and P.~Mittelstaedt, editors, \emph{Symposium on the
  foundations of modern physics}, pages 441---460. World Scientific Publishing
  Co., 1985.

\bibitem[Barnum et~al.(2010)Barnum, Beigi, Boixo, Elliott, and
  Wehner]{BarnumBBEW10}
H.~Barnum, S.~Beigi, S.~Boixo, M.~B. Elliott, and S.~Wehner.
\newblock {Local Quantum Measurement and No-Signaling Imply Quantum
  Correlations}.
\newblock \emph{Phys. Rev. Lett.}, 104\penalty0 (14):\penalty0 140401, 2010.
\newblock \doi{10.1103/PhysRevLett.104.140401}.

\bibitem[ver Steeg and Wehner(2009)]{VerSteegW09}
G.~ver Steeg and S.~Wehner.
\newblock {Relaxed uncertainty relations and information processing}.
\newblock \emph{Quantum Information and Computation}, 9\penalty0
  (9-10):\penalty0 0801--0832, 2009.
\newblock \doi{10.26421/qic9.9-10-6}.

\bibitem[Cirel'son(1980)]{Cirelson80}
B.~S. Cirel'son.
\newblock {Quantum generalizations of Bell's inequality}.
\newblock \emph{Letters in Mathematical Physics}, 4\penalty0 (2):\penalty0
  93--100, 1980.
\newblock \doi{10.1007/BF00417500}.

\bibitem[Janotta et~al.(2011)Janotta, Gogolin, Barrett, and
  Brunner]{JanottaGBB11}
P.~Janotta, C.~Gogolin, J.~Barrett, and N.~Brunner.
\newblock {Limits on nonlocal correlations from the structure of the local
  state space}.
\newblock \emph{New Journal of Physics}, 13, 2011.
\newblock \doi{10.1088/1367-2630/13/6/063024}.

\bibitem[Pusey(2018)]{Pusey18}
M.~F. Pusey.
\newblock {Robust preparation noncontextuality inequalities in the simplest
  scenario}.
\newblock \emph{Physical Review A}, 98\penalty0 (2):\penalty0 022112, 2018.
\newblock \doi{10.1103/PhysRevA.98.022112}.

\bibitem[Pusey et~al.(2019)Pusey, del Rio, and Meyer]{PuseyRM19}
M.~F. Pusey, L.~del Rio, and B.~Meyer.
\newblock {Contextuality without access to a tomographically complete set}.
\newblock 2019.
\newblock URL \url{http://arxiv.org/abs/1904.08699}.

\bibitem[Coecke and Kissinger(2017)]{CoeckeK17}
B.~Coecke and A.~Kissinger.
\newblock \emph{{Picturing Quantum Processes: A First Course in Quantum Theory
  and Diagrammatic Reasoning}}.
\newblock Cambridge University Press, Cambridge, 2017.
\newblock ISBN 110710422X.

\bibitem[Gitton and Woods(2022)]{GittonWoods22}
V~Gitton and M~P Woods.
\newblock Solvable criterion for the contextuality of any prepare-and-measure
  scenario.
\newblock \emph{Quantum}, 6:\penalty0 732, 2022.
\newblock \doi{10.22331/q-2022-06-07-732}.

\bibitem[Wood and Spekkens(2015)]{WoodS15}
C.~J. Wood and R.~W. Spekkens.
\newblock {The lesson of causal discovery algorithms for quantum correlations:
  causal explanations of Bell-inequality violations require fine-tuning}.
\newblock \emph{New Journal of Physics}, 17\penalty0 (3):\penalty0 033002,
  2015.
\newblock \doi{10.1088/1367-2630/17/3/033002}.

\bibitem[Daley et~al.(2021)Daley, Resch, and Spekkens]{Daley}
P~J Daley, K~J Resch, and R~W Spekkens.
\newblock Experimentally adjudicating between different causal accounts of bell
  inequality violations via statistical model selection.
\newblock 2021.
\newblock URL \url{https://arxiv.org/abs/2108.00053}.

\bibitem[Fadel(2023)]{Fadel}
Matteo Fadel.
\newblock personal communication, June 2023.

\bibitem[Schmied et~al.(2016)Schmied, Bancal, Allard, Fadel, Scarani,
  Treutlein, and Sangouard]{Schmied}
R.~Schmied, J.-D. Bancal, B~Allard, M~Fadel, V~Scarani, P~Treutlein, and
  N.~Sangouard.
\newblock Bell correlations in a bose-einstein condensate.
\newblock \emph{Science}, 352\penalty0 (6284):\penalty0 441, 2016.

\bibitem[Spekkens(2008)]{SpekkensNeg08}
R.~W. Spekkens.
\newblock Negativity and contextuality are equivalent notions of
  nonclassicality.
\newblock \emph{Physical Review Letters}, 101:\penalty0 020401, 2008.

\bibitem[Gallego et~al.(2010)Gallego, Brunner, Hadley, and
  Ac{\'{i}}n]{Gallego2010}
R.~Gallego, N.~Brunner, C.~Hadley, and A.~Ac{\'{i}}n.
\newblock Device-independent tests of classical and quantum dimensions.
\newblock \emph{Physical Review Letters}, 105:\penalty0 230501, 2010.

\bibitem[Selby et~al.(2023{\natexlab{b}})Selby, Schmid, Wolfe, Sainz, Kunjwal,
  and Spekkens]{SelbySWSKS21b}
J.~H. Selby, D.~Schmid, E.~Wolfe, A.~B. Sainz, R.~Kunjwal, and R.~W. Spekkens.
\newblock {Accessible fragments of generalized probabilistic theories, cone
  equivalence, and applications to witnessing nonclassicality}.
\newblock \emph{Physical Review A}, 107:\penalty0 062203, 2023{\natexlab{b}}.
\newblock URL \url{http://arxiv.org/abs/2112.04521}.

\bibitem[Bronstein(2008)]{Bronstein08}
E.~M. Bronstein.
\newblock {Approximation of convex sets by polytopes}.
\newblock \emph{Journal of Mathematical Sciences}, 153:\penalty0 727--762,
  2008.
\newblock \doi{10.1007/s10958-008-9144-x}.

\bibitem[Gruber and Kenderov(1982)]{GruberK82}
P.~M. Gruber and P.~Kenderov.
\newblock {Approximation of convex bodies by polytopes}.
\newblock \emph{Rendiconti del Circolo Matematico di Palermo}, 31\penalty0
  (2):\penalty0 195--225, 1982.
\newblock \doi{10.1007/BF02844354}.

\bibitem[Leo and Scolarici(2000)]{LeoS00}
S.~D. Leo and G.~Scolarici.
\newblock {Right eigenvalue equation in quaternionic quantum mechanics}.
\newblock \emph{Journal of Physics A: Mathematical and General}, 33\penalty0
  (15):\penalty0 2971--2995, 2000.
\newblock \doi{10.1088/0305-4470/33/15/306}.

\bibitem[Gross et~al.(2010)Gross, M\"uller, Colbeck, and Dahlsten]{GrossMCD10}
D.~Gross, M.~P. M\"uller, R.~Colbeck, and O.~C.~O. Dahlsten.
\newblock {All Reversible Dynamics in Maximally Nonlocal Theories are Trivial}.
\newblock \emph{Phys. Rev. Lett.}, 104\penalty0 (8):\penalty0 80402, 2010.
\newblock \doi{10.1103/PhysRevLett.104.080402}.

\bibitem[Ac{\'{i}}n et~al.(2010)Ac{\'{i}}n, Augusiak, Cavalcanti, Hadley,
  Korbicz, Lewenstein, Masanes, and Piani]{Acin10}
A.~Ac{\'{i}}n, R.~Augusiak, D.~Cavalcanti, C.~Hadley, J.~K. Korbicz,
  M.~Lewenstein, Ll. Masanes, and M.~Piani.
\newblock {Unified framework for correlations in terms of local quantum
  observables}.
\newblock \emph{Physical Review Letters}, 104\penalty0 (14), 2010.
\newblock \doi{10.1103/PhysRevLett.104.140404}.

\end{thebibliography}
\balancecolsandclearpage

\appendix

\section*{Appendix}
\hidetoc

\section{Verification of the geometric intuition \mbox{in Figure~\ref{fig:ShrinkConvex}}}
\label{app:Polytope}
\begin{applemma}
Let $C\subset\mathbb{R}^d$ be a compact convex set of full dimension $d$ such that $0$ is in the interior of $C$. 
Then, for every $\lambda\in (0,1)$, there exists a polytope $P$ with at most $\left\lceil \left(\frac{c}{1-\lambda}\right)^{(d-1)/2}\right\rceil$ vertices such that $\lambda C \subset P \subset C$, where $c>0$ is a constant that only depends on $C$.
\begin{proof}
\label{LemPolytopeInBetween}
Let us first show that a nonzero neighbourhood of $\lambda C$ must be contained in $C$. 
Since $0$ is in the interior of $C$, there exists some $\alpha>0$ such that all $x\in\mathbb{R}^d$ with $\|x\|\leq\alpha$ are contained in $C$. We claim that
\begin{equation}
   \lambda C +\alpha(1-\lambda)B^d\subset C.
   \label{eqToShow}
\end{equation}
To show this, suppose that $y\in \lambda C+\alpha(1-\lambda)B^d$, that is, that there exists some $x\in C$ with $\|y-\lambda x\|\leq\alpha(1-\lambda)$. Let $x':=x+\frac{y-x}{1-\lambda}$, then
\begin{align}
\|x'\|&= \left\|\left(x+\frac{y-x}{1-\lambda}\right)-\left(x+\frac{\lambda x - x}{1-\lambda}\right)\right\| \nonumber \\
&= \frac 1 {1-\lambda}\|y-\lambda x\|\leq \alpha
\end{align}
and thus $x'\in C$. Since $y=\lambda x+(1-\lambda)x'$, and since $C$ is convex, it follows that $y\in C$. This proves \cref{eqToShow}.

Now we use known results (see e.g.~\cite{Bronstein08}) on the approximation of convex bodies, i.e.\ compact convex sets with nonempty interior, by polytopes. The Hausdorff metric for convex bodies $C,D$ is given by
\begin{align}
\rho_H(C,D)=\min\{\kappa\,\,|\,\, C\subset D+\kappa B^d,D\subset C+\kappa B^d\},
\end{align}
where $B^d$ denotes the unit ball in $\mathbb{R}^d$. Now~\cite[Subsection 4.1]{Bronstein08} shows (see also \citet{GruberK82}) that there is a constant $c(C)>0$ such that for all $n\in\mathbb{N}$, there exists a polytope $P_n\supset C$ with $n$ vertices which gives a good outer approximation of $C$, in the sense that
\begin{align}
   \rho_H(C,P_n)\leq\frac{c(C)}{n^{2/(d-1)}}=:\delta_n.
\end{align}
This implies $C\subset P_n\subset C+\delta_n B^d$. Now choose $n$ large enough such that $\delta_n\leq \alpha(1-\lambda)/\lambda$; a possible choice is $\displaystyle n:=\left\lceil \left(\frac{c}{1-\lambda}\right)^{(d-1)/2}\right\rceil$, where $c=c(C)/\alpha$. Then, using Eq.~(\ref{eqToShow}), we get the set inclusions
\begin{align}
   \lambda_C \subset \lambda P_n\subset \lambda C+\lambda\delta_n B^d\subset C.
\end{align}
That is, $P:=\lambda P_n$ satisfies the condition of the lemma. 
\end{proof}
\end{applemma}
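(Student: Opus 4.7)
My plan is to combine a simple ``inflation'' observation about $\lambda C$ with a quantitative polytopal approximation theorem for convex bodies. The key idea is that $\lambda C$ is not just contained in $C$ but sits strictly inside $C$ with a uniform ``margin'' of thickness proportional to $1-\lambda$; this margin provides exactly the room needed to squeeze a polytope $P$ between $\lambda C$ and $C$.

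First I would establish the margin. Since $0$ lies in the interior of $C$, there exists $\alpha>0$ with $\alpha B^d\subset C$, where $B^d$ is the Euclidean unit ball. I claim
\begin{equation}
\lambda C+\alpha(1-\lambda)B^d\subset C.
\end{equation}
To see this, take any $y=\lambda x+\alpha(1-\lambda)b$ with $x\in C$ and $\|b\|\le 1$, and set $x':=x+\frac{y-\lambda x}{1-\lambda}$. A short computation gives $\|x'\|\le\alpha$, so $x'\in C$; then $y=\lambda x+(1-\lambda)x'$ is a convex combination of elements of $C$, hence $y\in C$.

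Next I would invoke a standard quantitative result on outer polytopal approximation of convex bodies (due to Bronstein, following Gruber--Kenderov): for every $n\in\mathbb{N}$, there exists a polytope $P_n\supset C$ with at most $n$ vertices such that the Hausdorff distance satisfies
\begin{equation}
\rho_H(C,P_n)\le \frac{c(C)}{n^{2/(d-1)}}=:\delta_n,
\end{equation}
where $c(C)>0$ depends only on $C$. Since $P_n\supset C$ and $P_n\subset C+\delta_n B^d$, scaling by $\lambda$ yields $\lambda C\subset \lambda P_n\subset \lambda C+\lambda\delta_n B^d$. If we choose $n$ so that $\lambda\delta_n\le \alpha(1-\lambda)$, the margin inequality gives $\lambda P_n\subset C$, and the polytope $P:=\lambda P_n$ does the job. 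Rearranging the condition $\lambda\delta_n\le\alpha(1-\lambda)$ shows that it suffices to take $n=\left\lceil \left(\frac{c}{1-\lambda}\right)^{(d-1)/2}\right\rceil$ with $c:=c(C)/\alpha$ (absorbing the harmless $\lambda\le 1$ factor).

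I expect the only nontrivial point to be the appeal to the Bronstein/Gruber bound on outer polytopal approximation; the rest is elementary convex geometry. The margin computation is a two-line identity, and the final bookkeeping of constants is a straightforward algebraic manipulation. Everything else, including the scaling argument that turns an outer approximation of $C$ into an inner polytope relative to $C$ sandwiched around $\lambda C$, is a direct consequence of the fact that $0$ is interior to $C$ and hence $\lambda C$ has nonzero ``distance'' to the boundary of $C$.
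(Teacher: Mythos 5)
Your proof is correct and follows essentially the same route as the paper's: the margin inclusion $\lambda C+\alpha(1-\lambda)B^d\subset C$, then Bronstein's outer polytopal approximation, then rescaling the outer polytope by $\lambda$. The only slip is your formula $x'=x+\frac{y-\lambda x}{1-\lambda}$, which should read $x'=\frac{y-\lambda x}{1-\lambda}$ (equivalently $x+\frac{y-x}{1-\lambda}$, as in the paper); with that correction one indeed gets $\|x'\|\le\alpha$ and $y=\lambda x+(1-\lambda)x'$.
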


\section{Proof of embedding properties}
\label{app:EmbeddingProperties}
\subsection{Proof of Lemma~\ref{lem:Linear}}
\begin{proof}
First, consider an univalent $\varepsilon$-simulation of $\mathcal{A}$ by $\mathcal{B}$. Let $d:=\dim A$, and pick $d$ linearly independent states $\omega_1^A,\ldots,\omega_d^A\in\Omega_A$. 
Then there are $d$ states $\omega_1^B,\ldots,\omega_d^B$ such that ${\rm Sim}_{\mathcal{B}}(\omega_i^A)=\{\omega_i^B\}$ for all $i$. 
Define $\Psi:A^*\to B^*$ as the linear extension of $\Psi(\omega_i^A)=\omega_i^B$ for $i=1,\ldots,d$. 
If $\omega_A\in C:={\rm conv}\{\omega_1^A,\ldots,\omega_d^A\}$, i.e.\ $\omega_A=\sum_{i=1}^d \lambda_i \omega_i^A$ for suitable $\lambda_i\geq 0$, $\sum_i\lambda_i=1$, then
\begin{align}
   {\rm Sim}_{\mathcal{B}}(\omega_A)=\sum_{i=1}^d \lambda_i\Omega_B(\omega_i^A)=\left\{\sum_{i=1}^d \lambda_i \omega_i^B\right\}=\{\Psi(\omega_A)\}.
\end{align}
Now suppose $\omega_A\in\Omega_A\setminus C$. 
Pick any state $\varphi_A$ in the relative interior of $C$, and consider the line connecting $\varphi_A$ and $\omega_A$. 
On it, we can find some $\rho_A\in C \setminus\{\varphi_A\}$, i.e.\ there is some $0<\lambda<1$ such that $\rho_A=\lambda\omega_A+(1-\lambda)\varphi_A$. Thus
\begin{align}
   \{\Psi(\rho_A)\}&={\rm Sim}_{\mathcal{B}}(\rho_A)=\lambda{\,\rm Sim}_{\mathcal{B}}(\omega_A)+(1-\lambda){\rm Sim}_{\mathcal{B}}(\varphi_A) \nonumber\\
   &=\lambda\,{\rm Sim}_{\mathcal{B}}(\omega_A)+(1-\lambda)\{\Psi(\varphi_A)\},
\end{align}
and from this it is elementary to infer that ${\rm Sim}_{\mathcal{B}}(\omega_A)=\{\Psi(\omega_A)\}$. Hence $\Psi(\Omega_A)\subseteq \Omega_B$, and $\Psi$ is a positive and normalization-preserving linear map.

The argumentation for effects is similar, applying the above construction to the convex hull $C$ of $d$ linearly--independent effects and the zero effect. Finally, the preservation of outcome probabilities up to $\varepsilon$ follows directly from the definition of a simulation.

Conversely, given the linear maps $\Psi$ and $\Phi$ of an $\varepsilon$-embedding, we obtain a univalent $\varepsilon$-simulation via ${\rm Sim}_{\mathcal{B}}(\omega_A):=\{\Psi(\omega_A)\}$ and ${\rm Sim}_{\mathcal{B}}(e_A):=\{\Phi(e_A)\}$.
\end{proof}
\subsection{Proof of Lemma~\ref{lem:Unital}}
\begin{proof}
Consider the maps $\Psi$ and $\Phi$ that embed $\mathcal{A}$ into $\mathcal{B}$. Let $u'_B:=\Phi(u_A)$, then $u'_B\in E_B$ and $\Delta:=u_B-u'_B\in E_B$. Due to the definition of $\varepsilon$-embedding, we have
\begin{align}
   |(\Psi(\omega_A),\Phi(u_A))-(\omega_A,u_A)|\leq\varepsilon,
\end{align}
and thus $(\Psi(\omega_A),u'_B)\in [1-\varepsilon,1]$ for all $\omega_A\in\Omega_A$. Since $(\Psi(\omega_A),u_B)=1$, we have $(\Psi(\omega_A),\Delta)\leq\varepsilon$ for all $\omega_A\in\Omega_A$. Now pick an arbitrary state $\xi_A\in\Omega_A$, and define
\begin{align}
   \Tilde\Phi(e_A):=\Phi(e_A)+(\xi_A,e_A)\Delta.
\end{align}
Suppose that $e_A\in E_A$, then $u_A-e_A\in E_A$, and so $\Phi(u_A-e_A)\in E_B$, i.e.\ $f_B:=u'_B-\Phi(e_A)\in E_B$, and so $E_B\ni u_B-f_B=\Phi(e_A)+\Delta$. Therefore, $0\leq \tilde\Phi(e_A)\leq u_B-f_B\leq u_B$, and so $\tilde\Phi(e_A)\in E_B$. 
Moreover, $\tilde\Phi(u_A)=u_B$, and the maps $\Psi$ and $\tilde\Phi$ satisfy
\begin{align}
& |(\omega,e)-(\Psi(\omega),\Tilde\Phi(e))|  \nonumber \\
& \leq |(\omega,e)-(\Psi(\omega),\Phi(e))|+|(\Psi(\omega),\Phi(e))-(\psi(\omega),\tilde\Phi(e))| \nonumber \\
& \leq \varepsilon + |e(\xi_A)(\Psi(\omega),\Delta)|\leq 2\varepsilon.
\end{align}
This proves the claim.
\end{proof}

\subsection{Proof of Lemma~\ref{lem:EmbeddingFacts}}
\begin{proof}
{\bf (i):} 
We have $(\omega,e)=(\Psi(\omega),\Phi(e))=(\omega,\Psi^* \Phi(e))$ for all $e\in E_A, \omega\in \Omega_A$. Since $E_A$ spans $A$, and since $\Omega_A$ spans $A^*$, we get $\Psi\conj\Phi=\mathbf{1}_A$.

{\bf (ii):} First, $P^2 = \Phi\Psi\conj\Phi\Psi\conj = \Phi\Psi\conj = P$. 
As $\Psi\conj\Phi \equiv \mathbf{1}_A$, we get $P \Phi\!\left(a\right) = \Phi \Psi\conj  \Phi \!\left(a\right) = \Phi\!\left(a\right)$ for all $a\in A$. 
This shows that $\Phi(A)\subseteq {\rm im}\, P$. Conversely, if $b\in {\rm im}\, P$ define $a:=\Psi^*(b)$, then $\Phi(a)=\Phi\Psi^*(b)=Pb=b$, i.e.\ $b\in\Phi(A)$.
To show unitality in the case of a unital embedding, apply $\Phi$ to both sides of $\Psi\conj\Phi\!\left(u_A\right) = u_A$ to yield $\Phi \Psi\conj \Phi\!\left(u_A\right) = \Phi\!\left(u_A\right)$,
  then use $\Phi\!\left(u_A\right) = u_B$ to conclude $\Phi \Psi\conj u_B = u_B$.
Similar reasoning establishes $P\conj$ as a projector onto $\Psi\!\left(A\conj\right)$.
\end{proof}

\subsection{Proof of Lemma~\ref{lem:UnrestrictedFacts}}
\begin{proof}
{\bf (i):} Let $b\in B_+$. Since $\Psi$ is positive, we have $(\Psi(a),b) \geq 0$
 and hence $(a,\Psi^*(b))\geq 0$ for all $a\in A\conj_+$. 
Since $\mathcal{A}$ is unrestricted, this implies that $\Psi\conj(b)\in A_+$, and hence $\Psi^*$ is positive. 
Similarly, let $\omega\in B_+^*$. Since $\Phi$ is positive, we have $(\omega,\Phi(a))\geq 0$ and thus $(\Phi^*(\omega),a)\geq 0$ for all $a\in A_+$.
Since $\mathcal{A}$ is unrestricted, it follows that $\Phi^*(\omega)\in A_+^*$, i.e.\ $\Phi^*$ is positive.
As the composition of positive maps is positive, $P$ and $P\conj$ are hence also positive.

{\bf (ii):}  
 $\Phi(A_+)\subseteq \Phi(A)\cap B_+$ follows from the positivity of $\Phi$, and if $b\in\Phi(A)\cap B_+=P(B)\cap B_+$ then $b=Pb\in P(B_+)$. 
 For the converse inclusions, we have $P(B_+)\subseteq B_+$ due to positivity of $P$ and $P(B_+)\subseteq P(B)=\Phi(A)$. 
 To see that $\Phi(A)\cap B_+\subseteq \Phi(A_+)$, let $a\in A$ and $\Phi(a)\in B_+$, then for all $\omega\in A_+\conj$, we have $(\omega,a)=(\Psi(\omega),\Phi(a))\geq 0$, hence $a\in A_+$.
\end{proof}

\subsection{Proof of Corollary~\ref{col:Classical}}
\begin{proof}
Suppose that $\mathcal{A}$ can be embedded into some $\mathcal{C}_n$ via maps $\Phi,\Psi$. 
Since $\mathcal{C}_n$ can be embedded into $\mathcal{Q}_n$ via some $\Phi',\Psi'$, this gives us an embedding of $\mathcal{A}$ into $\mathcal{Q}_n$ via $\Phi'\circ\Phi,\Psi'\circ\Psi$. 
Due to Theorem~\ref{thm:Simulate}, $\mathcal{A}$ must correspond to a special Euclidean Jordan algebra. 
But Lemma~\ref{lem:UnrestrictedFacts}(ii) tells us that $\Phi(A_+)=\Phi(A)\cap C_+$, 
 where $C_+$ is the polyhedral cone~\cite{AliprantisT07} of classical effects. 
Hence $A_+$ must be a polyhedral cone too, i.e.\ $A_+$ contains only a finite number of extremal rays. 
The only Jordan-algebraic effect cones with finitely many extremal rays are the classical ones.
\end{proof}

\section{Proof of Theorem~\ref{thm:Correspondence}}
\label{app:Correspondence}
\begin{proof}
Suppose that $\mathcal{A}$ is a GPT that is exactly simulated by $\mathcal{C}_n$. 
Let us construct a corresponding operational theory for which the simulation constitutes an ontological model. 
The preparation procedures $p$ are identified with the elements of ${\rm Sim}_{\mathcal{C}}(\omega_A)$ for all $\omega_A\in\Omega_A$. 
The effects of the operational theory are identified with the elements of ${\rm Sim}_{\mathcal{C}}(e_A)$ for all $e_A\in E_A$, and the measurement procedures correspond to finite collections of such effects that add up to unit probability on all preparations.

Let us construct an ontological model for this theory. 
The set of ontic states is defined as $\Lambda:=\{1,2,\ldots,n\}$. 
Now, given any preparation procedure $p$ and measurement procedure and outcome $(k,m)$, there is a corresponding classical state $\omega_C\in{\rm Sim}_{\mathcal{C}}(\omega_A)$ for some $\omega_A\in\Omega_A$ and a corresponding classical effect $e_C\in{\rm Sim}_{\mathcal{C}}(e_A)$ for some $e_A\in E_A$ such that $p=\omega_C$ and $(k,m)=e_C$, i.e.\ $P(k|p,m)=(\omega_C,e_C)=\sum_\lambda (\omega_C)_\lambda (e_C)_\lambda$ with $\sum_\lambda (\omega_C)_\lambda=1$ and $(e_C)_\lambda\in[0,1]$. 
We therefore identify preparation procedures with the distributions on ontic states that they generate, and outcomes of measurement procedures with the response function that they implement. 
Consider two preparation procedures $p=\omega_C\in{\rm Sim}_{\mathcal{C}}(\omega_A)$ and $p'=\omega'_C\in{\rm Sim}_{\mathcal{C}}(\omega'_A)$. 
They are equivalent if and only if
\begin{align*}
p\sim p' \hspace{-1.5em}& \\
&\Leftrightarrow (\omega_C,e_C)=(\omega'_C,e_C)\mbox{ for all }e_C\mbox{ in all }{\rm Sim}_{\mathcal{C}}(e_A)\\
&\Leftrightarrow (\omega_A,e_A)=(\omega'_A,e_A)\mbox{ for all }e_A\\
&\Leftrightarrow \omega_A=\omega'_A\\
&\Leftrightarrow p,p'\in{\rm Sim}_{\mathcal{C}}(\omega_A)\mbox{ for the same }\omega_A.
\end{align*}
That is, the sets ${\rm Sim}_{\mathcal{C}}(\omega_A)$ define the equivalence classes of preparations; and similarly, the sets ${\rm Sim}_{\mathcal{C}}(e_A)$ define the equivalence classes of outcome-measurement pairs. 
Thus, equivalent preparation procedures lead to identical distributions over ontic states if and only if $|{\rm Sim}_{\mathcal{C}}(\omega_A)|=1$ for all $\omega_A$ (preparation--univalence as per \cref{def:epssim}), 
and equivalent outcome-measurement pairs lead to identical response functions on ontic states if and only if $|{\rm Sim}_{\mathcal{C}}(e_A)|=1$ for all $e_A$ (measurement--univalence as per \cref{def:epssim}).

Conversely, suppose that we have an operational theory, the corresponding GPT $\mathcal{A}$, and a discrete ontological model of the operational theory. Let $n:=|\Lambda|$, and then rename the elements of $\Lambda$ such that $\Lambda=\{1,2,\ldots,n\}$. 
States $\omega_A\in\Omega_A$ are equivalence classes of preparation procedures; similarly, effects $e_A\in E_A$ are equivalence classes of outcome-measurement pairs. 
Thus, for all $\omega_A\in\Omega_A$ and all $e_A\in E_A$, we can define
\begin{align}
{\rm Sim}_{\mathcal{C}}(\omega_A)&:=\{\mu_p\,\,|\,\, p\in\omega_A\},\\
{\rm Sim}_{\mathcal{C}}(e_A)&:= \{\chi_{k,m}\,\,|\,\, (k,m)\in e_A\}.
\end{align}
These are nonempty sets, and $|{\rm Sim}_{\mathcal{C}}(\omega_A)|=1$ for all $\omega_A$ if and only if the ontological model is preparation--noncontextual; similarly, $|{\rm Sim}_{\mathcal{C}}(e_A)|=1$ for all $e_A$ if and only if the ontological model is measurement--noncontextual.
 It remains to show that these assignments define an exact simulation of $\mathcal{A}$ by $\mathcal{C}_n$. 
 To this end, let $\omega_C\in{\rm Sim}_{\mathcal{C}}(\omega_A)$ and $e_C\in {\rm Sim}_{\mathcal{C}}(e_A)$; it follows that $\omega_C=\mu_p$ for some $p\in\omega_A$, and $e_C=\chi_{k,m}$ for some $(k,m)\in e_A$. Thus
\begin{align}
   (\omega_C,e_C)=\sum_{\lambda\in\Lambda}\mu_p(\lambda)\chi_{k,m}(\lambda)=(\omega_A,e_A)
\end{align}
which proves Eq.~(\ref{eq:match_stats}). 
Furthermore, if $\mu_p\in{\rm Sim}_{\mathcal{C}}(\omega_A)$ and $\mu_q\in{\rm Sim}_{\mathcal{C}}(\varphi_A)$, then
\begin{align}
   \lambda\mu_p+(1-\lambda)\mu_q&=\mu_{\lambda p+(1-\lambda)q}\nonumber\\
&\in{\rm Sim}_{\mathcal{C}}(\lambda \omega_A+(1-\lambda)\varphi_A)
\end{align}
which implies Eq.~(\ref{eq:Mixing1}); an analogous argument applies to effects. 
Finally, by definition, $\chi\equiv 0$ is a valid response function, i.e.\ is actually being implemented for some (probability zero) outcome $k$ of some measurement $m$, i.e.\ $0=\chi_{k,m}\in {\rm Sim}_{\mathcal{C}}(0)$. 
We have thus verified all the properties of an exact simulation.
\end{proof}

\section{Detailed proof of Example~\ref{ExRebit}}
\label{SecProofRebit}
The proof sketch given in the main text works without further arguments if we assume that the $\varepsilon$-embedding is \emph{unital}, i.e.\ $\Phi(\mathbf{1})=(1,\ldots,1)\trans$. Let us first assume unitality and recapitulate the proof steps of Ref.~\cite{MazurekPKRS16} before dropping this assumption.

Our rebit states and effects satisfy
\begin{align}
\frac 1 2 \rho_{1,0}+\frac 1 2 \rho_{1,1}=\frac 1 2 \rho_{2,0}+\frac 1 2 \rho_{2,1}=\frac 1 2 \rho_{3,0}+\frac 1 2 \rho_{3,1}, &\\
\frac 1 3 E_{1,0}+\frac 1 3 E_{2,0} + \frac 1 3 E_{3,0}=\frac 1 2 \mathbf{1}=\frac 1 3 E_{1,1}+\frac 1 3 E_{2,1} + \frac 1 3 E_{3,1}, &\\
E_{t,0}+E_{t,1} = \mathbf{1}. &
\end{align}
Since the embedding maps $\Psi$ and $\Phi$ are linear,
\begin{align}
\frac 1 2 \omega_{1,0}+\frac 1 2 \omega_{1,1}=\frac 1 2 \omega_{2,0}+\frac 1 2 \omega_{2,1}=\frac 1 2 \omega_{3,0}+\frac 1 2 \omega_{3,1}=:\omega, &\\
\frac 1 3 e_{1,0}+\frac 1 3 e_{2,0}+\frac 1 3 e_{3,0} = \frac 1 2 \mathbf{1}=\frac 1 3 e_{1,1}+\frac 1 3 e_{2,1}+\frac 1 3 e_{3,1},&\\
e_{t,0}+e_{t,1}=\mathbf{1}&.
\end{align}
where we have used the notation $\mathbf{1}=(1,1,\ldots,1)\trans$ and unitality via $\Phi(\mathbf{1})=\mathbf{1}$. Following Ref.~\cite{MazurekPKRS16}, for any triple of numbers $\{a_t\}=(a_1,a_2,a_3)$, we define the function $f\{a_t\}:=\frac 1 3 \sum_{t=1}^3 \max\{a_t,1-a_t\}$, and it is easy to check that this function is convex, i.e.\ for $0\leq\mu\leq 1$ we have
\begin{align}
   f\{\mu a_t+(1-\mu)b_t\}\leq \mu\, f\{a_t\}+(1-\mu)f\{b_t\}.
\end{align}
Since states (and effects) like $\omega$ are vectors in $\mathbb{R}^n$, we will use the notation $\omega^\lambda$ for the components of that vector. We then have
\begin{align}
   e_{t,b}^\lambda\leq \eta_t^\lambda:=\max_{b'} e_{t,b'}^\lambda,
\end{align}
and we get the chain of inequalities
\begin{align}
A&:= \frac 1 6 \sum_{t,b} e_{t,b}\cdot \omega_{t,b} = \frac 1 6 \sum_{t,b,\lambda} e_{t,b}^\lambda \omega_{t,b}^\lambda
\leq \frac 1 3 \sum_{t,\lambda} \eta_t^\lambda \frac 1 2 \sum_b \omega_{t,b}^\lambda\nonumber\\
&= \frac 1 3 \sum_{t,\lambda} \eta_t^\lambda \omega^\lambda=\sum_\lambda \omega^{\lambda} \frac 1 3 \sum_t \eta_t^\lambda\leq \max_\lambda \frac 1 3 \sum_t \eta_t^\lambda.\label{eqMax}
\end{align}
So far, we have not used unitality, but we will do so now. Unitality implies that
\begin{align}
   \frac 1 3 \sum_t \eta_t^\lambda = \frac 1 3 \sum_t \max\{e_{t,0}^\lambda,1-e_{t,0}^\lambda\}=f\{e_{t,0}^\lambda\},
\end{align}
and that $\frac 1 3 \sum_t e_{t,0}^\lambda=\frac 1 2$. 
Hence
\begin{align}
   A\leq \max_\lambda f\{e_{t,0}^\lambda\}\leq \max_{\{a_t\}\in\mathcal{C}} f\{a_t\},
\end{align}
where $\mathcal{C}$ is the compact convex set of $(a_1,a_2,a_3)\trans$ with $0\leq a_t\leq 1$ and $\frac 1 3 \sum_t a_t=\frac 1 2$. Since $f$ is a convex function, its maximum is attained at one of the extremal points of $\mathcal{C}$ (like $(1,\frac 1 2,0)$), which implies $A\leq \frac 5 6$.

Let us now drop the assumption of unitality and go beyond the arguments of Ref.~\cite{MazurekPKRS16}. For the effects, we know that
\begin{align}
   \frac 1 3 \sum_t e_{t,0}^\lambda = \frac 1 3 \sum_t e_{t,1}^\lambda,\quad 0\leq e_{t,0}^\lambda, 0\leq e_{t,1}^\lambda,\,\, e_{t,0}^\lambda+e_{t,1}^\lambda \leq 1.
\end{align}
Set $\delta_t^\lambda:=1-(e_{t,0}^\lambda+e_{t,1}^\lambda)$, which is a non-negative number, then $\tilde e_{t,b}^\lambda:=e_{t,b}^\lambda+\frac 1 2 \delta_t^\lambda$ is non-negative, and
\begin{align}
   \tilde e_{t,0}^\lambda+\tilde e_{t,1}^\lambda=1,\quad \frac 1 3 \sum_t \tilde e_{t,0}^\lambda=\frac 1 3 \sum_t \tilde e_{t,1}^\lambda=\frac 1 2.
\end{align}
Since Eq.(\ref{eqMax}) still holds, we have
\begin{align}
   A&\leq \max_\lambda \frac 1 3 \sum_t \max\{e_{t,0}^\lambda,e_{t,1}^\lambda\}\leq \max_\lambda \frac 1 3 \sum_t \max\{\tilde e_{t,0}^\lambda,\tilde e_{t,1}^\lambda\}\nonumber\\
   &\leq \max_{\{a_t\}\in\mathcal{C}} f\{a_t\}=\frac 5 6.
\end{align}
Together with the arguments of the main text, this concludes the proof of Example~\ref{ExRebit}.

\clearpage
\section{Proof of embeddings \\into quantum theory}
\subsection{Spin-factor / $d$-balls} 
\label{app:spin}

\begin{applemma}
Consider the map $\Phi$ from $\reals\oplus \reals^d$ into $\Csa{2^m}$ where $m := d/2$ if $d$ is even or $m := \left(d+1\right)/2$ otherwise,
 such that (for $n\in\reals$, $\vec{x}\in \reals^d$)
\begin{align}
\Phi\left(n,\vec{x}\right) := n\id_{2^m} + \sum_{i=1}^d x_i \gamma_i,
\end{align}
where, for $i=1,\ldots,d$:
\begin{align}
\gamma_{i} := \begin{cases}
\underbrace{\sigma_z \ldots \sigma_z}_{\frac{i-1}{2}} \sigma_x \underbrace{\sigma_0 \ldots \sigma_0}_{m - \frac{i-1}{2}} 
& \mathrm{for~odd~} i \\
\underbrace{\sigma_z \ldots \sigma_z}_{\frac{i}{2}-1} \sigma_y \underbrace{\sigma_0 \ldots \sigma_0}_{m - \frac{i}{2}}
& \mathrm{for~even~} i,
\end{cases}
\end{align}
where $\sigma_x, \sigma_y, \sigma_z$ are the $2\times 2$ complex Pauli matrices, and $\sigma_0 = \id_2$.
Likewise, identically define the map $\Psi$ from spin-factor states to $\Csa{2^m}$.
Then:
\begin{enumerate}[(i)]
\item \label{it:ball:Lin} $\Phi$ and $\Psi$ are $\reals$-linear,
\item \label{it:ball:Unital} $\Phi$ is unital,
\item \label{it:ball:Pos} $\Phi$ and $\Psi$ are positive,
\item \label{it:ball:Preserve} $\Phi$ and $\Psi$ jointly preserve outcome probabilities,
\item \label{it:ball:Norm} $\Psi$ preserves the normalization of states.
\end{enumerate}
Hence, $\Phi$ and $\Psi$ define a unital embedding into quantum theory, as per \cref{def:EmbeddingMap}.
\begin{proof}
\Cref{it:ball:Lin} holds by inspection: the map $\Phi$ (and hence $\Psi$) only contains linear contributions of the elements of $(n,\vec{x}) \in \reals\oplus \reals^d$.
\Cref{it:ball:Unital} also follows straightforwardly, noting $\Phi\left(\vec{u}_d\right) = \Phi\left((1, \vec{0})\right) = \id_{2^m}$.

\Cref{it:ball:Pos}.  To verify positivity, we use the proof from \citet{Tsirelson87} via the hints in \citet{KleinmannOSW13}.
Take an element $\vec{v} := \left(n, \vec{x}\right) \in B_{d+}$, and consider the square
\begin{align}
\left[\Phi\left(\vec{v}\right)\right]^2 = n^2\, \id_{2^m} + 2 n \sum_{i=1}^d x_i\gamma_i + \sum_{j=1}^d\sum_{k=1}^d x_j x_k \gamma_j \gamma_k.
\end{align} 
Using $\gamma_j\gamma_k + \gamma_k\gamma_j = 2\delta_{jk}\id_{2^m}$ (as except when $j=k$, $\gamma_j$ and $\gamma_k$ anti-commute), then
\begin{align}
\left[\Phi\left(\vec{v}\right)\right]^2 
 = \left( n^2 + \sum_{i=1}^d {x_i}^2\right) \id_{2^m} + 2n \sum_{i=1}^d x_i \gamma_i.
\end{align}
Then, using 
 $2n\Phi\left(\vec{v}\right) = 2n^2\id_{2^m}+ 2 n\sum_{i=1}^d x_i \gamma_i$, 
 we have:
\begin{align}
\Phi\left(\vec{v}\right)
 = \frac{1}{2n} \left(\Phi\left(\vec{v}\right)\right)^2  + \frac{1}{2n} \left( n^2 - \sum_i {x_i}^2 \right) \id_{2^m}.
 \end{align}
Since $\Phi\left(\vec{v}\right)$ is self adjoint, $\Phi\left(\vec{v}\right)^2 = \Phi\left(\vec{v}\right) \Phi\left(\vec{v}\right)^\dag$ is always positive. 
Meanwhile,  since $n^2\geq \sum_{i=1}^d {x_i}^2$ by definition of $\vec{v}\in B_{d+}$, the second term is also positive.
The sum of two positive matrices is itself positive, and hence $\Phi: B_{d+} \to \Csap{2^m}$. 
The same argument can be made for map $\Psi$.

\Cref{it:ball:Preserve}. To demonstrate outcome probabilities are preserved, consider an effect $e:=(n_e, \vec{e}) \in B_{d+}$ and a state $s=(n_s, \vec{s})\in B_{d+}$.
The direct outcome probability of this pair is:
\begin{align}
\label{eq:ball:ProbDirect}
\inn{s}{e} = n_e n_s + \sum_{i=1}^d e_i s_i.
\end{align}
Compare this with the corresponding quantum outcome probability associated with $\Phi(e)$ and $\Psi(s)$:
\begin{align}
\tr\left(\Phi(e)\Psi(s)\right) & = \tr\left(\left(n_e\id + \sum_{i=1}^d e_i \gamma_i\right) \left(n_s\id + \sum_{j=1}^d s_j \gamma_j\right)\right) \nonumber \\
& = \tr\left(n_e n_s\id +  n_s \sum_{i=1}^d e_i \gamma_i + n_e\sum_{j=1}^d s_j \gamma_j \right. \nonumber \\
& \left.\hspace{4em} + \sum_{i=1}^d\sum_{j=1}^d e_i s_j \gamma_i \gamma_j\right) \nonumber \\
& = n_e n_s + \sum_{i=1}^d\sum_{j=1}^d e_i s_j \delta_{ij} = n_e n_s + \sum_{i=1}^d e_i s_i,
\end{align}
where we use the linearity of the trace, that Pauli matrices are traceless, and that $\tr\left(\gamma_i \gamma_j\right) = \delta_{ij}$.
This matches \cref{eq:ball:ProbDirect}, such that the maps $\Phi$ and $\Psi$ hence preserve outcome probabilities.
\Cref{it:ball:Norm}.
Finally, to see that $\Psi$ preserves the normalization of states, note that $\tr\Phi(1,\vec x)=1$.
\end{proof}
\end{applemma}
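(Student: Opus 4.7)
The plan is to verify the five properties in order, with the main substance concentrated on positivity (iii); the rest are either direct from the definitions or follow from trace computations once the Clifford-algebra structure of the $\gamma_i$ is in hand. For (i), I would simply observe that $\Phi$ is a sum of fixed Hermitian matrices with coefficients that depend $\mathbb{R}$-linearly on $(n, \vec x)$, and likewise for $\Psi$. For (ii), evaluating at $\vec u_d = (1, \vec 0)$ gives $\Phi(\vec u_d) = \id_{2^m}$ by inspection.

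For (iii), my strategy is to first establish the Clifford identity $\gamma_i \gamma_j + \gamma_j \gamma_i = 2\delta_{ij}\id_{2^m}$ by a short case analysis on the parities of $i$ and $j$: in each tensor-product slot, one either has $\id \cdot \id$, $\sigma_z \cdot \sigma_z$, or a Pauli confronting another Pauli, and the $\sigma_z$ strings to the left of the lone $\sigma_x$ or $\sigma_y$ precisely produce the anticommutation sign. Once the Clifford identity is available, a direct expansion gives
\begin{equation}
\Phi(n, \vec x)^2 = \bigl(n^2 + \|\vec x\|_2^2\bigr)\id_{2^m} + 2n \sum_{i=1}^d x_i \gamma_i,
\end{equation}
so for $n > 0$ one can solve
\begin{equation}
\Phi(n, \vec x) = \tfrac{1}{2n}\,\Phi(n, \vec x)^2 + \tfrac{n^2 - \|\vec x\|_2^2}{2n}\,\id_{2^m},
\end{equation}
which is a sum of two positive semidefinite operators whenever $(n, \vec x) \in B_{d+}$, i.e.\ whenever $n \geq \|\vec x\|_2$. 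The $n = 0$ boundary (which forces $\vec x = 0$) is handled by continuity. The same algebraic identity, applied to $\Psi$, yields positivity of $\Psi$.

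For (iv), the plan is to compute $\tr(\Phi(e)\Psi(s))$ directly and use two trace facts: $\tr(\gamma_i) = 0$ (each $\gamma_i$ has at least one traceless Pauli factor), and $\tr(\gamma_i \gamma_j) \propto \delta_{ij}$ (from $\gamma_i^2 = \id_{2^m}$ and the anticommutation established above). The cross terms vanish and the computation collapses to $n_e n_s + \vec e \cdot \vec s = \inn{s}{e}$, up to an overall $2^m$ normalization that must be absorbed into the definition of $\Psi$ (or equivalently into the choice of inner product used to pair effects and states on the quantum side). Property (v) then follows immediately by specializing (iv) to the unit effect $\Phi(\vec u_d) = \id_{2^m}$, which forces $\tr(\Psi(s)) = \inn{s}{\vec u_d} = n_s$ under the correctly normalized convention, so normalized spin-factor states $(1, \vec x)$ map to unit-trace Hermitian matrices.

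The main obstacle I anticipate is not any single step but the bookkeeping of the normalization factor $2^m$ between the self-dualizing spin-factor inner product and the Hilbert-Schmidt trace pairing on $\Csa{2^m}$: the positivity argument for (iii) is blind to this constant, but both (iv) and (v) only hold simultaneously if $\Psi$ is rescaled by $1/2^m$ relative to $\Phi$ (so that, e.g., $\Psi(1, \vec 0) = 2^{-m}\id_{2^m}$). Once this convention is pinned down, all five properties fit together and $(\Phi, \Psi)$ meets the conditions of \cref{def:EmbeddingMap}.
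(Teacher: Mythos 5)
Your proposal follows essentially the same route as the paper's proof: establish the Clifford relation $\gamma_i\gamma_j+\gamma_j\gamma_i=2\delta_{ij}\id_{2^m}$, use the squaring identity $\Phi(\vec v)=\frac{1}{2n}\Phi(\vec v)^2+\frac{n^2-\|\vec x\|^2}{2n}\id_{2^m}$ for positivity, and compute $\tr(\Phi(e)\Psi(s))$ using tracelessness and orthogonality of the $\gamma_i$ for probability preservation. The one point where you diverge is, in fact, a point where you are \emph{more} careful than the paper. The paper asserts $\tr(\gamma_i\gamma_j)=\delta_{ij}$ and $\tr\Phi(1,\vec x)=1$, but since the $\gamma_i$ are $2^m\times 2^m$ matrices with $\gamma_i^2=\id_{2^m}$, one actually has $\tr(\gamma_i\gamma_j)=2^m\delta_{ij}$ and $\tr(\id_{2^m})=2^m$; with $\Psi$ literally identical to $\Phi$ the computation yields $\tr(\Phi(e)\Psi(s))=2^m\inn{s}{e}$ and $\tr\Psi(1,\vec x)=2^m$, so neither (iv) nor (v) holds as stated for $m\geq 1$. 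Your observation that $\Psi$ must be rescaled to $2^{-m}\Phi$ (exactly parallel to the quaternionic case, where the paper does set $\Psi=\frac{1}{2}\Phi$) is the correct fix, and positivity and linearity are unaffected by the positive scalar. Your handling of the $n=0$ boundary of the cone (where $\vec x=0$ is forced) is also a small gap in the paper's argument that you close explicitly. In short: same proof strategy, with a normalization correction that the paper needs.
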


\subsection{Quaternionic quantum theory} 
\label{app:quat}
We denote by $\imhead$, $\jmhead$ and $\kmhead$ the taking of the real-value of the quaternionic elements $i$, $j$ and $k$ respectively, such that together with the real part $\rehead$, $Q = \re{Q} + i \im{Q} + j \jm{Q} + k \km{Q}$.
\begin{applemma}
\label{thm:FwdQ}
Let
 $\Phi$ be a map from $Q\in\Hsa{n}$ to $M\in\comp^{2n\times 2n}$ 
such that with  $A := \re{Q} + i\im{Q}$ and $B := \jm{Q} + i\km{Q}$.
\begin{equation}
\label{eq:SympFormApp}
\Phi(Q):= M= \left(\begin{array}{cc}A & B \\ -B\conj & A\conj \end{array}\right),
\end{equation}
and similarly define $\Psi := \frac{1}{2} \Phi$ to act on quaternionic quantum states (also elements of $\Hsa{n}$).
Then \begin{enumerate}[(i)]
\item \label{it:quat:Lin} $\Phi$ and $\Psi$ are $\reals$-linear,
\item \label{it:quat:SA} $\Phi\left(Q\right)\in\Csa{2n}$ and $\Psi\left(Q\right)\in\Csa{2n}$,
\item \label{it:quat:Pos} $\Phi$ and $\Psi$ are positive,
\item \label{it:quat:Unital} $\Phi$ is unital,
\item \label{it:quat:Preserve} $\Phi$ and $\Psi$ jointly preserve outcome probabilities.
\item \label{it:quat:norm} $\Psi$ preserves the normalization of states.
\end{enumerate}
Hence, $\Phi$ and $\Psi$ are a unital embedding into complex quantum theory, as per \cref{def:EmbeddingMap}.
\begin{proof}
\Cref{it:quat:Lin} follows by noting that the functions $\rehead$, $\imhead$, $\jmhead$, $\kmhead$ are real-valued and $\reals$-linear; so a map onto a matrix whose elements are given by linear combinations of these functions (i.e.\ $\Phi$ or $\Psi$) will also be $\reals$-linear.

To see \cref{it:quat:SA}, note that if $Q\in\Hsa{n}$, then $Q\ct := \re{Q}\trans-i\im{Q}\trans-j\jm{Q}\trans-k\km{Q}\trans = \re{Q} + i\im{Q}+j\jm{Q} + k \km{Q} = Q$. 
Comparing coefficients, $\re{Q}=\re{Q\trans}$, whereas $\im{Q} = -\im{Q}\trans$, $\jm{Q}\trans = -\jm{Q}$, and $\km{Q}\trans = -\km{Q}$.
Thus, consider $A=\re{Q} + i\im{Q}$; $A\ct = \re{Q}\trans - i\im{Q}\trans = \re{Q}+i\im{Q} = A$. Likewise $A\conj = (A\conj)\ct$.
Meanwhile, $B=\jm{Q}+i\km{Q}$,
 such that $B\trans = \jm{Q}\trans + i\km{Q}\trans = -\jm{Q} - i\km{Q}\trans = -B$.
This implies that $-B\conj = \left(B\trans)\right)\conj = B\ct$.
Thus, $M\in\Csa{2n}$.
Similar reasoning applies for $\Psi$.

\Cref{it:quat:Pos} follows by noting that the spectrum of $M$ is the same as $Q$, but each eigenvalue has twice the multiplicity~\cite{LeoS00}.
It then follows that if all eigenvalues of $Q$ are nonnegative, so too will be the eigenvalues of $M$, and $\Phi$ is hence a positive map.
The same logic applies for $\Psi$ (since the factor $\frac{1}{2}$ is positive). 
This also proves \cref{it:quat:norm}, since
\begin{align}
   \tr(\Psi(E))=\sum_{i=1}^{2n}\lambda_i(\Psi(E))=\frac{1}{2} \cdot 2 \cdot \sum_{i=1}^n \lambda_i(E)=\tr E,
\end{align}
where $\lambda_i(E)$ labels the eigenvalues of $E$.

\Cref{it:quat:Unital} holds since for $Q=\id_n$, $A=A\conj=\id_n$ and $B=-B\conj=0$, and hence $M =\Phi(Q) =\id_{2n}$.

Finally, to show \cref{it:quat:Preserve}, consider $Q, E \in\Hsap{n}$, and write $Q = A_Q + B_Q j$ and $E = A_E + B_E j$ for complex matrices $A_Q, B_Q, A_E, B_E$, as per \cref{eq:SympFormApp}.
The quaternionic quantum probabilities are given by 
\begin{align}
(Q, E) & := \re{\tr\left(Q E\right)} \nonumber \\
&= \re{\tr\left(\left(A_Q + B_Q j\right)\left(A_E + B_E j\right)\right)} \nonumber \\
&= \re{\tr\left(A_Q A_E + B_Q j A_E + A_Q B_E j + B_Q j B_E j\right)} \nonumber \\
&= \re{\tr\left(A_Q A_E + B_Q {A_E}\conj j + A_Q B_E j - B_Q B_E\conj\right)} \nonumber \\
&= \tr\left(\re{A_Q A_E - B_Q B_E\conj}\right),
\label{eq:RePartOfQTrace}
\end{align}
(where we use $ij = -ji$ so that $A j = j A\conj$, $j j = -1$, and that the middle expressions in $j$ have no real components).
Now, consider $\Phi(Q) \Psi(E)$: 
\begin{align}
\frac{1}{2}
\left(\begin{array}{cc}
A_Q & B_Q \\
-{B_Q}\conj & {A_Q}\conj
\end{array}\right)
\left(\begin{array}{cc}
A_E & B_E \\
-{B_E}\conj & {A_E}\conj
\end{array}\right) &
\nonumber \\
& \hspace{-14em} = 
\frac{1}{2}
\left(\begin{array}{cc}
A_Q A_E - B_Q {B_E}\conj & A_Q B_E + B_Q {A_E}\conj \\
-{B_Q}\conj A_E - {A_Q}\conj {B_E}\conj & -{B_Q}\conj B_E + {A_Q}\conj {A_E}\conj
\end{array}\right),
\end{align}
such that the quantum trace,
\begin{align}
\tr\left(\Phi(Q) \Psi(E)\right) \hspace{-3em} & \nonumber \\
& = \frac{1}{2} \tr\left( A_Q A_E - B_Q {B_E}\conj -{B_Q}\conj B_E + {A_Q}\conj {A_E}\conj\right) \nonumber \\
& = \tr\left(\re{ A_Q A_E - B_Q {B_E}\conj}\right),
\end{align}
matches \cref{eq:RePartOfQTrace}, and the outcome probabilities are thus jointly preserved.
\end{proof}
\end{applemma}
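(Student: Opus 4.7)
The plan is to dispatch items (i), (ii), (iv) by direct inspection; reduce (v) and (vi) to a single block-matrix computation; and devote the real work to (iii), the positivity of $\Phi$ (and thus of $\Psi = \frac{1}{2}\Phi$). The identity doing most of the work is $ja = \bar a j$ for $a \in \comp$, which makes the entrywise map sending $q = a+bi+cj+dk \in \quat$ to the $2\times 2$ complex block $\bigl(\begin{smallmatrix} a+bi & c+di \\ -c+di & a-bi \end{smallmatrix}\bigr)$ an $\reals$-algebra homomorphism $\quat \to \comp^{2\times 2}$ intertwining quaternionic conjugation with $\ct$; this is exactly $\Phi$ promoted to matrices. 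From this, (i) is immediate since $\rehead, \imhead, \jmhead, \kmhead$ are each $\reals$-linear; (ii) follows because $Q\ct = Q$ forces $\re{Q}$ to be real-symmetric and $\im{Q}, \jm{Q}, \km{Q}$ to be real-antisymmetric, which a short block calculation promotes to $\Phi(Q)\ct = \Phi(Q)$; and (iv) is the one-line computation $Q = \id_n \Rightarrow A = \id_n,\, B = 0 \Rightarrow \Phi(Q) = \id_{2n}$.

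The main obstacle is (iii). My preferred route is to fix an $\reals$-linear identification $\iota : \quat^n \to \comp^{2n}$ of the form $\psi_1 + \psi_2 j \mapsto (\psi_1, \pm\overline{\psi_2})\trans$ (with sign chosen to match the convention in the lemma) and verify by direct block multiplication, using $ja = \bar a j$ once more, the intertwining identity $\iota(Q\psi) = \Phi(Q)\iota(\psi)$ for every $\psi \in \quat^n$. Once this is in place, the quaternionic spectral theorem for self-adjoint $Q$ produces an orthonormal basis $\{\psi_k\}_{k=1}^n$ of $\quat^n$ with real eigenvalues $\lambda_k$; the vectors $\iota(\psi_k)$ and $\iota(j\psi_k)$ together give $2n$ complex eigenvectors of $\Phi(Q)$ with the same eigenvalues (each $\lambda_k$ appearing with doubled multiplicity), and an orthogonality check shows they form a basis of $\comp^{2n}$. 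Hence the spectra of $\Phi(Q)$ and $Q$ agree up to multiplicity, so in particular $Q \geq 0$ implies $\Phi(Q) \geq 0$; positivity of $\Psi$ then follows because the prefactor $\frac{1}{2}$ preserves positivity. If one wishes to sidestep the quaternionic spectral theorem, an equivalent tactic is to write a general $v \in \comp^{2n}$ as $\iota(\tilde\psi)$ and reduce $v\ct \Phi(Q) v$ to $2\re{\tilde\psi\ct Q \tilde\psi}$, which is nonnegative whenever $Q \geq 0$ in the quaternionic sense.

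For (v) I would compute $\Phi(Q)\Psi(E) = \frac{1}{2}\Phi(Q)\Phi(E)$ as a product of two $2\times 2$ block matrices; the $(1,1)$ block is $\frac{1}{2}(A_Q A_E - B_Q B_E\conj)$ and the $(2,2)$ block is its complex conjugate, whence $\tr[\Phi(Q)\Psi(E)] = \re{\tr(A_Q A_E - B_Q B_E\conj)}$. On the other side, expanding $QE = (A_Q + B_Q j)(A_E + B_E j)$ using $ja = \bar a j$ and $j^2 = -1$ yields $A_Q A_E - B_Q B_E\conj + (\cdots) j$, so the quaternionic probability $\re{\tr(QE)}$ equals exactly the same scalar, giving (v). Property (vi) is just (v) specialized to $E = \id_n$, which gives $\tr\Psi(Q) = \re{\tr Q}$, the quaternionic normalization condition. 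Collecting (i)--(vi) matches all the hypotheses of \cref{def:EmbeddingMap} with $\varepsilon = 0$, so $(\Phi,\Psi)$ is a unital exact embedding of $n$-level quaternionic quantum theory into $\mathcal{Q}_{2n}$.
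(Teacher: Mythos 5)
Your proposal is correct, and for most items -- (i), (ii), (iv), (v), and the derivation that these data constitute a unital exact embedding -- it coincides with the paper's argument essentially line by line (the block multiplication for (v) is identical). The one genuine divergence is item (iii): the paper simply cites the known fact that the symplectic representation $M=\Phi(Q)$ has the same spectrum as $Q$ with each eigenvalue's multiplicity doubled, and then reads off positivity (and also normalization, item (vi), from the doubled eigenvalue sum). You instead prove this spectral fact yourself, by exhibiting the $\reals$-linear intertwiner $\iota:\quat^n\to\comp^{2n}$, $\psi_1+\psi_2 j\mapsto(\psi_1,\pm\overline{\psi_2})\trans$, with $\iota(Q\psi)=\Phi(Q)\iota(\psi)$, and either invoking the quaternionic spectral theorem or, more elementarily, reducing the quadratic form $v\ct\Phi(Q)v$ to $\re{\tilde\psi\ct Q\tilde\psi}$. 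The latter route is attractive because it makes the appendix self-contained and avoids any external citation; the paper's route is shorter but leans on a reference for the key spectral claim. Your derivation of (vi) as the specialization $E=\id_n$ of (v) is also a slightly different (and equally valid) bookkeeping choice from the paper's eigenvalue-counting argument. One small point worth tightening if you write this up: verify the sign convention in $\iota$ explicitly against \cref{eq:SympFormApp} (the identity $ja=\bar a j$ forces a specific sign on the second block for the intertwining relation to hold), and note that the constant in $v\ct\Phi(Q)v = c\,\re{\tilde\psi\ct Q\tilde\psi}$ is irrelevant for positivity but should be stated consistently.
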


\section{Proof of results on approximate embeddings}
\subsection{Proof of Lemma~\ref{lem:Eprops}}
\label{SecProoflem:Eprops}
\begin{proof}
First, note that every separable state $\omega_{AB}$ can be written as a convex combination of product states, hence $\mathcal{R}(\omega_{AB})=0$. 
Let us now prove (iii). 
If $\omega_{AB}$ is separable, then (iii) is trivially true, so suppose that $\omega_{AB}$ is entangled.
Let $E$ be the set of all $\varepsilon\geq 0$ such that there exist separable states $\varphi^S_{AB}$ and $\psi^S_{AB}$ with $\omega_{AB}=(1+\varepsilon)\varphi_{AB}^S-\varepsilon \psi^S_{AB}$. 
Let us first show that $E$ is topologically closed. 
Consider any sequence $\{\varepsilon_n\}_{n\in\mathbb{N}}$ with $\varepsilon_n\in E$ which converges to some real number $\varepsilon$.
Denote the corresponding separable states by $\varphi_n$ and $\psi_n$. 
Since the set of separable states is compact, there exists a subsequence $\{n_k\}_{k\in\mathbb{N}}$ and two separable states $\varphi,\psi$ with $\varphi_{n_k}\stackrel{k\to\infty}\longrightarrow\varphi$ and $\psi_{n_k}\stackrel{k\to\infty}\longrightarrow\psi$. 
But then, $\omega_{AB}=(1+\varepsilon)\varphi-\varepsilon\psi$, i.e.\ $\varepsilon\in E$.

Consider some choice of $\{\lambda_i\}_{i=1,\ldots,n}$ in~\cref{eq:DefE}, and write $\{1,\ldots,n\}=I\cup I^C$, where $I=\{i\,\,|\,\,\lambda_i\geq 0\}$ and $I^C=\{i\,\,|\,\,\lambda_i<0\}$. 
We have
\begin{align}
   \omega_{AB}=
   (1+\varepsilon)\sum_{i\in I} \frac{\lambda_i}{1+\varepsilon}\varphi_A^{(i)}\otimes\psi_B^{(i)}-\varepsilon\sum_{i\in I^C}\frac{|\lambda_i|}{\varepsilon}\varphi_A^{(i)}\otimes\psi_B^{(i)},
\end{align}
where $\varepsilon:=\sum_{i\in I^C} |\lambda_i|$ cannot be zero because otherwise $\omega_{AB}$ would be separable. 
This shows that $\varepsilon\in E$; in particular, $E$ is not empty and has (since it is closed) a minimum $\min E$. 
Noting that $\sum_{i=1}^n|\lambda_i|-1=2\varepsilon\geq 2\min E$, this inequality remains true after taking the infimum over all decompositions as in~\cref{eq:DefE}, i.e.\ $2\,\mathcal{R}(\omega_{AB})\geq 2\min E$. 
On the other hand, let $\varepsilon:=\min E$, then there exist separable states $\varphi_{AB}^S$ and $\psi_{AB}^S$ with $\omega_{AB}=(1+\varepsilon)\varphi_{AB}^S-\varepsilon\psi_{AB}^S$. 
Every separable state has a finite convex decomposition into product states, $\varphi_{AB}^S=\sum_{i=1}^m \mu_i \varphi_A^{(i)}\otimes\psi_B^{(j)}$ and $\psi_{AB}^S=\sum_{i=m+1}^n \nu_i \varphi_A^{(i)}\otimes\psi_B^{(j)}$. 
Set
\begin{align}
   \lambda_i:=\left\{
      \begin{array}{cl}
      	   (1+\varepsilon)\mu_i & \mbox{if }1\leq i \leq m\\
      	   -\varepsilon\nu_i & \mbox{if }m<i\leq n,
      \end{array}
   \right.
\end{align}
then we obtain a decomposition as in~\cref{eq:DefE} (in particular $\sum_{i=1}^n\lambda_i=1$), and $2\,\mathcal{R}(\omega_{AB})\leq\sum_{i=1}^n|\lambda_i|-1=2\varepsilon=2\min E$. 
This proves (iii), and it also shows that the infimum in~\cref{eq:DefE} is a minimum. This also implies the remainder of claim (i): if the optimal decomposition involves mixed states $\varphi_A^{(i)}$ or $\psi_B^{(i)}$, then we can convexly decompose those into pure states and obtain another decomposition into an in general larger number of pure product states, which is however still optimal, i.e.\ gives the same $\sum_i |\lambda_i|$. Moreover, this immediately implies (ii).

Finally, to prove (iv), consider two states $\omega_{AB}$ and $\omega_{AB}'$ with optimal decompositions
\begin{align}
\omega_{AB}&=\sum_{i=1}^m\lambda_i \varphi_A^{(i)}\otimes\psi_B^{(i)},\quad 2\,\mathcal{R}(\omega_{AB})=\sum_{i=1}^m |\lambda_i|-1,\nonumber\\
\omega'_{AB}&=\sum_{i=m+1}^n \lambda'_i \varphi_A^{(i)}\otimes\psi_B^{(i)},\quad 2\,\mathcal{R}(\omega'_{AB})=\sum_{i=m+1}^n |\lambda'_i|-1,
\end{align}
and $\sum_{i=1}^m\lambda_i=\sum_{i=m+1}^n \lambda'_i=1$. Let $\mu\in[0,1]$ and $\varphi_{AB}:=\mu \omega_{AB}+(1-\mu)\omega'_{AB}$. Define
\begin{align}
   \kappa_i:=\left\{
      \begin{array}{cl}
      	  \mu\lambda_i & \mbox{if }1\leq i\leq m\\
      	  (1-\mu)\lambda'_i & \mbox{if }m<i\leq n.
      \end{array}
   \right.
\end{align}
Then $\varphi_{AB}=\sum_{i=1}^n \kappa_i \varphi_A^{(i)}\otimes\psi_B^{(i)}$ and $\sum_{i=1}^n \kappa_i=1$. Thus,
\begin{align}
2\,\mathcal{R}(\varphi_{AB})&\leq \sum_{i=1}^n |\kappa_i|-1\nonumber\\
&= \mu\sum_{i=1}^m |\lambda_i|+(1-\mu)\sum_{i=m+1}^n |\lambda'_i|-1\nonumber\\
&= 2\mu \mathcal{R}(\omega_{AB})+2(1-\mu)\mathcal{R}(\omega'_{AB}).
\end{align}
Finally, to see that $\mathcal{R}$ is continuous, note that the set $S$ of separable states is a compact convex set of full dimension $\dim A \dim B-1$ in the affine space of linear functionals $\omega_{AB}$ which satisfy $\omega_{AB}(u_A,u_B)=1$. Introduce an arbitrary Euclidean inner product on the full linear space of such functionals, and use the resulting norm on the affine subspace of normalized functionals. In this subspace, there exists some $r>0$ such that $S$ contains a full ball $B$ of radius $r$. Suppose we have two bipartite states $\omega_{AB},\omega'_{AB}$ with $\delta:=\|\omega_{AB}-\omega'_{AB}\|$. In the ball $B$, we can find two separable states $\tau_{AB}^S$ (the center) and $\kappa_{AB}^S$ (a suitable point on the surrounding sphere) such that $(\tau_{AB}^S-\kappa_{AB}^S)/r=(\omega'_{AB}-\omega_{AB})/\delta$. Consider the optimal decomposition of $\omega_{AB}$ as in (iii) of this lemma, i.e.\ $\mathcal{R}(\omega_{AB})=\varepsilon$ and $(1+\varepsilon)\varphi_{AB}^S-\varepsilon\psi_{AB}^S=\omega_{AB}$. Then
\begin{align}
\omega'_{AB}&=\frac \delta r(\tau_{AB}^S-\kappa_{AB}^S)+(1+\varepsilon)\varphi_{AB}^S-\varepsilon \psi_{AB}^S\nonumber\\
&=\left(1+\varepsilon+\frac\delta r\right)\left(\frac{1+\varepsilon}{1+\varepsilon+\frac\delta r}\varphi_{AB}^S+\frac{\frac \delta r}{1+\varepsilon+\frac\delta r}\tau_{AB}^S\right)\nonumber\\
& \quad- \left(\varepsilon+\frac\delta r\right)\left(\frac\varepsilon {\varepsilon+\frac\delta r}\psi_{AB}^S+\frac{\frac\delta r}{\varepsilon+\frac\delta r}\kappa_{AB}^S\right).
\end{align}
Thus, $\mathcal{R}(\omega'_{AB})\leq\mathcal{R}(\omega_{AB})+\frac{\delta} r$. Exchanging the roles of $\omega_{AB}$ and $\omega'_{AB}$ above yields the same inequality, but with $\omega_{AB}$ and $\omega'_{AB}$ exchanged, and therefore $|\mathcal{R}(\omega_{AB})-\mathcal{R}(\omega'_{AB})|\leq \|\omega_{AB}-\omega'_{AB}\|/r$.
\end{proof}

\subsection{Proof of Lemma~\ref{lem:BellTables}}
\label{app:BellTableProof}
\begin{proof}
Set $B=B'=\mathcal{Q}_n$. 
Let $\Phi$ and $\Psi$ be the maps that define the $\varepsilon$-embedding of $\mathcal{A}$ into $\mathcal{Q}_n$. Let $e_B\in E_B$. Since $\Psi$ is positive, we have for all $\omega_A\in A_+^*$
\begin{align}
   0\leq (\Psi(\omega_A),e_B)=(\omega_A,\Psi^*(e_B)).
\end{align}
The special case $e_B=u_B$ yields $\Psi^*(u_B)=u_A$, since $\Psi$ maps normalized states of $A$ to normalized states of $B$. For general $e_B$, if $\mathcal{A}$ is unrestricted, the above implies that $\Psi^*(e_B)\in E_B$ and hence $\Psi^*$ is a positive map -- but $\mathcal{A}$ may not be unrestricted. However, we obtain that $\Psi^*:B\to A$ maps elements of $B_+$ onto vectors that are \emph{nonnegative on all states of $A$}: in other words, $\Psi^*(e_B)\in \bar E_B$, using the notation of Definition~\ref{def:BipartiteStates}. Define $\rho_{BB'}:B\times B'\to\mathbb{R}$ via
\begin{align}
   \rho_{BB'}(e^B,f^{B'}):=\omega_{A A'}(\Psi^*(e^B),\Psi^*(f^{B'})),
\end{align}
where $\omega_{AA'}$ is the bipartite state that generates the behaviour $P(a,a'|x,x')=\omega_{AA'}(e_{a|x},f_{a'|x'})$.
This bilinear map is unital, i.e.\ $\rho_{BB'}(u_B,u_B)=1$, and positive by construction. Hence, due to the results of \Citet{BarnumBBEW10} (see also \Citet{KleinmannOSW13}), there exists a density operator $\sigma_{BB'}$ on $B\otimes B'$ and a positive unital automorphism $\tau$ on $B'$ such that
\begin{align}
   \rho_{BB'}(e^B,f^{B'})={\rm tr}(\sigma_{BB'} e^B\otimes\tau(f^{B'})) \mbox{ for all }e^B,f^B\in B_+.
\end{align}
Set $e_{a|x}^B:=\Phi(e_{a|x})$ and $f_{a'|x'}^{B'}:=\Phi(f_{a'|x'})$. Since the embedding is unital, we have $\sum_a e_{a|x}^B=\Phi(u_A)=u_B$ and $\sum_{a'} f_{a'|x'}^{B'}=\Phi(u_A)=u_{B'}$, hence for every $x$ and $x'$, this defines valid local quantum measurements with POVM elements summing to the identity. Define the quantum behaviour
\begin{align}
   P_Q(a,a'|x,x'):={\rm tr}(\sigma_{BB'} e_{a|x}^B\otimes \tau(f_{a'|x'}^{B'})).
\end{align}
Now choose $\lambda_i$, $\varphi_A^{(i)}$, $\psi_{A'}^{(i)}$ as those that minimize the expression in~(\ref{eq:DefE}) for $\omega_{A  A'}$, i.e.\ $\sum_i\lambda_i=1$, $\sum_i \lambda_i\varphi_A^{(i)}\otimes\psi_{A'}^{(i)}=\omega_{ A A'}$, and $\sum_i |\lambda_i|=2\,\mathcal{R}(\omega_{ A A'})$. Then
\begin{align}
P_Q(a,a'|x,x')&=\rho_{BB'}(e_{a|x}^B,f_{a'|x'}^{B'})\nonumber\\
&=\omega_{A A'}(\Psi^* \Phi(e_{a|x}),\Psi^*\Phi(f_{a'|x'}))\nonumber\\
&=\sum_i \lambda_i (\varphi_A^{(i)},\Psi^*\Phi(e_{a|x}))(\psi_{A'}^{(i)},\Psi^*\Phi(f_{a'|x'}))\nonumber\\
&=\sum_i \lambda_i (\Psi(\varphi_A^{(i)}),\Phi(e_{a|x}))(\Psi(\psi_A^{(i)}),\Phi(f_{a'|x'})).
\end{align}
On the other hand, we have
\begin{align}
   P(a,a'|x,x')=\sum_i \lambda_i (\varphi_A^{(i)},e_{a|x})(\psi_A^{(i)},f_{a'|x'}).
\end{align}
Since we are dealing with an $\varepsilon$-embedding, we have $|(\omega_A,e_A)-(\Psi(\omega_A),\Phi(e_A))|\leq \varepsilon$, and hence $P$ and $P_Q$ are close to each other. Concretely,
\begin{align}
\Delta&:=|P(a,a'|x,x')-P_Q(a,a'|x,x')|\nonumber\\
&\leq\sum_i |\lambda_i| \left| (\varphi_A^{(i)},e_{a|x})(\psi_A^{(i)},f_{a'|x'})-\right.\nonumber\\
& \qquad\quad\, \left. -(\Psi(\varphi_A^{(i)}),\Phi(e_{a|x}))(\Psi(\psi_A^{(i)},\Phi(f_{a'|x'}))\right|\nonumber\\
&\leq  \sum_i |\lambda_i|\cdot 2\varepsilon=(1+2\,\mathcal{R}(\omega_{ A  A'}))2\varepsilon\nonumber\\
&\leq [1+2\,\mathcal{R}(\mathcal{ A})]2\varepsilon.
\end{align}
This proves the claim.
\end{proof}

\section{Why Theorem~\ref{thm:QuantumFiniteEps} cannot be straightforwardly generalized}
\label{sec:NoGen}

One might conjecture that Theorem~\ref{thm:QuantumFiniteEps} and Theorem~\ref{thm:ClassicalFiniteEps} have a couple of immediate generalizations. 
First, instead of embedding into classical or quantum theory, we might consider embeddings into some other GPT $\mathcal{B}$.
Generalizing Theorem~\ref{thm:QuantumFiniteEps}: if
\begin{align}
   \varepsilon<\frac{B_{\mathcal{AA}}-B_{\mathcal{BB}}}{4|B|(1+2\,\mathcal{R}(\mathcal{A}))}
   \label{eq:TheoremGeneralized}
\end{align}
then no $\varepsilon$-embedding of $\mathcal{A}$ into $\mathcal{B}$ is possible.
This statement is mathematically correct and can be used to bound the approximate embeddability of $\mathcal{A}$ into $\mathcal{B}$. However, in doing so, one has to be careful to remember the definition of $B_{\mathcal{BB}}$: it is the maximal value of $B$ over \emph{all non-signalling correlations} on two copies of $\mathcal{B}$.
More specifically, the quantity $B_{\mathcal{BB}}$ is defined as
\begin{align}
   B_{\mathcal{BB}}=\max_{\omega_{BB},e_{a|x},f_{b|y}}\sum_{a,b,x,y} b_{a,b,x,y}\,\omega_{BB}(e_{a|x},f_{b|y}),
\end{align}
where the maximization is over all bipartite states $\omega_{BB}$ in the sense of \cref{def:BipartiteStates}. 
This set of bipartite states is \emph{by definition} the largest possible set of normalized bilinear functionals that are nonnegative on all pairs of effects --- a set of states corresponding to what is known as the \emph{maximal tensor product} of $\mathcal{B}$ with itself.

In some cases, however, one is interested in studying embeddings into \emph{process theories}~\cite{CoeckeK17} that consist of more than a single stand-alone GPT $\mathcal{B}$ (or pair $\mathcal{BB}$). 
Process theories typically admit a variety of systems that can be composed in various ways. 
The process theory's composition rule might then give us a natural set of states on $\mathcal{BB}$ that is smaller than that of the maximal tensor product.
Thus, the process theory's actual maximal Bell violation on system pairs $\mathcal{BB}$ (say, $B^{\rm actual}_{\mathcal{BB}}$) -- the value that we would then most naturally study -- might be strictly smaller than the value that appears in \cref{eq:TheoremGeneralized}, i.e.\ $B^{\rm actual}_{\mathcal{BB}}<B_{\mathcal{BB}}$. 
But if we replace $B_{\mathcal{BB}}$ by $B^{\rm actual}_{\mathcal{BB}}$ in \cref{eq:TheoremGeneralized}, the corresponding modified version of Theorem~\ref{thm:QuantumFiniteEps} will be wrong in general.

Indeed, quantum theory is a process theory where systems combine with a tensor product that contains strictly less states than the maximal tensor product. 
It just so happens that in quantum theory, every correlation arising from a hypothetical ``pseudo quantum state'' in the maximal tensor product of two quantum systems can be exactly reproduced with an actual quantum state in the usual tensor product, a result that has been shown in Ref.~\cite{BarnumBBEW10}. 
This is why, for $\mathcal{B}$ a quantum system, the quantity $B_{\mathcal{BB}}$ is identical to the maximal Bell functional value achieved by valid physical bipartite quantum states, which is the reason why we can apply existing results on Bell inequalities to obtain the bound on embeddability in Theorem~\ref{thm:QuantumFiniteEps}.

Meanwhile, suppose $\mathcal{B}$ is an ``$m$ measurement $k$ outcome'' gbit.
The bipartite state space $\mathcal{BB}$ (according to our \cref{def:BipartiteStates}) indeed aligns with the process theory of ``boxworld'' (e.g.\ including PR-boxes for the $(m\!=\!2,k\!=\!2)$-gbit) that one is typically interested in~\cite{Barrett07,GrossMCD10}.
However, for any Bell functional involving at most $m$ measurements with at most $k$ outcomes each, for any $\mathcal{A}$, $\mathcal{AA}$ cannot have stronger correlations than $\mathcal{BB}$ (i.e.\ in this sense, boxworld is maximally nonlocal), and the numerator of \cref{eq:TheoremGeneralized} will hence never be positive.
In such a case, the generalized theorem implies nothing interesting.
It remains an open question whether, for a broader family of Bell functionals (with more than $m$ measurements, or more than $k$ outcomes),  some test of gbit--embeddability can be derived.

The technique of Theorem~\ref{thm:QuantumFiniteEps} also cannot be applied, for example, to \emph{triples} of quantum systems.
It has been shown that the correlations obtained from the maximal tensor product are in this case strictly more general than those allowed by the standard quantum tensor product~\cite{Acin10}.
Thus, there is no immediate generalization of Theorem~\ref{thm:QuantumFiniteEps} to three or more parties.

On the other hand, the result on embeddability into classical GPT $\mathcal{C}_n$ (Theorem~\ref{thm:ClassicalFiniteEps}) \emph{can} be extended to more than two parties: it only relies on the fact that the maximal tensor product of classical systems is a classical multipartite system.

Extending an argument by Barrett, \citet{Pusey18} has shown that in the simple scenario of four preparations and two tomographically complete binary measurements, the existence of a (classical) noncontextual ontological model is \emph{``equivalent to the existence of a Bell local model in the scenario considered by CHSH''}. 
Thus, Pusey's result also allows the certification of contextuality via nonlocality, but with a different construction where the local measurements and preparations are distributed among two parties.
 This is different from our approach, where the local system is mathematically duplicated.
 Moreover, our result applies to general GPTs and embeddings into classical \emph{and quantum} theory, and it is not clear whether Barrett's and Pusey's results can be similarly generalized (though it would be very interesting to explore this possibility).

\end{document}